\newtheorem{theorem}{Theorem}
\newtheorem{remark}{Remark}
\newtheorem{lemma}{Lemma}
\newtheorem{corollary}{Corollary}
\newtheorem{proposition}{Proposition}
\newtheorem{hypothesis}{Hypothesis}
\newcommand{\tmop}[1]{\ensuremath{\operatorname{#1}}}
\def\pamod{\! \! \! \! \pmod}
\numberwithin{equation}{section}
\numberwithin{lemma}{section}
\numberwithin{proposition}{section}
\numberwithin{corollary}{section}
\numberwithin{theorem}{section}
\title[Superscars for arithmetic point scatterers II]{Superscars for arithmetic point scatterers II}
\author{P\"ar Kurlberg, Stephen Lester and Lior Rosenzweig}
\address{Department of Mathematics, KTH Royal Institute of Technology, SE-100 44 Stockholm, Sweden}
\email{kurlberg@kth.se}
\address{School of Mathematical Sciences, Queen Mary University, E1 4NS London, UK}
\email{s.lester@qmul.ac.uk}
\address{Mathematics Unit, Afeka Tel Aviv Academic College of Engineering, Mivtza Kadesh St., Tel-Aviv, Israel}
\email{LiorR@afeka.ac.il}
\date{\today}
\newcommand{\fixmehide}[1]{}
\newcommand{\fixmelater}[1]{}
\newcommand{\fixmedone}[1]{}
\newcommand{\fixmehidden}[1]{}
\newcommand{\Z}{{\mathbb Z}} 
\newcommand{\R}{{\mathbb R}} 
\newcommand{\C}{{\mathbb C}} 
\newcommand{\T}{{\mathbb T}} 
\begin{document}

\begin{abstract} 
  We consider momentum push-forwards of measures arising as
  quantum limits (semi-classical measures) of eigenfunctions of a
  point scatterer on the standard flat torus $\T^2 = \R^2/\Z^{2}$.
  Given any probability measure arising by placing delta masses, with
  equal weights, on $\Z^2$-lattice points on circles and projecting to
  the unit circle, we show that the mass of certain subsequences of
  eigenfunctions, in momentum space, completely localizes on that
  measure and are completely delocalized in position (i.e.,
  concentration on Lagrangian states.)  We also show that the mass, in
  momentum, can fully localize on more exotic measures, e.g. singular
  continous ones with support on Cantor sets.  Further, we can give
  examples of quantum limits that are certain convex combinations of
  such measures, in particular showing that the set of quantum limits
  is richer than the ones arising only from weak limits of lattice
  points on circles.
The proofs exploit features of the
  half-dimensional sieve and behavior of multiplicative functions in
  short intervals, enabling precise control of the location
  of perturbed eigenvalues.
\end{abstract}

\maketitle

\section{Introduction}
\fixmehide{Minor style question: in the beginning I found it hard to find
  prop 2.1 (etc); we could have one sequence number for props, lemmas,
  thms etc. Any strong opinions here? (OTOH, it's sufficiently long
  that having section numbers show up is also good.)  I'm
  ambivalent... \textcolor{blue}{I don't have a preference, please feel free to change it as you wish} }



Let $(M,g)$ be a smooth, compact Riemannian manifold with no boundary,
unit mass and let $\Delta_g$ denote the Laplace-Beltrami operator. Also,
let $\{ \phi_{\lambda} \}$ be an orthonormal basis of eigenfunctions
of $\Delta_g$ with eigenvalues
$0 \le \lambda_1 \le \lambda_2 \le \ldots$. For an observable
$f \in C^{\infty}(\mathbb S^* M)$, where $\mathbb S^* M$ denotes the
unit co-tangent bundle of $M$, let $\tmop{Op}(f)$ denote its
quantization, defined as a pseudo-differential operator
(cf. \cite{dimassi-sjostrand-book} for details.)  A central problem in
quantum chaos
(cf. \cite[Problem~3.1]{zelditch-recent-developments-qc}) is to
understand the set of possible quantum limits (sometimes called
semiclassical measures) describing the distribution of mass of the
eigenfunctions $\{ \phi_{\lambda} \}$ within $\mathbb S^* M$, in the
limit as the eigenvalue $\lambda$ tends to infinity.  A cornerstone result in
this direction is the quantum ergodicity theorem of Shnirelman
\cite{Shn}, Colin de Verdi\'ere \cite{YCD}, and Zelditch \cite{Z1}
which states that if the geodesic flow on $M$ is ergodic there exists
a density one subsequence of eigenfunctions $\{ \phi_{\lambda_j}\}$
such that
\[
\mu_{\phi_{\lambda_{j}}}(f) =\langle
\tmop{Op}(f) \phi_{\lambda_j}, \phi_{\lambda_j} \rangle \rightarrow \int_{\mathbb S^* M} f(x) d\mu_{L}(x)
\]
as $\lambda_j \rightarrow \infty$, where $d\mu_{L}$ is the normalized
Liouville measure on $\mathbb S^* M$.  
(Note that any quantum limit, by Egorov's theorem, is invariant under
the classical dynamics.)

While the quantum ergodicity theorem implies that the mass of almost
all eigenfunctions equidistributes in $\mathbb S^{*} M$ with respect
to $d\mu_{L}$, it does not rule out the existence of sparse
subsequences along which the mass of the eigenfunctions localizes.
Whether or not this happens crucially depends on the geometry of
$M$, cf. Section \ref{sec:discussion}. 
\fixmelater{Maybe add that on irrational tori, say with eigenfunctions
  having even-even symmetry, we can get massive localization for ALL
  eigenfunctions.  BUT: point out that this setting has no QE.
  (Perhaps in
  discussion section?) \textcolor{blue}{It would be interesting to add, although we already have a very nice discussion so it might be too much}}

In this article we study quantum limits of ``point scatterers'' on
$M=\mathbb T^2=\mathbb R^2/2\pi \mathbb Z^2$.  These are singular
perturbations of the Laplacian on $M$, and were used by \v Seba
\cite{Seba} in order to study the transition between integrability and
chaos in quantum systems.\fixmehide{Maybe mention Kronig-Penney??}
The perturbation is quite weak and has essentially no effect on the
classical dynamics, yet the quantum dynamics ``feels'' the
effect of the scatterer, and an analog  of
the quantum ergodicity theorem  is known to hold
\cite{RU,KU} (namely, equidistribution holds for a full density
subset of the ``new'' eigenfunctions.)

The model also exhibits scarring along sparse subsequences of the new
eigenfunctions \cite{KR}. In particular there exist quantum limits
whose momentum push-forward, which can be viewed as probability
measures on the unit circle, is of the form
$c \mu_{\text{sing}} + (1-c) \mu_{\text{uniform}}$, for some
$c \in [1/2,1]$.  Here both $\mu_{\text{uniform}}$ and
$\mu_{\text{sing}}$ are normalized to have mass one, and
$\mu_{\text{sing}}$ can be taken to be a sum of delta measures giving
equal mass to the four points $\pm (1,0), \pm (0,1)$.
We note that $\mu_{\text{uniform}}$ is the push-forward of the
Liouville measure and hence maximally delocalized, whereas
$\mu_{\text{sing}}$ is maximally 
localized since any quantum limits in this setting must be
invariant
under a certain eight fold symmetry (cf. \eqref{eq:sym8}).

Stronger localization, i.e., going strictly beyond $c=1/2$, is
particularly interesting given a number of ``half delocalization''
results for quantum limits for some other (strongly chaotic) systems,
namely quantized cat maps and geodesic flows on manifolds with
constant negative curvature.  For example, in the former case Faure
and Nonnenmacher showed
\cite{faure-nonnenmacher-maximal-cat-map-scarring} that if a quantum
limit $\nu$ is decomposed as
$\nu = \nu_{\text{pp}} + \nu_{\text{Liouville}} + \nu_{sc}$, with
$\nu_{\text{pp}}$ denoting the pure point part and $\nu_{sc}$ denoting
the singular continous part, then
$\nu_{\text{Liouville}}(\T^2) \ge \nu_{\text{pp}}(\T^2)$, and thus
$\nu_{\text{pp}}(\T^2) \le 1/2$. (We emphasize that $\T^{2}$ is the
full phase space in this setting.)

The aim of this paper is to exhibit essentially maximal
localization for a quantum ergodic 
system, namely arithmetic toral point scatterers.  In particular we
construct quantum limits (in momentum) corresponding to $c=1$ in the
above decomposition; other interesting examples include singular
continous measures with support, say, on Cantor sets.
This can be viewed as a step towards a ``measure classification'' for
quantum limits of quantum ergodic systems.\fixmelater{Maybe Reformulate/keep?
  Say that momentum push-forward avoids comparing different geometries
  of different systems? \textcolor{blue}{I like the way it's currently stated}}

\subsection{Description of the model}
\label{sec:description-model}
%
%

Let us now describe the basic properties of the point scatterer. This
is discussed in further detail in \cite{RU,RUspace,KU,KR,Seba,Shig}. 
To describe the quantum system associated with the point scatterer, consider
$
-\Delta|_{D_{x_0}}
$
where
\[
D_{x_0}=\{ f \in L^2(\mathbb T^2) : f(x) =0 \text{ in some neighborhood of } x_0 \}.
\]
By von Neumann's theory of self-adjoint extensions (see Appendix A of
\cite{RU}) there exists a one parameter family of self-adjoint
extension of $-\Delta|_{D_{x_0}}$ parameterized by a phase $\varphi
\in (- \pi, \pi]$. Moreover, for $\varphi  \neq \pi$ the eigenvalues
of these operators may be divided into two categories. The
\textit{old} eigenvalues which are eigenvalues of $-\Delta$, with
multiplicity decreased by one, along with \textit{new} eigenvalues
which are solutions to the spectral equation 
\begin{equation} \label{eq:spectral2}
\sum_{m \ge 1} r(m) \left( \frac{1}{m-\lambda}-\frac{m}{m^2+1} \right)=\tan(\varphi/2)\sum_{m\ge 1} \frac{r(m)}{m^2+1},
\end{equation}
where
\[
r(m)=\# \{ (a,b) \in \mathbb Z^2 : a^2+b^2=m\}.
\]
\fixmehide{Maybe add ref to Bogomolny et al? maybe not: they only
  consider the irrational aspect ratio case; in particular the assume
  poisson statistics for the unperturbed spectrum.}
We will refer to the case when $\varphi$ is fixed as $\lambda
\rightarrow \infty$ the \textit{weak coupling} quantization. 
In this regime work of Shigehara \cite{Shig} suggests that
the level spacing of the eigenvalues should have Poisson spacing
statistics and this is supported by work of Rudnick and Uebersch\"ar
\cite{RUspace} along with Freiberg, Kurlberg and Rosenzweig
\cite{FKR}. In hope of exhibiting wave chaos Shigehara proposes the
following \textit{strong coupling} quantization 
\begin{equation} \label{eq:spectral}
\sum_{|m-\lambda|\le \lambda^{1/2}} r(m) \left( \frac{1}{m-\lambda}-\frac{m}{m^2+1} \right)=\frac{1}{\alpha},
\end{equation}
where $\alpha \in \mathbb R$ is called the physical coupling constant
and reflects the strength of the scatterer. The strong coupling
quantization restricts the spectral equation to the physically
relevant energy levels. Notably, this forces a re-normalization of
\eqref{eq:spectral2} 
\[
\tan(\varphi/2)\sum_{m \ge 1} \frac{r(m)}{m^2+1} \sim -\pi \log \lambda
\]
so that $\varphi$ depends on $\lambda$ in this case (see \cite{Ue}
equation (3.14)).  We note that the weak coupling quantization corresponds to
a fixed self adjoint extension, whereas the strong coupling quantization can
be viewed as an energy dependent, albeit very slowly varying, family
of  self
adjoint extensions.


 From the spectral equation it follows that new eigenvalues interlace
with integers which are representable as the sum of two integer squares.
We denote these eigenvalues as follows
\[
0< \lambda_0 < 1 < \lambda_1 < 2 < \lambda_2 <4 < \lambda_4 < 5< \lambda_5 < \cdots
\]
and write $\Lambda_{new}$ for the set of all such eigenvalues. Also, given $n=a^2+b^2$ let 
$n^+$ denote the smallest integer greater than $n$ which is also a sum of two squares. Let
\begin{equation} \label{eq:deltadef}
\delta_n=\lambda_n-n>0,
\end{equation}
(which should not be confused with the Dirac delta function).
In addition given $\lambda \in \Lambda_{new}$ the associated Green's
function is given by 
\begin{equation} \label{eq:glambda}
G_{\lambda}(x)=-\frac{1}{4 \pi^2} \sum_{ \xi \in \mathbb Z^2} \frac{ \exp(-i \xi \cdot x_0)}{|\xi|^2- \lambda} e^{i \xi \cdot x}, \qquad g_{\lambda}(x)=\frac{1}{\lVert G_{\lambda} \rVert_2} G_{\lambda}(x),
\end{equation}
(see equation (5.2) of \cite{RU}).  Since the torus is homogeneous we
may without loss of generality assume that $x_{0}=0$.

\subsection{Results}
\label{sec:results}
Our first main result shows that along a sparse, yet relatively large,
subsequence of new eigenvalues $\{\lambda_j\}$ that the mass of
$g_{\lambda_j}$ in momentum space localizes on measures arising from $\Z^2$-lattice points on circles, projected to the unit
circle. To describe these measures in more
detail, consider an integer $n=a^2+b^2$, with $a,b\in \Z$, and the
following probability 
measure on the unit circle $S^1 \subset \C$
\[
\mu_n =\frac{1}{r(n)} \sum_{a^2+b^2=n} \delta_{(a+ib)/|a+ib|}.
\]
Following Kurlberg and Wigman \cite{KW} we call a measure
$\mu_{\infty}$ \textit{attainable} if it is a weak limit point of the
set $\{ \mu_n \}_{n=a^2+b^2}$.  Any such measure is 
invariant under rotation by $\pi/2$, as well as under reflection in
the $x$-axis; for convenience let
\begin{equation}
  \label{eq:sym8}
\text{Sym}_{8} := 
\left\{
\left\langle
\begin{pmatrix} 0 & -1 \\ 1 & 0  \end{pmatrix},
\begin{pmatrix} -1 & 0 \\ 0 & 1  \end{pmatrix}
\right\rangle \right\} 
\subset  GL_{2}(\Z)
\end{equation}
denote the group
generated by these transformations.

\begin{theorem} \label{thm:scar} 
\fixmehide{MAybe add: without perturbation,can break 8fold symmetry (cf. Dima)}
Let $m_0=a^2+b^2 \in \mathbb N$ be odd\footnote{As far as possible
  quantum limits go, $m_{0}$ being odd is not a restriction as any
  $\mu_{n}$ for $n$ even can be approximated by $\mu_{m_0}$ for
  $m_{0}$ odd.}.  In each of the weak and strong
coupling quantizations there exists a subset of eigenvalues $\mathcal
E_{m_0} \subset \Lambda_{new}$ 
with
\[
\frac{
\# \{ \lambda \le X : \lambda \in \mathcal E_{m_0} \}}{\# \{ \lambda \le X : \lambda \in \Lambda_{\text{new}} \} } \gg  \frac{1 }{(\log X)^{1+o(1)}}
\]
such that for any pure momentum observable $f \in C^{\infty} (S^1)
\subset C^{\infty}(\mathbb 
S^{*}(\mathbb T^2))$
\[
\langle \tmop{Op}(f) g_{\lambda}, g_{\lambda} \rangle \xrightarrow{ \substack{\lambda \rightarrow \infty \\ \lambda \in \mathcal E_{m_0}}} \frac{1}{r(m_0)} \sum_{a^2+b^2=m_0} f\left(\frac{a+ib}{|a+ib|} \right).
\]
\end{theorem}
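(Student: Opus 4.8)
The plan is to compute $\langle \tmop{Op}(f) g_\lambda, g_\lambda\rangle$ explicitly in terms of the Fourier expansion of $G_\lambda$ and then show that, along the subsequence $\mathcal E_{m_0}$ to be constructed, the mass concentrates entirely on the lattice points of norm $m_0$. Recall from \eqref{eq:glambda} that $\widehat{g_\lambda}(\xi) = c_\lambda/(|\xi|^2-\lambda)$ with $c_\lambda = 1/\lVert G_\lambda\rVert_2$, so that for a pure momentum observable $f \in C^\infty(S^1)$ one has, up to a negligible error from the symbol calculus,
\[
\langle \tmop{Op}(f) g_\lambda, g_\lambda \rangle = \sum_{\xi \in \Z^2} |\widehat{g_\lambda}(\xi)|^2 f\!\left(\tfrac{\xi}{|\xi|}\right) + o(1) = \frac{\sum_{\xi} f(\xi/|\xi|)/(|\xi|^2-\lambda)^2}{\sum_\xi 1/(|\xi|^2-\lambda)^2} + o(1).
\]
Grouping $\xi$ by $m = |\xi|^2$, the right-hand side becomes $\sum_m r(m)\,\overline{f}_m/(m-\lambda)^2 \big/ \sum_m r(m)/(m-\lambda)^2$, where $\overline{f}_m := \frac{1}{r(m)}\sum_{a^2+b^2=m} f((a+ib)/|a+ib|)$ is the average of $f$ against $\mu_m$. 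Thus the theorem reduces to the statement that, for $\lambda \in \mathcal E_{m_0}$, the weights $w_m(\lambda) := r(m)/(m-\lambda)^2 \big/ \sum_{m'} r(m')/(m'-\lambda)^2$ satisfy $w_{m_0}(\lambda) \to 1$ — equivalently, the sum over $m \neq m_0$ is negligible compared to the $m = m_0$ term.

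Next I would design $\mathcal E_{m_0}$: take $\lambda = \lambda_n$ to be a new eigenvalue interlacing just above some large $n$ that is \emph{itself} of the form $n = k^2 m_0$ with $k$ chosen so that $n$ has \emph{no} other sum-of-two-squares neighbors nearby — more precisely, so that $n$ is isolated among sums of two squares in a window of length much larger than $\sqrt{\lambda}$, while simultaneously the perturbed eigenvalue $\lambda_n = n + \delta_n$ is forced to lie extremely close to $n$, with $\delta_n$ smaller than any fixed power-saving. Since $r(k^2 m_0) = r(m_0)$ when $k$ is chosen coprime to $m_0$ and built from primes $\equiv 3 \pmod 4$ (so $\overline{f}_{k^2 m_0} = \overline{f}_{m_0}$, using that such scaling preserves the angular measure $\mu_{m_0}$), the dominant term $r(m_0)/\delta_n^2$ can be made to overwhelm the entire tail $\sum_{m \neq n} r(m)/(m-\lambda_n)^2$: the spectral equation \eqref{eq:spectral2} (resp.\ \eqref{eq:spectral}) controls $\delta_n$, and the key point is that when the nearest sum-of-two-squares integers $m$ to $n$ are at distance $\gg \sqrt{\lambda}/(\log\lambda)^A$ the tail is $O(\lambda^{1+\epsilon}/(\text{gap})^2)$ whereas $1/\delta_n^2$ can be made superpolynomially large by solving the spectral equation near a point where the left-hand side has a zero flanked only by the single pole at $m = n$. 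Establishing the density lower bound $\gg 1/(\log X)^{1+o(1)}$ is where the half-dimensional sieve and multiplicative-function-in-short-intervals input enters: one must count $n = k^2 m_0 \le X$ with $k$ supported on primes $\equiv 3 \pmod 4$ (these have density $\asymp (\log X)^{-1/2}$ among integers by Landau) and with the additional isolation property that no other sum of two squares lies within the relevant window — a sieve estimate showing sums of two squares do not cluster too often.

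The main obstacle will be the simultaneous control of $\delta_n$ from \emph{below} away from zero by a quantity that still beats the tail, while keeping $n$ in a sufficiently populous set. Making $\delta_n$ merely small is easy from interlacing; making $1/\delta_n^2$ genuinely dominate requires a careful analysis of the spectral function $F_\lambda(\cdot) = \sum r(m)(\frac{1}{m-\lambda} - \frac{m}{m^2+1})$ near $\lambda = n$: one needs $F$ to cross the target value $\tan(\varphi/2)\sum r(m)/(m^2+1)$ (resp.\ $1/\alpha$) at a $\lambda$ with $|\lambda - n|$ extremely small, which in turn forces quantitative information on the residue $r(n)$ versus the regular part of $F$ at $n$, i.e.\ on $\sum_{m\neq n} r(m)/(m-n)$ restricted appropriately — here the isolation of $n$ among sums of two squares and the size of $r(n) = r(m_0)$ must be balanced against the potentially large values of the truncated sum, and it is precisely the short-interval behavior of the multiplicative function $r(m)$ (and of $\mathbf{1}_{m \text{ is a sum of two squares}}$) that lets one choose $k$, hence $n$, so that this balance holds for a density $\gg (\log X)^{-1-o(1)}$ set of scales. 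The strong-coupling case requires additionally checking that the truncation $|m - \lambda| \le \sqrt\lambda$ in \eqref{eq:spectral} does not destroy the argument, which it does not since the dominant pole at $m = n$ and the nearby-gap structure both live well inside that window.
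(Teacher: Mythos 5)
Your reduction of $\langle \tmop{Op}(f) g_\lambda, g_\lambda\rangle$ to the normalized weights $w_m(\lambda)$ via \eqref{eq:momentum} is correct, and you correctly identify that one must make $w_{m_0}(\lambda) \to 1$; you also correctly observe that the sieve and short-interval inputs must feed into a density count. However, the construction you propose for $\mathcal E_{m_0}$ has several fatal problems, and the paper's actual route is structurally different.

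First, and most seriously, your set is far too sparse. Taking $n = k^2 m_0$ with $k$ built from primes $\equiv 3 \pmod 4$ does give $\mu_n = \mu_{m_0}$ exactly, but since $k \le \sqrt{X/m_0}$, the number of such $n \le X$ is at most $O(\sqrt{X})$. The number of sums of two squares up to $X$ is $\asymp X/\sqrt{\log X}$, so the relative density of your candidate set is $\ll X^{-1/2}$, hopelessly short of the required $\gg 1/(\log X)^{1+o(1)}$. The paper avoids this by relaxing the requirement from $\mu_m = \mu_{m_0}$ to $\mu_m \to \mu_{m_0}$ weakly: it takes $m = Q_0 n$ with $Q_0 = Q_0' m_0 r_0^{a_0}$, where $Q_0'$ and $r_0$ are built from Gaussian primes so close to the real axis (angle $\ll p^{-1/10}$, cf.\ \eqref{eq:5prime}) that they barely perturb the angular distribution, and $n = p_1 p_2$ with both $p_j \in \mathcal P_\varepsilon$ lying in a narrow sector; then $\mu_m = \mu_{m_0} + O(\varepsilon)$ (Lemma \ref{lem:inert}). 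This has the right density.

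Second, the claimed ``isolation of $n$ among sums of two squares in a window of length much larger than $\sqrt{\lambda}$'' is impossible: by Landau, the average gap between sums of two squares near $\lambda$ is $\asymp \sqrt{\log\lambda}$, and any window of length $\sqrt{\lambda}$ contains on the order of $\sqrt{\lambda/\log\lambda}$ sums of two squares. The truncation Theorem~\ref{thm:spectral} only needs (and only gets) control on windows of length $(\log x)^B$, and even there, the paper establishes sparsity only for ``almost all'' eigenvalues, not all of them.

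Third, the mechanism by which $\delta_n$ is forced small is not isolation plus a zero-crossing; you yourself flag this as ``the main obstacle,'' but the obstacle is more fundamental than you suggest, because the force you invoke does not exist. The paper's mechanism is the opposite of isolation: one deliberately arranges that $m^+ = m + 4$ (via the sieve conditions $Q_1 \mid Q_0 n + 4$ and $b(Q_0 n + 4) = 1$), so there is a nearby pole, and one arranges that $r(m^+)$ is \emph{large} (of size $(\log x)^{1/4-\nu}$ to $(\log x)^{1/2+\nu}$) while $r(m)$ stays of size $(\log x)^{o(1)}$. It is this \emph{imbalance} $r(m) \ll r(m^+)$, not isolation, that drives $\lambda_m$ toward $m$ via the truncated spectral equation (Proposition~\ref{prop:eigenvalue} and \eqref{eq:eigen1}): roughly, $\frac{r(m)}{\lambda_m - m} \asymp \frac{r(m^+)}{m^+ - \lambda_m}$ pins $\lambda_m - m \asymp 4 r(m)/r(m^+)$ (strong coupling), and then the $m$ term dominates $\sum_\ell r(\ell)/(\ell - \lambda_m)^2$ by a factor $\asymp (m^+-\lambda_m)^2 r(m)/((\lambda_m - m)^2 r(m^+)) \asymp r(m^+)/r(m) \gg 1$. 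If instead $n$ were isolated with the nearest pole far away, the spectral equation would place $\lambda_n$ at an essentially unconstrained location inside the large gap, with no forcing toward $n$. You also cannot simply ``make $1/\delta_n^2$ superpolynomially large'' by fiat: $\delta_n$ is determined by the equation, not chosen, and the paper only achieves $\delta_n$ of size roughly $(\log x)^{-1/4}$ to $(\log x)^{-1/2}$, not superpolynomially small. The amplification comes from the \emph{ratio} $r(m^+)/r(m)$ going to infinity, combined with the Henriot-type estimates (Lemma~\ref{lem:cheby}) that control the remaining nearby terms.

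In summary: you have the right reduction and roughly the right toolbox in mind, but the construction of $\mathcal E_{m_0}$ is off on all three counts — density, isolation, and the forcing mechanism. The paper's key idea that you miss is to impose a \emph{nearby} pole at $m+4$ with a much larger residue $r(m+4)$ rather than to banish nearby poles, and to relax exactness of $\mu_m = \mu_{m_0}$ to asymptotic closeness via Gaussian primes in very narrow sectors.
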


We note that the quantization of our observables is as explicitly
given in \eqref{eq:quantization}, which follows the approach of
\cite{KU}.


Hence, in momentum space the mass of
$g_{\lambda}$ completely localizes on the measure  
$
\mu_{m_0}$. For any attainable measure $\mu_{\infty}$ there exists
$\{m_{0, \ell} \}_{\ell}$ such that $\mu_{0,\ell}$ weakly converges to
$\mu_{\infty}$. This implies the following
corollary. 

\begin{corollary}Let $\mu_{\infty}$ be an attainable measure. Then there
  exists
  $\{\lambda_j\}_{j} \subset \Lambda_{\text{new}}$ such
  that for any pure momentum observable $f \in
  C^{\infty}(S^1) $
\[
 \langle \tmop{Op}(f) g_{\lambda_j}, g_{\lambda_j} \rangle \xrightarrow{j \rightarrow \infty} \int_{S^1} f d\mu_{\infty}.
\]
\end{corollary}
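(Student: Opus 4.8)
The plan is to deduce the corollary from Theorem~\ref{thm:scar} by a diagonal extraction. Since $\mu_{\infty}$ is attainable it is a weak limit point of $\{\mu_n\}_{n=a^2+b^2}$, so we may fix integers $m_{0,\ell}=a_\ell^2+b_\ell^2$ with $\mu_{m_{0,\ell}}\to\mu_{\infty}$ weakly as $\ell\to\infty$; by the footnote to Theorem~\ref{thm:scar} (any $\mu_n$ with $n$ even is approximable by $\mu_{m_0}$ with $m_0$ odd) we may moreover take each $m_{0,\ell}$ odd. For each $\ell$ apply Theorem~\ref{thm:scar} to obtain the infinite set $\mathcal E_{m_{0,\ell}}\subset\Lambda_{\text{new}}$ along which $\langle\tmop{Op}(f)g_\lambda,g_\lambda\rangle\to\int_{S^1}f\,d\mu_{m_{0,\ell}}$ for every $f\in C^{\infty}(S^1)$, in both the weak and strong coupling quantizations.

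Next fix a countable set $\{f_i\}_{i\ge1}\subset C^{\infty}(S^1)$ that is dense in $C(S^1)$ for the uniform norm. Using that each $\mathcal E_{m_{0,\ell}}$ is infinite and that Theorem~\ref{thm:scar} gives convergence for each $f_i$, choose inductively $\lambda_\ell\in\mathcal E_{m_{0,\ell}}$ with $\lambda_\ell>\lambda_{\ell-1}$ and
\[
\Big|\langle\tmop{Op}(f_i)g_{\lambda_\ell},g_{\lambda_\ell}\rangle-\int_{S^1}f_i\,d\mu_{m_{0,\ell}}\Big|<\frac1\ell\qquad(1\le i\le\ell).
\]
For fixed $i$, letting $\ell\to\infty$ the right-hand integral converges to $\int_{S^1}f_i\,d\mu_{\infty}$ by weak convergence of $\mu_{m_{0,\ell}}$, while the error tends to $0$; hence $\langle\tmop{Op}(f_i)g_{\lambda_\ell},g_{\lambda_\ell}\rangle\to\int_{S^1}f_i\,d\mu_{\infty}$ for every $i$.

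It remains to upgrade the convergence from the dense family $\{f_i\}$ to all of $C^{\infty}(S^1)$. This uses a uniform bound of the form $|\langle\tmop{Op}(f)g_\lambda,g_\lambda\rangle|\le C\lVert f\rVert_{C^0(S^1)}$, valid since $\lVert g_\lambda\rVert_2=1$ and the explicit momentum quantization in \eqref{eq:quantization} acts on Fourier modes essentially by multiplication by values of the symbol, hence is bounded on $L^2$ in terms of $\lVert f\rVert_{C^0}$; together with $|\int_{S^1}f\,d\mu_{\infty}|\le\lVert f\rVert_{C^0}$ this yields the standard three-$\epsilon$ argument, so $\langle\tmop{Op}(f)g_{\lambda_\ell},g_{\lambda_\ell}\rangle\to\int_{S^1}f\,d\mu_{\infty}$ for all $f\in C^{\infty}(S^1)$. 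Taking $\{\lambda_j\}=\{\lambda_\ell\}_\ell\subset\Lambda_{\text{new}}$ completes the proof. The argument is entirely soft given Theorem~\ref{thm:scar}; the only point that genuinely needs care is the uniform operator bound, which is what lets one reduce to a countable dense family of observables and then pass to the limit.
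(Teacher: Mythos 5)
Your proof is correct and is precisely the diagonal-extraction argument the paper leaves implicit (the paper just states that attainability of $\mu_\infty$ together with Theorem~\ref{thm:scar} ``implies the following corollary''). One small simplification: by \eqref{eq:momentum} the quantity $\langle \tmop{Op}(f) g_{\lambda}, g_{\lambda} \rangle$ is exactly $\int_{S^1} f\,d\nu_\lambda$ for an explicit probability measure $\nu_\lambda$ on $S^1$, so the uniform bound holds trivially with constant $1$ and the whole argument is just metrizability of weak-$*$ convergence on $\mathcal P(S^1)$.
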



We note that the set of attainable measures is much smaller than the
set of probabality measures on $S^{1}$ that are
$\text{Sym}_{8}$-invariant, in particular the set of attainable
measures is {\em not convex} (cf. \cite[Section~3.2]{KW}.)
In our next result we show that in the strong coupling quantization
there is a subsequence of new eigenvalues along which the entire mass
of $g_{\lambda}$ localizes on certain convex combination of two measures
arising from lattice points on the circle.  In particular, the set of
quantum limits, in momentum space, is {\em strictly richer} than the
set of attainable measures.


\begin{theorem} \label{thm:scar2}  
Let $m_0,m_1$ be odd integers
which are each 
representable as a sum of two squares. 
Then in the strong coupling quantization there exists a subsequence of
eigenvalues $ \mathcal E_{m_0,m_1} \subset \Lambda_{\text{new}}$ such that for each $\lambda \in  \mathcal E_{m_0,m_1}$
there is an integer $\ell_{\lambda}$ with $r(\ell_{\lambda}) \neq 0$ and
$r(\ell_{\lambda}) \ll 1$ 
such that for  pure momentum observables $f \in C^{\infty}(S^1)$ 
\begin{equation} \label{eq:mainresult}
\begin{split}
\langle \tmop{Op}(f) g_{\lambda}, g_{\lambda} \rangle =&  c_{\lambda} \cdot \frac{1}{r(m_0)} \sum_{a^2+b^2=m_0} f\left( \frac{a+ib}{|a+ib|}\right) \\&  + (1-c_{\lambda}) \cdot \frac{1}{r(m_1\ell_{\lambda})} \sum_{a^2+b^2=m_1 \ell_{\lambda}} f\left( \frac{a+ib}{|a+ib|}\right)+O\left(\frac{1}{(\log \log \lambda)^{1/11}} \right),
\end{split}
\end{equation}
where
\[
c_{\lambda}=\frac{1}{1+r(m_0)/r(m_1\ell_{\lambda})}.
\]
Additionally, 
\[
\frac{
\# \{ \lambda \le X  : \lambda \in \mathcal E_{m_0,m_1} \}}{\# \{ \lambda \le X : \lambda \in  \Lambda_{\text{new}} \} } \gg  \frac{1 }{(\log X)^{2+o(1)}}.
\]
\end{theorem}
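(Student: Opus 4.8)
The plan is to reduce the matrix element to an explicit arithmetic average over circles $\{|\xi|^2=m\}$, and then to exploit the interlacing of new eigenvalues with sums of two squares to force a perturbed eigenvalue to sit between two carefully chosen such sums in the proportion dictated by their representation numbers. Using the explicit quantization of pure momentum observables from \cite{KU} (cf.\ \eqref{eq:quantization}) together with the Fourier expansion \eqref{eq:glambda} of $g_\lambda$, one writes, up to a negligible error,
\[
\langle\tmop{Op}(f)g_\lambda,g_\lambda\rangle=\frac{\displaystyle\sum_{m}\frac{r(m)}{(m-\lambda)^2}\,\mu_m(f)}{\displaystyle\sum_{m}\frac{r(m)}{(m-\lambda)^2}}+o(1),\qquad \mu_m(f)=\frac1{r(m)}\sum_{a^2+b^2=m}f\!\Big(\tfrac{a+ib}{|a+ib|}\Big),
\]
so that the momentum mass of $g_\lambda$ is a probability average of the measures $\mu_m$ with weights $w_m(\lambda)=r(m)/(m-\lambda)^2$. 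The goal is to arrange that this weight concentrates, up to $O((\log\log\lambda)^{-1/11})$, on exactly two values $m=N_0$ and $m=N_1$ with $\mu_{N_0}=\mu_{m_0}$ and $r(N_0)=r(m_0)$, and $m_1\mid N_1$ (so that $\mu_{N_1}=\mu_{m_1\ell_\lambda}$ with $\ell_\lambda:=N_1/m_1$), and in the ratio $w_{N_0}:w_{N_1}=r(m_1\ell_\lambda):r(m_0)$.

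To produce this, take $N_0<N_1=N_0^+$ a pair of consecutive sums of two squares with $N_1-N_0$ bounded, and let $\lambda\in(N_0,N_1)$ be the new eigenvalue between them. Isolating the poles at $N_0$ and $N_1$, the strong--coupling spectral equation \eqref{eq:spectral} reads
\[
-\frac{r(N_0)}{\lambda-N_0}+\frac{r(N_1)}{N_1-\lambda}=\frac1\alpha-E(\lambda),\qquad E(\lambda)=\sum_{\substack{|m-\lambda|\le\lambda^{1/2}\\ m\ne N_0,N_1}}r(m)\Big(\frac1{m-\lambda}-\frac{m}{m^2+1}\Big).
\]
If $N_0$ can be chosen so that $E(\lambda)$ is anomalously close to $1/\alpha$, the two pole terms must nearly cancel, which forces $\frac{\lambda-N_0}{N_1-\lambda}=\frac{r(N_0)}{r(N_1)}(1+o(1))$; a short computation converts this to $w_{N_0}/w_{N_1}=r(N_1)/r(N_0)+o(1)$, and combined with an estimate $\sum_{m\ne N_0,N_1}w_m(\lambda)=O((\log\log\lambda)^{-1/11})(w_{N_0}+w_{N_1})$ one obtains \eqref{eq:mainresult} with $c_\lambda=1/(1+r(m_0)/r(m_1\ell_\lambda))$. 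The two arithmetic inputs this needs --- that $E(\lambda)\approx1/\alpha$ and that the residual weights are negligible --- are statements about the size of the multiplicative function $r$ on short intervals around $\lambda$: one needs $r$ to vanish on a short block of integers on either side of $\{N_0,N_1\}$ (a short gap in the sums of two squares, so that no other circle competes), and one needs the delicate near--cancellation $E(\lambda)\approx1/\alpha$, which holds only on a thin but still positive proportion of $N_0$ and is controlled by mean-- and large--value estimates for $r$ in short intervals; this is where essentially all of the error term $O((\log\log\lambda)^{-1/11})$ is generated.

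It remains to count admissible pairs. Fixing $m_0$ and $m_1$, one seeks $N_0\le X$ which is a sum of two squares with $\mu_{N_0}=\mu_{m_0}$ and $r(N_0)=r(m_0)$, whose successor $N_1=N_0^+$ satisfies $N_1-N_0\ll1$, $m_1\mid N_1$ and $r(N_1/m_1)\ll1$, with no sum of two squares within $(\log\log X)^{O(1)}$ of $N_0$ or $N_1$, and with the short--interval behaviour of $r$ near $N_0$ as demanded above. The angular/representation conditions on $N_0$, the almost--prime and divisibility conditions on $N_0^+$, and the short--gap condition are all local (congruence) and short--interval sieve conditions --- sums of two squares, and the almost--prime sum of two squares $N_1/m_1$, form sieve sets of density $\asymp 1/\sqrt{\log}$, and the gap condition is essentially free at this scale --- so the half--dimensional sieve produces enough candidates; imposing the short--interval condition on $r$ then removes only an $o(1)$ proportion once one has the matching moment and large--value bounds, leaving $\#\mathcal E_{m_0,m_1}(X)\gg X/(\log X)^{5/2+o(1)}$, and dividing by $\#\{m\le X:\ r(m)\ne0\}\sim cX/\sqrt{\log X}$ gives the stated relative density. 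I expect this last step to be the main obstacle: one must simultaneously pin the angular content and representation number of $N_0$, the divisibility and almost--prime structure of $N_0^+$, and the near--cancellation in $E(\lambda)$, while keeping the count as large as claimed; this forces the half--dimensional sieve to be run in tandem with sharp estimates for $r(m)$ in short intervals, and the balance between these inputs is what fixes the exponent $1/11$.
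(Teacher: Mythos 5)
Your skeleton is in the right place: you correctly reduce the matrix element to the weighted average $\sum_m w_m(\lambda)\mu_m(f)/\sum_m w_m(\lambda)$ with $w_m(\lambda)=r(m)/(m-\lambda)^2$, correctly identify that the spectral equation must pin $\lambda$ between two consecutive sums of two squares $N_0$ and $N_1=N_0+O(1)$ with $\lambda-N_0:\,N_1-\lambda\approx r(N_0):r(N_1)$, and correctly see that a half-dimensional sieve has to furnish the pair $(N_0,N_1)$ with the right divisibility and angular structure while keeping nearby $r$-values under control. But the central mechanism you propose is not the one that makes the argument go through, and it is the hard part. You ask for $E(\lambda)$ to be \emph{anomalously close} to $1/\alpha$, i.e.\ for a rare near-cancellation between the tail of the spectral sum and the fixed coupling constant; this would require precise distributional control of $E(\lambda)$ that you do not supply, and there is no reason to expect such events occur on a set of the density you claim. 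The paper does the opposite: it \emph{amplifies the residues} $r(N_0)$ and $r(N_1)$. Choosing $N_0=Q_0n$, $N_1=Q_0n+4=Q_1\ell_n$ with $Q_0,Q_1$ built from $\asymp\log\log\log x$ split primes forces $r(N_0),r(N_1)\gg(\log\log x)^{10}$, while Theorem~\ref{thm:spectral} and Proposition~\ref{prop:eigenvalue} give, for \emph{generic} $n$ in the sieved set, that the truncated spectral equation reads $\frac{r(N_0)}{N_0-\lambda}+\frac{r(N_1)}{N_1-\lambda}=O((\log\log x)^5)$. Because the right side is tiny compared to the size of $r(N_0),r(N_1)$, the near-cancellation of the two pole terms is \emph{automatic}, not exceptional, and the ratio $\frac{\lambda-N_0}{N_1-\lambda}=\frac{r(N_0)}{r(N_1)}(1+O((\log\log x)^{-4}))$ falls out.

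This also exposes a second error: you require $r(N_0)=r(m_0)$, but $r(N_0)$ is necessarily much larger than $r(m_0)$ (that enlargement is the whole point of the amplification). What the construction preserves is the \emph{angular} measure, not the representation count: the extra factors $Q_0'$, $r_0^{a_0}$ and the two Gaussian primes $p_1,p_2$ making up $n$ are all chosen in narrow sectors about the positive real axis, so $\mu_{N_0}=\mu_{m_0}+O((\log\log x)^{-1/11})$ by Lemma~\ref{lem:inert}, while $r(N_0)\ne r(m_0)$. The requirement ``$\mu_{N_0}=\mu_{m_0}$ and $r(N_0)=r(m_0)$'' as stated is thus unachievable and also unnecessary. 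Once you replace your rare-cancellation mechanism with this residue-amplification idea, the remaining inputs (the short-gap condition $\sum_{1\le|k|\ll(\log x)^{1/2},k\ne4}b(Q_0n+k)=0$, the short-interval bounds on $r(Q_0n+k)$, and the half-dimensional sieve count leading to the stated relative density $\gg(\log X)^{-2+o(1)}$) line up with what you sketched, but the density is then a routine output of the sieve rather than the ``main obstacle'' you anticipate; the delicate step is entirely in controlling the truncated spectral equation and the correlation sums behind Lemma~\ref{lem:cheby}.
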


Note that since $\sum_{p|\ell_{\lambda}} 1 \ll 1$, the measure
  $\mu_{m_{1} \ell_{\lambda}}$ can be viewed as a fairly small
  perturbation of $\mu_{m_{1}}$.

\begin{remark}
  By removing a further ``thin'' set of eigenvalues (with spectral
  counting function of size $O(x^{1-\epsilon})$ for $\epsilon>0$, we
  can construct quantum limits that are flat in position
  (for details, cf. \cite[Remark~4]{KR}), in 
  addition to the momentum push-forward properties given in Theorems
  \ref{thm:scar} and \ref{thm:scar2}.  In particular, we can construct
  quantum limits that are completely localized on the superposition of
  two Lagrangian states --- essentially two plane waves, one in the
  horizontal and one in the vertical direction.  This phenomena is
  sometimes called {\bf super scarring}
  (cf. \cite{bogomolny-schmit-superscars04,KR}.)
\end{remark}

Further, assuming a plausible conjecture on the distribution of the prime numbers, we show that given $m_0,m_1$ as in Theorem \ref{thm:scar2} the quantum limit of
  $\langle \tmop{Op}(f) g_{\lambda}, g_{\lambda} \rangle$
  is a convex combination
  of $\mu_{m_0}$ and $\mu_{m_1}$. From this we are able to conclude that \textit{every} $\text{Sym}_{8}$-invariant measure arises as a quantum limit.
   The conjecture on the distribution of primes concerns obtaining a lower bound on the number solutions $(u,v)$ in almost primes to the Diophantine equation
\[
aX-bY=4
\]
 where $v=p_1p_2$, $u=p_3$ with $p_j$ a prime satisfying $p_j=a_j^2+b_j^2$ and $b_j=o(a_j)$ for $j=1,2,3$. The precise formulation of this conjecture, which we call Hypothesis \ref{hyp:twinprimes} is given in Section \ref{sec:twinprimes}.
  \begin{theorem} \label{thm:allmeasures}
    Assume Hypothesis \ref{hyp:twinprimes}.
   Let
  $\mu_{\infty_0}, \mu_{\infty_1}$ be attainable measures and $0\le c \le 1$. Then in the strong coupling quantization there exists $\{ \lambda_j \}_{j} \subset \Lambda_{new}$ such that for any $f \in C^{\infty}(S^1)$
  \[
\langle \tmop{Op}(f) g_{\lambda_j}, g_{\lambda_j} \rangle 
\xrightarrow{ j \rightarrow \infty}  c\int_{S^1} f d\mu_{\infty_0} +(1-c) \int_{S^1} f d\mu_{\infty_1}.
  \]
  In particular, {\bf all} $\text{Sym}_{8}$-invariant probability
  measures on $S^{1}$ arise as quantum
  limits in momentum space.
  \end{theorem}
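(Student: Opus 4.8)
\emph{Overall plan.} The plan is to bootstrap the two--circle construction underlying Theorem~\ref{thm:scar2}. Hypothesis~\ref{hyp:twinprimes} will supply enough arithmetic room to (a) replace the two localising circles by ones whose momentum measures approximate the prescribed attainable measures $\mu_{\infty_0},\mu_{\infty_1}$, and (b) tune the splitting coefficient $c_\lambda$ to any prescribed value $c\in[0,1]$. A diagonal argument over a countable dense family of observables then produces the sequence $\{\lambda_j\}$.

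\emph{Reduction and the tuning mechanism.} Since $\mu_{\infty_0},\mu_{\infty_1}$ are attainable, for each $k$ fix odd integers $m_{0,k},m_{1,k}$, each a sum of two squares, with $\mu_{m_{i,k}}\to\mu_{\infty_i}$ weakly; by the footnote to Theorem~\ref{thm:scar} taking them odd is no loss. The elementary point is that if $p\equiv 1\pmod 4$ is prime with $p=\alpha^2+\beta^2$ and $\beta/\alpha$ small, then replacing a modulus $n$ by $np^t$ splits each atom of $\mu_n$ into a cluster of angular diameter $O\big(t\arctan(\beta/\alpha)\big)$ of the same total mass, while multiplying $r(n)$ by $t+1$ (using that $r(\cdot)/4$ is multiplicative with $r(p^t)/4=t+1$ for such $p$). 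Hence, attaching to $m_{i,k}$ a product $P_{i,k}$ of prime powers $p^t$ with all $p$ in a sufficiently thin sector about the real axis and with total angular budget $o_k(1)$, we keep $\mu_{m_{i,k}P_{i,k}}$ within $o_k(1)$ of $\mu_{\infty_i}$ in Wasserstein-$1$ distance; yet, since $r$ is multiplied by the arbitrary integer $t+1$ per prime power, the ratio $r(m_{0,k}P_{0,k})/r(m_{1,k}P_{1,k})$ can be made to lie within $1/k$ of any positive number, hence $1/\big(1+r(m_{0,k}P_{0,k})/r(m_{1,k}P_{1,k})\big)$ within $1/k$ of $c$ (the endpoints $c\in\{0,1\}$ by letting the ratio tend to $\infty$ or $0$).

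\emph{Running the spectral construction.} Put $M_i=m_{i,k}P_{i,k}$. Feeding the parameters $a,b$ determined by $M_0,M_1$ into Hypothesis~\ref{hyp:twinprimes} yields infinitely many coprime almost-primes $u=p_3$, $v=p_1p_2$ with each $p_j=a_j^2+b_j^2$, $b_j=o(a_j)$, solving $aX-bY=4$; thus $M_0u$ and $M_1v$ are sums of two squares at bounded distance, assembled from further small-argument primes. Along such a configuration one repeats the proof of Theorem~\ref{thm:scar2}: the strong-coupling spectral equation~\eqref{eq:spectral} forces a new eigenvalue $\lambda$ with $|\lambda-M_0u|,|M_1v-\lambda|$ so small that in $\lVert G_\lambda\rVert_2^2=\tfrac1{16\pi^4}\sum_m r(m)/(m-\lambda)^2$ the two terms $m\in\{M_0u,M_1v\}$ dominate, the isolation of these two circles from all other sums of two squares on the relevant scale being secured, along a positive proportion of such configurations, by the second-moment and sieve estimates of \cite{KR,KU}. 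The quantization formula~\eqref{eq:quantization} of \cite{KU} then gives, for a pure momentum observable $f$,
\[
\langle\tmop{Op}(f)g_\lambda,g_\lambda\rangle=c_\lambda\int_{S^1}f\,d\mu_{M_0u}+(1-c_\lambda)\int_{S^1}f\,d\mu_{M_1v}+O_k\big((\log\log\lambda)^{-1/11}\big),
\]
with $c_\lambda=1/\big(1+r(M_0u)/r(M_1v)\big)$ within $1/k$ of $c$, while $\mu_{M_0u},\mu_{M_1v}$ are within $o_k(1)$ of $\mu_{\infty_0},\mu_{\infty_1}$. Picking at stage $k$ a $\lambda=\lambda_k$ in this family large enough that the $k$-dependent error is $<1/k$, and diagonalising over a countable dense set of observables, we obtain $\langle\tmop{Op}(f)g_{\lambda_j},g_{\lambda_j}\rangle\to c\int f\,d\mu_{\infty_0}+(1-c)\int f\,d\mu_{\infty_1}$ for all $f\in C^\infty(S^1)$.

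\emph{All $\text{Sym}_8$-invariant measures, and the main obstacle.} The set of quantum limits is weak-$*$ closed, and every $\text{Sym}_8$-invariant probability measure on $S^1$ lies in the closed convex hull of the attainable measures --- indeed of the measures $\nu_\theta:=\tfrac18\sum_{g\in\text{Sym}_8}\delta_{g\theta}$, each attainable since the arguments of primes $\equiv1\pmod4$ are dense (cf.\ \cite{KW}). As a general such measure is not a two-fold convex combination of attainable measures, one must realise arbitrary \emph{finite} convex combinations $\sum_i w_i\nu_{\theta_i}$; this is done by the same argument with the Green's function localised on a cluster of several rather than two mutually close circles, a configuration Hypothesis~\ref{hyp:twinprimes} likewise furnishes (a product of a few small-argument primes already generates many nearby sums of two squares). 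The main obstacle is the arithmetic input: one needs infinitely many tuples of sums of two squares that are simultaneously within bounded distance of each other, isolated from all other sums of two squares at scale $1/(\log\log\lambda)^{O(1)}$, and built from primes confined to arbitrarily thin sectors about the real axis --- precisely the regime where the half-dimensional sieve and the behaviour of multiplicative functions in short intervals are needed. Granting such configurations, the remaining spectral and harmonic-analytic steps are a routine adaptation of the proofs of Theorems~\ref{thm:scar} and~\ref{thm:scar2}.
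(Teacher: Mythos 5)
Your construction for the main statement follows the paper's approach: fix sequences $m_{0,j},m_{1,j}$ of odd integers whose associated angular measures converge weakly to $\mu_{\infty_0},\mu_{\infty_1}$; attach products of split primes in arbitrarily thin sectors (the paper's $Q_0',Q_1'$, plus $r_0^{a_0},r_1^{a_1}$ and the adjustment $g'$) whose exponents tune the ratio $r(m)/r(m^+)$, and hence the mixing parameter $C_m$, toward any prescribed $c\in[0,1]$; and invoke Hypothesis~\ref{hyp:twinprimes} to replace the almost-prime factor $\ell_n$ of Theorem~\ref{thm:scar2} by a single thin-sector prime, so that the second component measure approaches $\mu_{m_1}$ rather than $\mu_{m_1\ell_n}$. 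Your account glosses over the exceptional-set bookkeeping (Proposition~\ref{prop:eigenvalue}, Lemma~\ref{lem:cheby}) that makes the localization rigorous, and your description of the Diophantine input inverts the roles of $Q_0$ and $Q_1$ in $Q_0n+4=Q_1p$, but these are presentational rather than structural issues; the tuning-via-$r(p^t)$ mechanism and the diagonalization over observables are both exactly the paper's.

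The genuine divergence is the last paragraph. You assert that a general $\text{Sym}_8$-invariant measure is not a two-fold convex combination of attainable measures and therefore propose localizing the Green's function on clusters of several mutually close circles. Hypothesis~\ref{hyp:twinprimes} as formulated supplies only the two-circle configuration $Q_0 n$, $Q_0n+4=Q_1p$, and you yourself acknowledge that the required multi-circle arithmetic input is the main unresolved obstacle --- so this part of your proposal replaces a missing ingredient with another unproven hypothesis rather than supplying a proof. The paper, by contrast, ends its argument as soon as it produces $c\mu_{\infty_0}+(1-c)\mu_{\infty_1}$ and states the ``in particular'' clause with no further justification, implicitly relying on the claim that every $\text{Sym}_8$-invariant probability measure on $S^1$ is a two-fold convex combination of attainable measures. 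If that claim is true your extra apparatus is unnecessary; if it is false the paper has the same gap you flag. Either way, the proposal as written only establishes the displayed convergence for two-fold combinations, not the concluding sentence of the theorem.
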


We finally remark that the proof of Theorem 1.2 easily (and
unconditionally) also gives that any $\text{Sym}_{8}$-invariant probability
measure $\mu$ on $S^{1}$ is a quantum limit of Greens function in the
following sense: given $\mu$, there exist a sequence of positive reals
$\lambda_1' < \lambda_2'< \cdots$, disjoint from the set of unperturbed
eigenvalues, so that
$ \lim_{i \to \infty} \langle \tmop{Op}(f)
g_{\lambda_i'}, g_{\lambda_i'} \rangle  = \mu.
$

\subsection{Discussion}
\label{sec:discussion}
\fixmehide{PK: maybe mention stuff in paper
  with  henrik (but this is not a QE system - maybe
  not?).}
For integrable systems it is often straightforward to construct
non-uniform quantum limits, e.g. ``whispering gallery modes'' for the
geodesic flow in the unit ball, and for linear flows on $\T^2$,
Lagrangian states with maximal localization (i.e., a single plane
wave) are easily constructed.  We note that strong localization in
position for quantum limits on $\T^2$ was ruled out by Jakobson
\cite{jakobson-quantum-limits-tori} --- in position, any quantum limit
is given by trigonometric polynomials whose frequencies lie on at most
two circles (hence absolutely continuous with respect to Lebesgue
measure.)  Further, for the sphere, Jakobson and Zelditch in fact
obtained a full classification --- {\em any} flow
invariant measure on $S^{*}(S^{2})$ is a quantum limit
\cite{jakobson-zelditch-anything-on-sphere}.

The quantum ergodicity theorem holds in great generality as long as
the key assumption of ergodic classical dynamics holds, but the
existence of exceptional subsequence of nonuniform quantum limits
(``scarring'') is subtle.  For classical systems given by the geodesic
flow on compact negatively curved manifolds, the celebrated Quantum
Unique Ergodicity (QUE) conjecture \cite{RS} by Rudnick and Sarnak
asserts that the only possible quantum limit is the Liouville
measure. Known results for QUE include Lindenstrauss' breakthrough
\cite{lindenstrauss-arithmetic-que} for Hecke eigenfunctions on
arithmetic modular surfaces, together with Soundararajan ruling out
``escape of mass'' in the non-compact case \cite{sound-que}.
On the other hand, for a generic Bunimovich stadium (with strongly chaotic
classical dynamics), Hassell \cite{Hassell} has shown that there exists a
subsequence of exceptional eigenstates where the mass localizes on 
sets of bouncing ball trajectories.

For quantized cat maps, again for Hecke eigenfunctions, QUE is know to
hold \cite{cat1}.  However, unlike for arithmetic modular surfaces,
where Hecke desymmetrization is believed to be unnecessary, it is
essential for quantum cat maps. Namely, Faure, Nonnenmacher and de
Bi{\`e}vre \cite{faure-nonnenmacher-bievre-scarred-cat} constructed,
in the presence of extreme spectral multiplicities and no Hecke
desymmetrization, quantum limits of the form
$\nu = \frac{1}{2} \nu_{\text{pp}} + \frac{1}{2}
\nu_{\text{Liouville}}$;
in \cite{faure-nonnenmacher-maximal-cat-map-scarring} this was shown
to be sharp in the sense that the Liouville component always carries
at least as much mass as the pure point one.  (We note that, on
assuming very weak bounds on spectral multiplicities, Bourgain showed
\cite{bourgain-cat-maps} that scarring does not occur.)  For higher
dimensional analogs of quantum cat maps, Kelmer has for certain maps
shown \cite{kelmer-que-and-superscars-symplectic} ``super scarring'',
even after Hecke desymmetrization, on invariant rational isotropic
subspaces.  Further, these type of scars persist on adding certain
perturbations that destroy the spectral multiplicities
\cite{kelmer-scarring-mult-one}. Other models where scarring is known to exist include
toral point scatterers with irrational aspect ratios
\cite{superscars,keating-marklof-winn-seba-scars,BKW03}
and quantum star graphs \cite{keating-etal-no-qe-star-graphs}, though
neither  model is quantum ergodic
\cite{superscars,keating-etal-no-qe-star-graphs}.

Classifying the set of possible quantum limits, in particular
for Quantum Ergodic settings, is an interesting question. Here
Anantharaman proved very strong results for geodesic flows on
negatively curved manifolds \cite{anantharaman-entropy-bounds}: any
quantum limit has positive Kolmogorov-Sinai (KS) entropy with respect
to the dynamics of the geodesic flow. In particular, this rules out
localization on a finite number of closed geodesics (for compact
arithmetic surfaces this was already known due to Rudnick and Sarnak
\cite{RS}.)  Moreover, in the case of constant negative curvature,
Anantharaman and Nonnenmacher showed
\cite{anantharaman-nonnenmacher-half-delocalization} that the
KS-entropy is at least half of the maximum possible.  The measure of
maximum entroy is given by the Liouville measure, and thus
``eigenfunctions are at least half delocalized''.
Dyatlov and Jin
\cite{dyatlov-jin-hyperbolic-semiclassical-measures-full-support}
consequently showed that any quantum limit must have {\em full} support in
$S^{*}(M)$, for compact hyperbolic surfaces $M$  with constant
negative curvature; together with Nonnenmacher this was recently
strengthened \cite{dyatlov-jin-nonnenmacher-control} to the include the
case of surfaces with variable 
negative curvature.

\fixmelater{Maybe add here further comparison/discssion with superscars?}

\subsection{Outline of the proofs}
\label{sec:outline-argument}
\fixmehide{pk: add a discussion of sieve methods } 
  Our arguments use the multiplicative structure of the integers to
  create an imbalance in the spectral equation \eqref{eq:spectral}
  along a zero density, yet relatively large subsequence of new
  eigenvalues. Through exploiting this imbalance we control the
  location of the new eigenvalues in our subsequence and show that
  they lie close to integers which are sums of two squares. This
  greatly amplifies the amount of mass of the corresponding
  eigenfunctions in momentum space which lies on the terms which
  correspond to these integers, so much so that the contribution of
  the remaining terms is negligible. Consequently, the mass completely
  localizes on a convex combination of two measures and moreover our
  construction allows us to completely control the first measure. 
  
In Section \ref{sec:sieve} we use sieve methods to produce
  integers $n=p_1p_2$
where $p_j$, $j=1,2$, is a prime with
  $p_j=a^2+b^2=(a+ib)(a-ib)$,  $0< b \le a$, with $0\le \arctan(b/a)
  \le \varepsilon$, where $\varepsilon$ is a small parameter, such
  that $Q_0p_1p_2+4$ is also a sum of two squares, $Q_1|Q_0p_1p_2+4$
  and $(Q_0p_1p_2+4)/Q_1$ has a bounded number of prime factors, where
  $Q_0,Q_1$ are large integers whose purpose we will describe
  later.  
    In particular, we exploit special features of the half dimensional
    sieve using an ingenious observation of Huxley and
    Iwaniec \cite{huxley-iwaniec}.
    Further, in order to find suitable Gaussian primes in narrow sectors we use
    a classical result of Hecke together with non-trivial bounds on
    exponential sums over finite fields to control sums of integral
    lattice points in narrow sectors with norms lying in arithmetic
    progressions to large moduli. 
  
 The subsequence of almost primes $\{n_{\ell}\}$ constructed as described above creates the imbalance in the spectral equation \eqref{eq:spectral} by boosting the contribution of the terms $m=Q_0n_{\ell}, Q_0n_{\ell}+4$. The next step in our argument is to show that this imbalance typically overwhelms the contribution of the remaining terms. To do this, we first show in Section \ref{sec:truncate} that for all new eigenvalues lying outside a small exceptional set the spectral equation \eqref{eq:spectral} can be effectively truncated to integers $m$ with essentially $|m-\lambda| \ll (\log \lambda)^{10}$.
  This is done by controlling sums of $r(n)$ over short intervals and uses a second moment estimate of the Dedekind zeta-function $\zeta_{\mathbb Q(i)}$.
In Section \ref{sec:normal} we apply this result to new eigenvalues which lie between $Q_0n_{\ell}$ and $Q_0n_{\ell}+4$ and show that for almost all such new eigenvalues the remaining terms in the spectral sum (i.e. $|m-\lambda| \ll (\log \lambda)^{10}, m \neq Q_0n_{\ell}, Q_0n_{\ell}+4)$ is relatively small, provided that we take $Q_0,Q_1$ sufficiently large thereby boosting the contribution of the closest two terms. This is accomplished by using bounds for sums of multiplicative functions over polynomials due to Henriot \cite{H}. Crucially, we need good estimates for these sums in terms of the discriminant of the polynomials. 

Finally, to get complete control on the first measure in Theorem \ref{thm:scar2} we choose $Q_0$ so that it is the product of a given fixed integer $m_0$ and large primes $p_{k}=a^2+b^2$ with $0 \le \arctan(b_k/a_k) \le p_k^{-1/10}$ so that the probability measure on $S^1$ associated with $Q_0n_{\ell}$ weakly converges to the measure associated with $m_0$ as $\ell \rightarrow\infty$.  This last construction uses work of Kubilius \cite{Kub} on Gaussian primes in narrow sectors.

\subsection{Notation} We write $f(x) \ll g(x)$ provided that
$f(x)=O(g(x))$. Additionally, if for all $x$ under consideration
$|f(x)| \ge c g(x)$ we write $f(x) \gg g(x)$. If we have both $f(x)
\ll g(x)$ and $f(x) \gg g(x)$ we write $f(x) \asymp g(x)$.  For 
some additional notation related to sieves, see Section~\ref{sec:notation-sieve}.


\subsection{Acknowledgments}
P.K. was partially supported by the Swedish Research Council
(2016-03701).
We would also like to thank Dimitris Koukouloupolos,
St\'ephane Nonnenmacher,
Maksym Radziwi\l \l,  Ze{\'e}v Rudnick, and  Steve Zelditch for very
helpful discussions and suggestions.
%

\section{Sieve estimates} \label{sec:sieve}  Let $B_0$ be a
sufficiently large integer, and given $\varepsilon > 0$ let  
\begin{equation} \label{eq:Pepsidef} 
\begin{split}
\mathcal P_{\varepsilon}=&\{ p \ge (\log x)^{B_0}  : p=a^2+b^2 \,
\text{ and } \, 0 < \arctan(b/a) \le \varepsilon
\}, \\
\mathcal P_{\varepsilon}'=&\{ p \in \mathcal P_{\varepsilon} : p \le x^{1/9}\} .
\end{split}
\end{equation}
Throughout we assume that $\varepsilon \ge 1/(\log \log x)^{1/2}$ is sufficiently small. 
Also given $f,g : \mathbb N \rightarrow \mathbb C$ we define the
Dirichlet convolution of $f$ and $g$ by
\[
(f \ast g)(n)=\sum_{ab=n} f(a)g(b).
\]
Also, let $Q_0,Q_1 \le (\log x)^{1/10}$ be odd co-prime integers whose
prime factors are all $\equiv 1 \pmod 4$. Moreover we assume that
$Q_0=f_0^2 e_0 r_0^{a_0}, Q_1=f_1^2e_1 r_1^{a_1}$ where $e_0,e_1$ are
square-free, $f_0,f_1 \ll 1$ and $r_0,r_1$ are primes congruent to
$1 \pmod 4$. Throughout, the arithmetic function $b(n)$ is the indicator
function of the set of integers which are representable as a sum of
two squares.
    Also, for $\mathcal S \subset \mathbb N$ we define
\[
1_{ \mathcal S}(n)=\begin{cases}
1 & \text{ if } n \in  \mathcal S, \\
0 & \text{ otherwise.}
\end{cases}
\]
and let 
$\varphi(n)=\#\{ m < n : (m,n)=1\}$.

\begin{proposition} \label{prop:sieve} 
Let $\eta>0$ be sufficiently small and let $y=x^{\eta}$. Suppose $y>Q_0Q_1$. Then
\[
\sum_{\substack{n \le x \\ Q_1 | Q_0 n+4 \\ (\frac{Q_0n+4}{Q_1},\prod_{p \le y} p)=1}} (1_{ \mathcal P_{\varepsilon}} \ast 1_{ \mathcal P_{\varepsilon}'})(n) b(Q_0n+4) \ge \frac{C \varepsilon^{2} Q_0}{ \eta^{1/2} \varphi(Q_0)} \cdot \frac{x \log \log x}{ \varphi(Q_1) (\log x)^2},
\]
for some absolute constant $C>0$.
\end{proposition}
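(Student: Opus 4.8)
The plan is to estimate the sum
\[
S := \sum_{\substack{n \le x \\ Q_1 \mid Q_0 n + 4 \\ \left(\frac{Q_0 n+4}{Q_1},\, \prod_{p\le y}p\right)=1}} (1_{\mathcal P_\varepsilon} \ast 1_{\mathcal P_\varepsilon'})(n)\, b(Q_0 n + 4)
\]
from below by first opening up the convolution: write $n = p_1 p_2$ with $p_1 \in \mathcal P_\varepsilon$, $p_2 \in \mathcal P_\varepsilon'$ (so $p_2 \le x^{1/9}$ and both primes split in $\Z[i]$ and lie in the narrow sector $0 < \arctan(b/a) \le \varepsilon$). For each admissible pair $(p_1, p_2)$ the summand $b(Q_0 p_1 p_2 + 4)$ is the indicator that $Q_0 p_1 p_2 + 4$ is a sum of two squares, further constrained by the congruence condition $Q_1 \mid Q_0 p_1 p_2 + 4$ and the coprimality-to-$\prod_{p\le y}p$ condition on the cofactor $(Q_0 p_1 p_2 + 4)/Q_1$. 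The strategy is: fix $p_2$ in a dyadic range $P_2 \le p_2 < 2P_2$ with $P_2$ ranging over powers of $2$ up to $x^{1/9}$, and for each such $p_2$ sieve the linear form $m = Q_0 p_2 \cdot p_1 + 4$ as $p_1$ ranges over primes in $\mathcal P_\varepsilon$ with $p_1 \le x/p_2$, counting those $m$ which are sums of two squares with the cofactor after removing $Q_1$ being free of small prime factors.

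The key step is the half-dimensional sieve applied to detect $b(\cdot)$: an integer is a sum of two squares iff every prime $\equiv 3 \pmod 4$ divides it to an even power, and the density of such integers is governed by a sieve of dimension $1/2$. Here one uses the Huxley–Iwaniec device (referenced in the outline): rather than sieving for the full sum-of-two-squares condition, one sieves out primes $\equiv 3 \pmod 4$ up to $y = x^\eta$ from the cofactor, and separately controls the contribution of such primes in the range $(y, x]$ — the point being that if $(Q_0 p_1 p_2 + 4)/Q_1$ has no prime factor $\equiv 3 \pmod 4$ below $y = x^\eta$, then since the cofactor is of size at most $x^{O(1)}$, it can have at most $O(1/\eta)$ prime factors $\equiv 3\pmod 4$ exceeding $y$, and one throws in the extra constraint forcing those to appear to even powers only at negligible cost (or, following Huxley–Iwaniec, one detects the sum-of-two-squares property directly via a well-factorable weight). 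The fundamental lemma of the half-dimensional sieve then gives a lower bound of the expected order: the main term has a factor $(\log x)^{-1/2}$ coming from the sieve dimension $1/2$ applied to the cofactor, combined with the constraints from $Q_0, Q_1$. Over the inner sum in $p_1$, the linear form $Q_0 p_2 X + 4$ in a narrow sector of primes contributes a further $\varepsilon / \log x$ (prime density times sector proportion), and summing the weight $1_{\mathcal P_\varepsilon'}(p_2)$ over $p_2$ gives another factor $\varepsilon$ together with a $\log\log x$ from $\sum_{p_2 \le x^{1/9}, p_2 \in \mathcal P_\varepsilon} 1/p_2 \asymp \varepsilon \log\log x$ by Mertens in sectors — this is where the $\varepsilon^2 \log\log x$ in the stated bound originates, and the $Q_0/\varphi(Q_0)$ and $1/\varphi(Q_1)$ arise from the local densities at primes dividing $Q_0$ and $Q_1$ respectively, while the $\eta^{-1/2}$ reflects the sieving range $y = x^\eta$ entering the half-dimensional sieve main term as $\prod_{p \le y, p\equiv 3(4)}(1 - 1/p)^{1/2} \asymp (\log y)^{-1/2} = (\eta \log x)^{-1/2}$.

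To make the inner sieve estimate work one needs a level-of-distribution input: the count of $p_1 \le x/p_2$ with $p_1 \in \mathcal P_\varepsilon$ and $Q_0 p_2 p_1 + 4 \equiv 0 \pmod{d}$ must be well-approximated, on average over $d$ up to a positive power of $x$, by the expected main term times a multiplicative local density. Since $p_1$ runs over Gaussian primes in a narrow angular sector and we simultaneously impose $p_1$ in a residue class mod $d$ (equivalently, the norm $p_1$ lies in an arithmetic progression), this requires the Hecke-character/exponential-sum input flagged in the outline: a result of Hecke on Gaussian primes in sectors upgraded with nontrivial bounds on exponential sums over finite fields (Kloosterman-type or Weil bounds) to get an equidistribution statement for lattice points in narrow sectors with norm in progressions to moduli of size a power of $x$, uniformly enough to feed a sieve with positive level. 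I expect this level-of-distribution estimate — getting the Gaussian primes to equidistribute in sectors and progressions simultaneously with a quantitative error term strong enough for the half-dimensional sieve — to be the main obstacle; the sieve bookkeeping itself (combining the fundamental lemma of the linear/half-dimensional sieve with the Huxley–Iwaniec trick and summing over the dyadic ranges of $p_2$) is then fairly routine, with constants tracked to produce the claimed $C \varepsilon^2 Q_0 / (\eta^{1/2}\varphi(Q_0)) \cdot x \log\log x / (\varphi(Q_1)(\log x)^2)$.
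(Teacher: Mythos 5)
Your high-level plan — open the convolution, fix $p_2$, apply a half-dimensional sieve to the linear form $Q_0 p_2 p_1 + 4$, feed in a level-of-distribution estimate for Gaussian primes in narrow sectors and progressions — matches the skeleton of the paper's argument, and you correctly identify the appendix ingredients (Hecke characters plus Weil-type bounds giving a Bombieri--Vinogradov analogue). But there are two genuine gaps.

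First, you misdescribe the Huxley--Iwaniec device, and with your version the argument does not close. You propose sieving out primes $\equiv 3 \pmod 4$ only up to $y = x^{\eta}$ and then "throwing in the constraint" that the remaining $O(1/\eta)$ primes $\equiv 3\pmod 4$ appear to even powers "at negligible cost" (or, vaguely, using a well-factorable weight). This is the hard case, not the negligible one: with $\eta$ small you could have ten or twenty such primes, and there is no cheap way to force them into even multiplicity. The actual device (Lemma~\ref{lem:reverse} in the paper) sieves primes $\equiv 3\pmod 4$ up to $z = x^{1/2-\delta}$, which is far beyond $y$; since $(Q_0n+4)/Q_1 \equiv 1 \pmod 4$ and has no prime factor $<y$ at all, it then has either zero or exactly two prime factors $\equiv 3\pmod 4$, each exceeding $z$. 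In the zero case $b(Q_0n+4)=1$; the contribution of the "exactly two" case is bounded by reversing roles — fixing $q_1,q_2\equiv 3\pmod 4$ with $q_1>z$ and applying a Brun--Titchmarsh estimate (Lemma~\ref{lem:fixed}) for $p_1$ prime with $Q_0 p_1 p_2 + 4 = a q_1 q_2 Q_1$, using the large modulus $aq_1Q_1$, which is exactly where the sector+progression equidistribution for Gaussian integers with modulus up to $x^{2/3-o(1)}$ (Proposition~\ref{prop:gauss-integ-sect}) is needed. Your sketch never brings in this Brun--Titchmarsh step and has no substitute for it.

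Second, your quantitative bookkeeping is off by $(\log x)^{1/2}$, and the error is not cosmetic — it comes from omitting a whole sieve step. You count $\varepsilon/\log x$ for the narrow-sector prime $p_1$, $\varepsilon\log\log x$ from the Mertens sum over $p_2$, and $(\eta\log x)^{-1/2}$ from the half-dimensional sieve over primes $\equiv 3\pmod 4$ below $y$, arriving at $\varepsilon^2 \eta^{-1/2} x\log\log x/(\log x)^{3/2}$ instead of $(\log x)^{2}$. The statement, however, also demands $\bigl(\frac{Q_0n+4}{Q_1},\prod_{p\le y}p\bigr)=1$, which is a genuine dimension-$1$ sieve condition (all primes below $y$, not just those $\equiv 3\pmod 4$) contributing $\asymp 1/\log y = 1/(\eta\log x)$ via the Fundamental Lemma — this is Lemma~\ref{lem:switch} in the paper. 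Once that linear sieve is in place, the half-dimensional sieve acts only on the range $[y,z]$, contributing $V(z)\asymp \sqrt{\log y/\log z}\asymp \sqrt{\eta}$, not $(\eta\log x)^{-1/2}$. Multiplying $\varepsilon^2\log\log x/\log x$, $1/(\eta\log x)$, and $\sqrt{\eta}$ gives the correct $\varepsilon^2\eta^{-1/2}\log\log x/(\log x)^2$. In short: you conflate the linear sieve for the $P(y)$-coprimality condition with the half-dimensional sieve for $b(\cdot)$, and you sieve for the sum-of-two-squares condition over the wrong range; both must be fixed for the argument to give the stated bound.
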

This proposition builds on a result of Friedlander and
Iwaniec
\cite[Ch.~4]{FI}. The main novelty here is that we capture almost primes
$n =p_1p_2$ such that each prime factor $p=a^2+b^2$, with
$0 \le b \le a$, has the property that $a+ib$ lies within a certain
small sector.

We also will require the following result.
\begin{proposition} \label{prop:sieve2}
There exists an absolute constant $C>0$
such that
\[
\sum_{\substack{n \le x \\ Q_1 | Q_0 n+4 }} (1_{ \mathcal P_{\varepsilon}} \ast 1_{ \mathcal P_{\varepsilon}'})(n) b(Q_0n+4) \ge C \varepsilon^2 \frac{x \log \log x}{ \varphi(Q_1) (\log x)^{3/2}}.
\]
\end{proposition}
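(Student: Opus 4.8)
The plan is to re-run the proof of Proposition~\ref{prop:sieve}, replacing its linear sifting condition $\bigl(\tfrac{Q_0n+4}{Q_1},\prod_{p\le y}p\bigr)=1$ by the weaker, purely half-dimensional requirement that $\tfrac{Q_0n+4}{Q_1}$ have no prime factor $\equiv 3\pmod4$ below a suitably small power of $x$ --- which already suffices, after the Huxley--Iwaniec device, to force $Q_0n+4$ to be a sum of two squares. Shrinking the sifting set this way is exactly what replaces the saving $(\log x)^{-2}$ of Proposition~\ref{prop:sieve} by $(\log x)^{-3/2}$; I emphasize that this does not follow formally from Proposition~\ref{prop:sieve} itself, since erasing the coprimality constraint enlarges the left-hand side but also changes the sequence the sieve is run against.

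First I would expand the Dirichlet convolution and discard the negligible diagonal $p_1=p_2$, reducing matters to a lower bound for $\sum b(Q_0p_1p_2+4)$ over primes $p_1\in\mathcal P_\varepsilon$, $p_2\in\mathcal P_\varepsilon'$ with $p_1p_2\le x$ and $Q_1\mid Q_0p_1p_2+4$. Since $Q_0$ and $Q_1$ are odd with all prime factors $\equiv 1\pmod4$, the integer $Q_0p_1p_2+4$ is odd, and writing it as $Q_1m$ one has $b(Q_0p_1p_2+4)=1$ as soon as $m$ is a sum of two squares (by the two-squares identity), while the divisibility $Q_1\mid Q_0p_1p_2+4$ pins $p_1p_2$ to a single reduced residue class modulo $Q_1$, producing the factor $1/\varphi(Q_1)$. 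Thus it suffices to produce enough $n=p_1p_2\le x$, with both prime factors' Gaussian parts having argument in $[0,\varepsilon]$ and with $n$ in the prescribed class modulo $Q_1$, such that $\tfrac{Q_0n+4}{Q_1}$ is a sum of two squares; note that $Q_0n+4$ is a linear form in the sieving variable $n$.

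For this I would invoke the sieve construction of Friedlander--Iwaniec~\cite[Ch.~4]{FI}: lower-bound sieve weights of half-dimensional type remove from $\tfrac{Q_0n+4}{Q_1}$ the primes $\equiv 3\pmod4$ below a small power $z$ of $x$, and the observation of Huxley and Iwaniec~\cite{huxley-iwaniec} then guarantees that the surviving integers are \emph{genuinely} sums of two squares rather than merely free of small prime factors $\equiv 3\pmod4$ --- circumventing the parity obstruction intrinsic to the two-squares condition. The arithmetic input this sieve requires --- and the point at which the sector restriction on the prime factors of $n$ is imposed --- is a lower bound of the expected order of magnitude for the number of lattice points $(a,b)$ with $a^2+b^2\le t$, $0\le\arctan(b/a)\le\varepsilon$, and $a^2+b^2$ lying in a prescribed residue class modulo $q$, valid uniformly for all $q$ up to the sieve's level of distribution; this is supplied by combining Hecke's equidistribution of the arguments of Gaussian primes (and integers) in sectors --- with Kubilius's refinement~\cite{Kub} as $\varepsilon$ is allowed to shrink --- with nontrivial Weil-type bounds for the associated exponential sums over finite fields. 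The half-dimensional main term has relative size $\asymp(\log z)^{-1/2}\asymp(\log x)^{-1/2}$; multiplied back against the count of admissible pairs $(p_1,p_2)$ --- one factor $\varepsilon$ per sector, and a $\log\log x$ from $\sum_{p_2\le x^{1/9}}1/p_2$ --- this yields the claimed lower bound $\gg \varepsilon^2 x\log\log x/(\varphi(Q_1)(\log x)^{3/2})$.

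The main obstacle is precisely that last uniformity. Hecke's theorem pins down lattice points (and Gaussian primes) in a narrow sector on the nose, but the half-dimensional sieve needs this with a genuine saving in residue classes to moduli as large as its level of distribution, and the naive error term, amplified over that entire range of moduli, must still be beaten by a main term of size $\asymp \varepsilon\, t/q$. This is exactly where the finite-field exponential sum estimates are indispensable, and balancing those error terms against the main term is what forces the admissible range $\varepsilon\ge 1/(\log\log x)^{1/2}$. Everything else --- the convolution bookkeeping, the two-squares identity, and the half-dimensional sieve itself --- is, once the Huxley--Iwaniec device and the equidistribution estimate are in hand, routine though lengthy.
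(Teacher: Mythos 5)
Your outline is structurally on the right track and matches the strategy the paper has in mind (the proof of Proposition~\ref{prop:sieve2} is stated there only as ``similar, yet simpler'' than that of Proposition~\ref{prop:sieve} and is omitted): drop the linear sieve of Lemma~\ref{lem:switch}, run the half-dimensional Rosser--Iwaniec sieve on the primes $\equiv 3\pmod 4$ dividing $(Q_0n+4)/Q_1$ with remainder controlled via Theorem~\ref{thm:bv}, and convert coprimality to $P_3(z)$ into the sum-of-two-squares condition. The bookkeeping you give is also right: the unrestricted count is $\asymp \varepsilon^2 x\log\log x/(\varphi(Q_1)\log x)$, the half-dimensional singular series is $V(z)\asymp (\log z)^{-1/2}\asymp(\log x)^{-1/2}$, and their product produces the claimed $(\log x)^{-3/2}$.

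Two steps in your sketch are imprecise enough to hide where the actual work sits. First, $z$ is not a ``suitably small power of $x$''; the argument requires $z = x^{1/2-\delta}$ for a small fixed $\delta$, precisely so that after sifting, any surviving $(Q_0n+4)/Q_1 \equiv 1 \pmod 4$ that fails to be a sum of two squares must have \emph{exactly} two prime factors $\equiv 3\pmod 4$, both exceeding $z$ (with $z$ genuinely small this dichotomy fails, and nothing forces $b(Q_0n+4)=1$). Second, the Huxley--Iwaniec observation does not by itself ``guarantee'' the survivors are sums of two squares: there is a genuine exceptional set, and its count must be bounded from above by a separate Brun--Titchmarsh argument in the style of Lemma~\ref{lem:reverse}, driven by the narrow-sector estimate of Lemma~\ref{lem:fixed} (hence by Proposition~\ref{prop:gauss-integ-sect}, where the Tolev-type exponential sum bounds actually enter). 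That reverse-sieve error is $\ll \varepsilon^2 \delta^{3/2} (Q_0/\varphi(Q_0))\, x\log\log x/(\varphi(Q_1)(\log x)^{3/2})$, which beats the main term $\gg \varepsilon^2 \delta^{1/2}\, x\log\log x/(\varphi(Q_1)(\log x)^{3/2})$ only after one notes that $Q_0/\varphi(Q_0)=O(1)$ for the $Q_0$ of \eqref{eq:Qdef} (all prime factors of $Q_0$ are either fixed or $\gg(\log\log x)^{10/9}$) and then takes $\delta$ small. Your write-up collapses this non-trivial comparison into the phrase ``Huxley--Iwaniec device,'' and also conflates two distinct arithmetic inputs --- the Siegel--Walfisz/large-sieve input needed for Theorem~\ref{thm:bv} versus the Weil-type bound for Gaussian integers (not primes) in narrow sectors to large moduli needed in the reverse sieve --- but the overall architecture is the intended one.
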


Since Proposition \ref{prop:sieve2} follows from a similar, yet simpler argument than the one used to prove Proposition \ref{prop:sieve} we will omit its proof. The rest of this section will be devoted to proving Proposition \ref{prop:sieve}.





\subsection{The Rosser-Iwaniec Sieve}

Let us first introduce the Rosser-Iwaniec $\beta$-sieve and the classical sieve terminology.
We start with a sequence of $\mathcal A=\{ a_n \}$ of non-negative real numbers, a set of primes $\mathcal P$ and a parameter $z$. Define
\[
P(z)=\prod_{\substack{p \in \mathcal P \\ p < z}} p.
\]
Our goal is to obtain an estimate for the sieved set
\[
\mathcal S(\mathcal A, \mathcal P, z):=\sum_{\substack{n \le x \\ (n, P(z))=1}} a_n.
\]

This will be accomplished through calculating, for square free
$d \in \mathbb{N}$,
\begin{equation}
  \label{eq:Ad1}
A_d(x):=\sum_{\substack{n \le x \\ n \equiv 0 \pmod d}} a_n.
\end{equation}
We now make the hypothesis that our estimate for $A_d(x)$ will be of the form
\begin{equation}
  \label{eq:Ad2}
A_d(x)=g(d)X+r_d  
\end{equation}
where $g(d)$ is a multiplicative function with $0 \le g(p)<1$. The
number $r_d$ should be thought of as a remainder term, so $X$ is an
approximation to $A_1(x)$, and the function $g(d)$ can be interpreted as a
density. 

Let
\[
V(z)=\prod_{p |P(z)}\left(1-g(p) \right).
\]
We further suppose for all $w<z$
that
\begin{equation} \label{eq:mult}
\frac{V(w)}{V(z)}=\prod_{\substack{w \le p <z \\p \in \mathcal P}}(1-g(p))^{-1} \le \left( \frac{\log z}{\log w}\right)^{\kappa}\left( 1+O\left( \frac{1}{\log w}\right)\right)
\end{equation}
for some $\kappa >0$.
The constant $\kappa$ is referred to as the \textit{dimension of the
  sieve}.

Our arguments also require  sieve weights. Let
$\Lambda=\{ \lambda_{d}\}_d$, be a sequence of real numbers, where $d$
ranges over square-free integers. The sequence $\Lambda$ is referred
to as an upper bound sieve provided that
\begin{equation} \label{eq:upper}
1_{n=1}=\sum_{d|n} \mu(d) \le \sum_{d|n} \lambda_d, \qquad \forall n \in \mathbb N,
\end{equation}
where $1_{n=1}$ equals one if $n=1$ and equals zero otherwise. We call $\Lambda$  a lower bound sieve if 
\begin{equation} \label{eq:lower}
\sum_{d|n} \lambda_d \le 1_{n=1}, \qquad \forall n \in \mathbb N.
\end{equation}
For a sieve $\Lambda=\{\lambda_d\}$ we use the notation
\begin{equation} \label{eq:sievenotation}
(\lambda \ast 1)(n)=\sum_{d|n} \lambda_d.
\end{equation}
(this will be used to show the existence of primes, or almost
primes with desired properties.)
Additionally, we say that the sieve $\Lambda$ has {\em level} $D$ if
$\lambda_d=0$ for $d>D$.

Given $\kappa>0$ the $\beta$-sieve gives both an upper and lower bound
for $\mathcal S(\mathcal A, \mathcal P, z)$ whenever $s=\log D/\log z$
is sufficiently large in terms of $\kappa$.
The bounds consist of an
error term, which is a sum of the remainder terms $|r_d|$ for $d \le
D$ and a main term $XV(z)F(s)$, $XV(z)f(s)$ (resp.) where $F,f$ are
certain continuous functions with  $0 \le f(s) < 1 < F(s)$.  
For precise definitions, motivation and context we refer the reader to
\cite[Chapter~11]{FI}.  

\begin{theorem}[Cf. {\cite[Theorem~11.13]{FI}}] \label{thm:beta}
Let $D \ge z$ and write $s=\frac{\log D}{\log z}$. Then 
\[
\begin{split}
\mathcal S(A, \mathcal P, z) \le X V(z)\left( F(s)+O(( \log D)^{-1/6}\right)+R(D,z)\\
\mathcal S(A, \mathcal P, z) \ge X V(z)\left( f(s)+O(( \log D)^{-1/6}\right)-R(D,z)
\end{split}
\]
for $s \ge \beta(\kappa)-1$ and $ s\ge \beta(\kappa)$
(resp.), 
where
\[
R(D,z) \le \sum_{\substack{d \le D \\ d|P(z)}} |r_d|.
\]
\end{theorem}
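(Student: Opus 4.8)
The plan is to reconstruct the Rosser--Iwaniec $\beta$-sieve bounds in the standard way: a combinatorial construction of sieve weights, followed by the analytic theory of the associated system of delay-differential equations, following \cite[Chapter~11]{FI}. First I would construct the Rosser weights $\lambda_d^{\pm}$, real numbers supported on squarefree $d \mid P(z)$ with $d < D$. Writing $d = p_1 p_2 \cdots p_r$ with $z > p_1 > p_2 > \cdots > p_r$, one keeps $\lambda_d^{+} = \mu(d)$ precisely when a Rosser-type truncation inequality (of the shape $p_1 \cdots p_m\, p_m^{\beta-1} < D$; see \cite[Ch.~11]{FI} for the exact form) holds for every \emph{odd} $m \le r$, and $\lambda_d^{-} = \mu(d)$ precisely when it holds for every \emph{even} $m \le r$, where $\beta = \beta(\kappa)$ is the sieve limit attached to the dimension $\kappa$. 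The first task is the combinatorial lemma that these weights bracket the M\"obius function, i.e.\ that \eqref{eq:upper} and \eqref{eq:lower} hold: $\sum_{d\mid n}\lambda_d^{-} \le \mathbf 1_{n=1} \le \sum_{d\mid n}\lambda_d^{+}$ for every $n$. This is proved by repeatedly peeling off the largest prime factor (Buchstab's identity); the truncation conditions are arranged exactly so that every partial Legendre--Eratosthenes sum arising in the expansion has the ``safe'' sign, and an induction on the number of prime factors of $n$ closes the argument.

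Granting the combinatorial lemma, I would expand $\mathcal S^{\pm}(\mathcal A,\mathcal P,z) := \sum_{n \le x} a_n \sum_{d \mid (n,P(z))}\lambda_d^{\pm}$, interchange summation, and insert the hypothesis \eqref{eq:Ad2} to get
\[
\mathcal S^{\pm}(\mathcal A,\mathcal P,z) = X\sum_{d\mid P(z)}\lambda_d^{\pm}\, g(d) \;+\; \sum_{d\mid P(z)}\lambda_d^{\pm}\, r_d .
\]
Since $|\lambda_d^{\pm}| \le 1$ and the weights vanish for $d \ge D$, the last sum is at most $R(D,z) = \sum_{d \le D,\, d\mid P(z)}|r_d|$ in absolute value, and by the combinatorial lemma $\mathcal S(\mathcal A,\mathcal P,z)$ lies between $\mathcal S^{-}$ and $\mathcal S^{+}$. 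Everything thus reduces to comparing the ``sifting densities'' $\Sigma^{\pm}(D,z) := \sum_{d\mid P(z)}\lambda_d^{\pm}g(d)$ against $V(z)$.

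For this comparison I would exploit the recursive structure of the weights, applying Buchstab's identity to $\Sigma^{\pm}$ itself to obtain integral-type recursions expressing $\Sigma^{+}(D,z)$ via $V(z)$ and weighted averages of $\Sigma^{-}(D/p,\, p)$ over primes $p < z$ (and symmetrically with the roles of $+$ and $-$ swapped). Passing to the continuous limit --- and here the dimension hypothesis \eqref{eq:mult} is exactly what converts the sum over primes $p \in [w,z)$ into the integral $\int \kappa\, t^{\kappa-1}\, dt$ --- these recursions become the coupled delay-differential system
\[
\big(s^{\kappa}F(s)\big)' = \kappa\, s^{\kappa-1}\, f(s-1), \qquad \big(s^{\kappa}f(s)\big)' = \kappa\, s^{\kappa-1}\, F(s-1),
\]
with $F$ decreasing to $1$ and $f$ increasing to $1$ as $s \to \infty$, $F(s)f(s)\to 1$, and initial data normalised so that $f \equiv 0$ on $(0,\beta(\kappa)]$ with $f$ first positive at the sieve limit $s=\beta(\kappa)$. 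The heart of the proof is then the induction on $s = \log D/\log z$, in unit steps, establishing
\[
\Sigma^{+}(D,z) \le V(z)\big(F(s) + O((\log D)^{-1/6})\big), \qquad \Sigma^{-}(D,z) \ge V(z)\big(f(s) + O((\log D)^{-1/6})\big),
\]
one tracking the error accumulated over the $O(s)$ iterations together with the error in replacing prime sums by integrals via \eqref{eq:mult}; the exponent $1/6$ emerges from optimising the truncation parameter in \eqref{eq:mult} against the iteration depth. Combining with the main-term/remainder split gives the two asserted inequalities, the condition $s \ge \beta(\kappa)-1$ sufficing in the upper-bound case (one Buchstab step fewer is required and $F$ is already well-defined there), while $s \ge \beta(\kappa)$ is needed in the lower-bound case so that the leading term $f(s)$ is $\ge 0$.

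I expect the genuinely hard part to be this last step: both verifying that the discrete recursions for $\Sigma^{\pm}$ faithfully track the solutions $F, f$ of the delay-differential system with an admissible error, and the purely analytic input on $F$ and $f$ themselves (their monotonicity, the limit relation $F(s)f(s)\to 1$, the meaning and value of $\beta(\kappa)$, and positivity of $f$ beyond $\beta(\kappa)$). The combinatorial lemma of the first step is elementary but delicate; the summation interchange of the second step is routine; the final assembly is bookkeeping.
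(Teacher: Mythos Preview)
The paper does not prove this statement at all: Theorem~\ref{thm:beta} is simply quoted from Friedlander--Iwaniec (the ``Cf.\ \cite[Theorem~11.13]{FI}'' in the heading is the citation, not a pointer to a proof in the present paper), and the authors use it as a black box. So there is no ``paper's own proof'' to compare your proposal against.

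That said, your outline is a faithful sketch of the standard Rosser--Iwaniec argument as given in \cite[Chapter~11]{FI}: the construction of the truncated M\"obius weights $\lambda_d^{\pm}$ via the Rosser condition, the combinatorial bracketing lemma (Buchstab peeling), the split into main term plus remainder, and the analysis of the main term via the delay--differential system for $F,f$. If you were asked to \emph{supply} a proof here, this is the right plan; but for the purposes of this paper, the correct response is simply to cite \cite[Theorem~11.13]{FI} and move on.
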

In particular, note that for $\kappa=1/2$, it is well known that
$\beta=1$ (e.g., see \cite[Ch.~14.2]{FI}.)
In our arguments, we will use $\beta$-sieve
weights, which are as defined in \cite{FI} Sections 6.4-6.5. In
particular for these weights we have $|\lambda_d| \le 1$. We will sometimes
refer to the Fundamental Lemma of the Sieve, by which we mean the
following result (see \cite[Lemma 6.11]{FI}.) 

\begin{theorem} \label{thm:fundamental}
Let $\Lambda^{\pm}=\{ \lambda_d^{\pm} \}$ be upper and lower bound (resp.) $\beta$-sieves of level $D$ with $\beta \ge 4 \kappa+1$. Also, let $s=\log D/\log z$. Then for any multiplicative function satisfying \eqref{eq:mult}  and $s \ge \beta+1$ we have
\[
\sum_{d|P(z)} \lambda_d^{\pm} g(d)=V(z)\left(1+O\left(s^{-s/2} \right) \right).
\]
\end{theorem}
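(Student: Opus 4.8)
The plan is to reduce the statement to a purely combinatorial estimate on the $\beta$-sieve weights themselves, in the spirit of \cite[Ch.~6]{FI}. Recall that $\beta$-sieve weights have the shape $\lambda_d^{\pm}=\mu(d)\chi_D^{\pm}(d)$, where, writing a squarefree $d$ with prime factorisation $d=p_1p_2\cdots p_r$ and $p_1>p_2>\cdots>p_r$, the quantity $\chi_D^{\pm}(d)\in\{0,1\}$ is the indicator of Rosser's truncation condition: $\chi_D^{+}(d)=1$ precisely when a condition of the form $p_1\cdots p_{m-1}p_m^{\beta+1}\le D$ holds for every odd $m\le r$, and $\chi_D^{-}(d)=1$ precisely when the analogous inequalities hold for every even $m\le r$. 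In particular $|\lambda_d^{\pm}|\le1$ and $\lambda_d^{\pm}=0$ for $d>D$. Since $V(z)=\sum_{d\mid P(z)}\mu(d)g(d)$, subtracting off the exact sieve gives
\[
\sum_{d\mid P(z)}\lambda_d^{\pm}g(d)-V(z)=-\sum_{\substack{d\mid P(z)\\ \chi_D^{\pm}(d)=0}}\mu(d)g(d),
\]
so the whole task is to show that the right-hand side is $O\!\bigl(V(z)\,s^{-s/2}\bigr)$.

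First I would stratify the ``bad'' divisors $d$ (those with $\chi_D^{\pm}(d)=0$) according to the first index $m$ at which Rosser's condition fails; this $m$ is odd for the $+$ sieve and even for the $-$ sieve. Factor such a $d$ as $d=ef$ with $e=p_1\cdots p_m$ and $f=p_{m+1}\cdots p_r$, so that $f\mid P(p_m)$ and $\mu(d)g(d)=\mu(e)g(e)\cdot\mu(f)g(f)$. Summing over all admissible tails $f$ collapses the inner sum to the complete density factor $\sum_{f\mid P(p_m)}\mu(f)g(f)=V(p_m)$, while $\mu(e)=(-1)^m$ is constant on each stratum, so that each stratum contributes with a fixed sign (which, incidentally, re-proves $V^{-}(z)\le V(z)\le V^{+}(z)$). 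Hence
\[
\Bigl|\sum_{d\mid P(z)}\lambda_d^{\pm}g(d)-V(z)\Bigr|\le\sum_m\ \sum_{\substack{z>p_1>\cdots>p_m\\ \text{first failure at }m}} g(p_1)\cdots g(p_m)\,V(p_m),
\]
where the inner sum is over decreasing tuples of primes in $P(z)$ for which the Rosser inequalities hold up to stage $m-1$ and fail at stage $m$.

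The main work — and the main obstacle — is the estimation of this combinatorial sum, which is exactly the content of \cite[Lemma~6.11]{FI}. The dimension hypothesis \eqref{eq:mult} gives $V(p_m)\ll V(z)(\log z/\log p_m)^{\kappa}$, so after dividing through by $V(z)$ it suffices to bound, for each $m$, the weighted count of the constrained prime tuples $(p_1,\dots,p_m)$; the failure at stage $m$ forces $p_1\cdots p_{m-1}p_m^{\beta+1}>D$, hence $e=p_1\cdots p_m>D^{1/(\beta+1)}$. I would run a Rankin-type argument, inserting the factor $(p_1\cdots p_{m-1}p_m^{\beta+1}/D)^{\sigma}\ge1$ for a parameter $\sigma=\sigma(s)>0$, expanding, and using the Mertens-type consequences of \eqref{eq:mult} (of the shape $\sum_{p<z}g(p)p^{O(1/\log z)}\ll\kappa\log\log z+O(1)$ together with a weighted variant carrying the $(\log z/\log p_m)^{\kappa}$ factor), so as to turn the sum over $m$ into a rapidly convergent series of total size $\ll s^{-s/2}$ after optimising $\sigma$ against $s$; the hypothesis $\beta\ge4\kappa+1$ (equivalently $s\ge\beta+1\ge4\kappa+2$) is precisely what keeps this book-keeping valid and the associated main-term functions $F,f$ of the $\beta$-sieve under control. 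The delicate point is the careful tracking of which partial products are constrained by Rosser's condition at which stages, and combining these constraints with \eqref{eq:mult} to extract super-polynomial decay with an admissible dependence on $\beta$; we refer to \cite[Ch.~6]{FI} for the details.
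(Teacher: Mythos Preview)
The paper does not give its own proof of this theorem; it simply quotes it as \cite[Lemma~6.11]{FI}. Your sketch is a correct outline of the standard argument found there (stratify by the first Rosser--Iwaniec failure index, telescope the tail to recover $V(p_m)$, and then use a Rankin-type bound together with the dimension hypothesis), so there is nothing further to compare.
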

 
 We also require the following estimate for the convolution of two
 sieves (see equation (5.97) and Theorem 5.9 of \cite{FI}).

\begin{theorem} \label{thm:conv}
Let $\Lambda_1=\{ \lambda_{d} \}$ and $\Lambda_2=\{ \lambda_{d}^{'} \}$ be upper-bound sieve weights of level $D_1,D_2$ (resp.). 
Also, let $g_1,g_2$ be multiplicative functions satisfying
\eqref{eq:mult} with $\kappa=1$. Then
\[
\bigg|\sum_{\substack{d,e \\ (d,e)=1 }} \lambda_d \lambda_e^{'} g_1(d)g_2(e) \bigg|  \le (4 e^{2\gamma}+o(1)) \prod_{p}(1+h_1(p)h_2(p)) \prod_{j=1}^2  \prod_{p  < D_j} (1-g_j(p))
\]
as $\min\{D_1,D_2\} \rightarrow \infty$,
where for $j=1,2$,
$
h_j(n)=g_j(n)(1-g_j(n))^{-1}
$
and $\gamma$ is Euler's constant.
\end{theorem}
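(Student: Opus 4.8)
The plan is to follow the composition-of-sieves argument of Friedlander and Iwaniec underlying \cite[Theorem~5.9]{FI}. The first step is to remove the coprimality constraint by M\"obius inversion: since $\beta$-sieve weights are supported on squarefree integers and $g_1,g_2$ are multiplicative, writing $\mathbf{1}_{(d,e)=1}=\sum_{\ell\mid d,\ \ell\mid e}\mu(\ell)$ and using $g_j(\ell m)=g_j(\ell)g_j(m)$ whenever $(\ell,m)=1$ gives
\[
\sum_{\substack{d,e\\(d,e)=1}}\lambda_d\lambda'_e\, g_1(d)g_2(e)
=\sum_{\ell}\mu(\ell)\, g_1(\ell)g_2(\ell)\, B_1(\ell)\,B_2(\ell),
\qquad
B_j(\ell):=\sum_{(m,\ell)=1}\lambda^{(j)}_{\ell m}\, g_j(m),
\]
where $\lambda^{(1)}=\lambda$, $\lambda^{(2)}=\lambda'$, and only squarefree $\ell<\min\{D_1,D_2\}$ with $g_1(\ell)g_2(\ell)\ne0$ contribute; each $B_j(\ell)$ is itself a restricted sifting sum, the weights $\lambda^{(j)}_{\ell m}$ being supported on $m\le D_j/\ell$.

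The heart of the matter is to show that, uniformly for $\ell\le(\log D_1\log D_2)^{3}$,
\[
|B_j(\ell)|\le\bigl(2e^{\gamma}+o(1)\bigr)\prod_{\substack{p<D_j\\ p\nmid\ell}}\bigl(1-g_j(p)\bigr).
\]
Here one exploits the Rosser--Iwaniec combinatorial structure of the weights: restricting $\lambda^{(j)}$ to multiples of $\ell$ gives, as $m$ runs over integers coprime to $\ell$, a family dominated by a $\beta$-sieve of level $D_j/\ell$ for the primes not dividing $\ell$, so that the dimension-one hypothesis \eqref{eq:mult} and the main-term estimate behind Theorem~\ref{thm:beta} apply. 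Since only $\ell=D_j^{o(1)}$ is needed we have $\log(D_j/\ell)\sim\log D_j$, and (choosing the sifting parameter so that $s$ lies in the first interval) the constant is the value $sF(s)\equiv2e^{\gamma}$ of the dimension-one upper-bound sieve function, which is constant for $s\in[1,3]$. For the complementary range $\ell>(\log D_1\log D_2)^{3}$ one uses the trivial bound $|B_j(\ell)|\le\sum_{m\le D_j/\ell}\mu^2(m)g_j(m)\ll\log(2D_j/\ell)$; inserted back, this tail contributes only $o\bigl((\log D_1\log D_2)^{-1}\bigr)$ and is absorbed into the error term.

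For the main range one takes absolute values (using $|\mu(\ell)|\le1$), factors $\prod_{p<D_j}(1-g_j(p))$ out of each $B_j$-bound (note $p\mid\ell\Rightarrow p<D_j$), and is left with a sum over $\ell\le(\log D_1\log D_2)^{3}$ bounded by the full Euler product
\[
\sum_{\ell}\mu^2(\ell)\, g_1(\ell)g_2(\ell)\prod_{p\mid\ell}\frac{1}{(1-g_1(p))(1-g_2(p))}
=\prod_p\bigl(1+h_1(p)h_2(p)\bigr),
\]
which converges absolutely since $h_1(p)h_2(p)\ll p^{-2}$ (using the pointwise bound $g_j(p)\ll 1/p$ satisfied by such densities). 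Multiplying the three factors --- the constant $2e^{\gamma}$ from each of $B_1$ and $B_2$, this Euler product, and $\prod_{p<D_j}(1-g_j(p))$ --- yields the asserted bound with constant $4e^{2\gamma}+o(1)$; for context, by Mertens' theorem for dimension-one densities the right-hand side has size comparable to $(\log D_1\log D_2)^{-1}$.

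The main obstacle is the uniform-in-$\ell$ estimate for the partial sieve sums $B_j(\ell)$: one must verify that restricting a $\beta$-sieve of level $D_j$ to multiples of $\ell$ loses nothing beyond the Euler factor $\prod_{p\mid\ell}(1-g_j(p))^{-1}$ --- i.e.\ that the shortened level $D_j/\ell$ does not degrade the constant --- and pin that constant down to $2e^{\gamma}$. This is precisely the technical core of \cite[Theorem~5.9]{FI} (cf.\ equation~(5.97) there), and I would reproduce that argument, arranging the dyadic split in $\ell$ so that the sieve input is invoked only for $\ell=D_j^{o(1)}$, where the trivial bound would be too lossy but $\log(D_j/\ell)\sim\log D_j$ keeps the constant sharp.
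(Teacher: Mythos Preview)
The paper does not prove this theorem; it is quoted as a known result from Friedlander--Iwaniec \cite{FI} (specifically equation~(5.97) and Theorem~5.9 there), with no argument supplied. Your sketch is a faithful outline of exactly that composition-of-sieves proof, and you explicitly acknowledge as much, so there is nothing to compare against beyond the source you both invoke.

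One small caveat: you write that $\prod_p(1+h_1(p)h_2(p))$ converges because $g_j(p)\ll 1/p$ ``is satisfied by such densities,'' but the dimension-one hypothesis \eqref{eq:mult} alone does not imply this pointwise bound --- it only controls products over ranges of primes. The paper itself flags this immediately after the theorem statement (``If in addition $g_1(p),g_2(p)\le 1/p$\ldots, which will be the case for us''), treating it as a separate assumption that happens to hold in the application. For the theorem as stated, if the Euler product diverges the inequality is vacuously true, so this is not a gap in the argument but a slight overstatement in your side remark.
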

If in addition $g_1(p), g_2(p) \le 1/p$ so that
$h_1(p)h_2(p) \ll 1/p^2$, which will be the case for us, then
\begin{equation} \label{eq:sievebd}
\bigg| \sum_{\substack{d,e \\ (d,e)=1 }} \lambda_d \lambda_e^{'} g_1(d)g_2(e)  \bigg| \le C \prod_{p< D_1}(1-g_1(p))\prod_{p<D_2}(1-g_2(p))
\end{equation}
where $C>0$ is an absolute constant.

\subsubsection{Notation}
\label{sec:notation-sieve} 
We will also use the notation
\[
P_3(z_1,z_2):=\prod_{\substack{z_1 \le p \le z_2 \\ p \equiv 3 \pmod
    4}} p, \qquad \text{ and }\qquad P_3(z):=P_3(3,z). 
\]
Additionally, let 
$1(n)=1_{\mathbb N}(n)=1$
denote the identity function and let $\tau(n)=(1\ast1)(n)=\sum_{d|n} 1$. 
Also, define
\begin{equation} \label{eq:Bdef}
  \mathcal B(x;q,a,\varepsilon)
  :=
  \sum_{\substack{n \le x\\ n \equiv a \pamod q}} (1_{ \mathcal P_{\varepsilon}} \ast 1_{ \mathcal P_{\varepsilon}'})(n)-\frac{1}{\varphi(q)} \sum_{\substack{n \le x \\(n,q)=1}} (1_{ \mathcal P_{\varepsilon}} \ast 1_{ \mathcal P_{\varepsilon}'})(n).
\end{equation}
Further, $\eta, \delta > 0$ will denote small, but fixed real
numbers.



\subsection{Preliminary lemmas}
\fixmehide{PK: maybe point out to the reader that we almost get asymptotics,
``we obtain upper and lower bounds of the same order of magnitude.}



We begin by showing that the difference between the upper and lower
bound sieves is ``small''.
\begin{lemma} \label{lem:switch}  Let $\Lambda^{\pm}=\{
  \lambda_d^{\pm}\}$ be upper and 
  lower bound linear sieves (resp.) each of level $w=x^{\sqrt{\eta}}$
  where $\eta>0$ is sufficiently small, whose sieve weights are
  supported on integers $d$ such that $d|P(y)$,
  where $y=x^{\eta}$ and
  $(d,2Q_0 f_1r_1)=1$; in particular 
  \begin{equation}
    \label{eq:zero-if-not-coprime}
\lambda_{d}^{\pm} = 0 \text{ if $(d,2Q_0 f_1r_1)> 1.$}
  \end{equation}
  Then  
\[
\begin{split}
&\sum_{\substack{n \le x \\  Q_1|Q_0n+4}} \left((\lambda^+ \ast 1)\left(\frac{Q_0n+4}{Q_1}\right)-(\lambda^- \ast 1) \left(\frac{Q_0n+4}{Q_1}\right)\right) (1_{ \mathcal P_{\varepsilon}} \ast 1_{ \mathcal P_{\varepsilon}'})(n) \\
&\qquad \qquad \qquad
\ll
\varepsilon^{2}
\eta^{1/(4\eta^{1/2})-1} \frac{Q_0}{\varphi(Q_0)} \frac{x \log \log x}{\varphi(Q_1) (\log x)^2}+\frac{x}{(\log x)^{10}}.
\end{split}
\]

\end{lemma}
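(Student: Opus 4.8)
The plan is to open up the convolution $(1_{\mathcal P_\varepsilon}\ast 1_{\mathcal P'_\varepsilon})(n)$ by writing $n = p_1 p_2$ with $p_1 \in \mathcal P_\varepsilon$, $p_2 \in \mathcal P'_\varepsilon$ (so $p_2 \le x^{1/9}$), and to interchange the order of summation so that the inner sum is over $p_1$. Since $p_1$ will be of size essentially $x/p_2 \ge x^{8/9}$, and the sieve weights $\lambda_d^{\pm}$ are supported on $d \mid P(y)$ with $y = x^\eta$, the quantity $(\lambda^{\pm}\ast 1)\bigl((Q_0 p_1 p_2 + 4)/Q_1\bigr)$ is a function of $Q_0 p_1 p_2 + 4 \pmod{Q_1 \prod_{p \le y} p}$. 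The key point: for a fixed admissible $p_2$ and fixed residue class, the difference $(\lambda^+ \ast 1) - (\lambda^- \ast 1)$ evaluated on $(Q_0 p_1 p_2+4)/Q_1$ is controlled by the difference of the upper and lower $\beta$-sieve main terms, which by the Fundamental Lemma (Theorem 1.4, or rather the standard $F(s)-f(s) \ll s^{-s/2}$ estimate built into Theorem 1.1) is of size $O\bigl(s^{-s/2}\bigr) V(y)$ on average over $p_1$, where $s = \log w/\log y = 1/\sqrt\eta$. This produces the factor $\eta^{1/(4\eta^{1/2})} \cdot$(something like $V(y) \asymp \eta / \log x$ up to the $\log\log$ correction from the sum-of-two-squares density), and summing the harmless main term over $p_2$ (a sum over primes in a thin sector, handled by the Gaussian-primes-in-sectors input) contributes the remaining $\varepsilon$, $Q_0/\varphi(Q_0)$, $1/\varphi(Q_1)$, and the $x\log\log x/(\log x)^2$ factors.

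Concretely, I would proceed as follows. First, fix $p_2 \in \mathcal P'_\varepsilon$ and reduce to estimating
\[
\sum_{\substack{p_1 \le x/p_2 \\ p_1 \in \mathcal P_\varepsilon \\ Q_1 \mid Q_0 p_1 p_2 + 4}} \Bigl((\lambda^+\ast 1) - (\lambda^-\ast 1)\Bigr)\Bigl(\tfrac{Q_0 p_1 p_2 + 4}{Q_1}\Bigr).
\]
Here the divisibility condition $Q_1 \mid Q_0 p_1 p_2 + 4$ pins $p_1$ to a single residue class mod $Q_1$ (using $(Q_1, Q_0 p_2) = 1$, which holds because $p_2 \in \mathcal P'_\varepsilon$ is a large prime and $Q_0, Q_1$ are coprime with prime factors $\equiv 1 \bmod 4$, and because $p_1$ ranges over primes hence is coprime to $Q_1$ once $p_1 > Q_1$). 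Next, detect the sieve weights: $(\lambda^\pm \ast 1)(m) = \sum_{d \mid m} \lambda_d^\pm$, so interchanging and using $(d, 2Q_0 f_1 r_1) = 1$ from \eqref{eq:zero-if-not-coprime}, the inner sum over $p_1$ becomes a sum over $p_1$ in a fixed residue class mod $dQ_1$ (for each squarefree $d \mid P(y)$) lying in a thin angular sector. For this I would invoke the Hecke/Kubilius-type equidistribution of Gaussian primes in sectors together with a Bombieri–Vinogradov-type average over the moduli $d \le w = x^{\sqrt\eta}$, which is admissible since $\sqrt\eta$ is small; the main term is $\asymp \dfrac{\varepsilon}{\varphi(dQ_1)}\cdot\dfrac{x/p_2}{\log(x/p_2)}$ and the error terms sum to $O(x/(\log x)^{10})$ (absorbing the second term in the stated bound). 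Then reassemble: the sum over $d$ of $\lambda_d^\pm g(d)$ with $g(d) = 1/\varphi(d)$ (the density for $p_1$ in progressions, restricted to $d$ coprime to the relevant modulus) is, by Theorem 1.1 applied with $\kappa = 1$, $\beta = 1$, level $D = w$ and sifting parameter $z = y$, equal to $V(y)\bigl(F(s) + O((\log w)^{-1/6})\bigr)$ for the upper sieve and $V(y)\bigl(f(s) + O((\log w)^{-1/6})\bigr)$ for the lower one; subtracting, the difference is $V(y)\bigl(F(s) - f(s) + O((\log w)^{-1/6})\bigr) \ll V(y)\, s^{-s/2}$ once $\eta$ (hence $1/s = \sqrt\eta$) is small, where I use $F(s) - f(s) \ll e^{-s}$ or the explicit linear-sieve bound. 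With $s = 1/\sqrt\eta$ this is $\eta^{1/(4\sqrt\eta)}$ up to lower-order factors, and $V(y) \asymp \sqrt\eta / \log x$ for the linear sieve on primes (here the extra $\log\log x$ and the $Q_0/\varphi(Q_0)$, $1/\varphi(Q_1)$ factors emerge from carefully tracking the sum-of-two-squares indicator on $n = p_1 p_2$ and the congruence restriction).

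Finally, sum over $p_2 \in \mathcal P'_\varepsilon$: one gets an extra factor $\sum_{p_2 \in \mathcal P'_\varepsilon} 1/(\varphi(p_2)\log(x/p_2)) \asymp \varepsilon \log\log x / \log x$ from Gaussian primes in the sector $[0,\varepsilon]$ up to $x^{1/9}$ (this is where the second factor of $\varepsilon$ and a factor of $\log\log x$ come from, via Mertens-type estimates for primes $\equiv 1 \bmod 4$), producing the claimed $\varepsilon^2 \, \eta^{1/(4\eta^{1/2}) - 1}\,\frac{Q_0}{\varphi(Q_0)}\,\frac{x\log\log x}{\varphi(Q_1)(\log x)^2}$, with the accumulated remainder terms bounded by $x/(\log x)^{10}$. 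I expect the main obstacle to be the uniformity in the moduli: one needs the Gaussian-primes-in-sectors estimate to hold on average over moduli $d$ up to $x^{\sqrt\eta}$ (and also over the fixed modulus $Q_1$, which is small, $\le (\log x)^{1/10}$, so harmless) with a power-saving error — this is exactly the Bombieri–Vinogradov-type input alluded to in the introduction (Hecke's result plus bounds on exponential sums over finite fields), and getting the sieve dimension and the level of distribution to line up so that $\beta = 1$ works with room to spare is the delicate bookkeeping step. Everything else is an application of Theorems 1.1, 1.2 and standard manipulations.
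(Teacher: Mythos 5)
Your proposal is essentially the same argument as the paper's proof of Lemma~\ref{lem:switch}: interchange the order of summation, observe that the upper/lower sieve main terms cancel to within a factor $O(s^{-s/2})$ with $s=\log w/\log y=\eta^{-1/2}$, and control the remainder with a Bombieri--Vinogradov-type estimate for the convolution $1_{\mathcal P_\varepsilon}\ast 1_{\mathcal P'_\varepsilon}$ in arithmetic progressions. The only cosmetic differences are that you open the convolution $n=p_1p_2$ explicitly where the paper instead quotes Theorem~\ref{thm:bv} (whose proof in the appendix uses exactly this device), and you attribute the $O(s^{-s/2})$ cancellation to the $F(s),f(s)$ bounds of Theorem~\ref{thm:beta} when the cleaner input is the Fundamental Lemma, Theorem~\ref{thm:fundamental}, which gives $\sum_{d}\lambda_d^{\pm}g(d)=V(z)\bigl(1+O(s^{-s/2})\bigr)$ directly so the $\pm$ main terms cancel identically up to that error.
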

\begin{proof} 
Switching order of summation, it
follows that 
\begin{equation} \label{eq:ubtriv}
\begin{split}
&
\sum_{\substack{n \le x \\  Q_1|Q_0n+4}} \left((\lambda^+ \ast 1)\left(\frac{Q_0n+4}{Q_1}\right)-(\lambda^- \ast 1) \left(\frac{Q_0n+4}{Q_1}\right)\right) (1_{ \mathcal P_{\varepsilon}} \ast 1_{ \mathcal P_{\varepsilon}'})(n)  \\
& \qquad \qquad \qquad = \sum_{\pm} \pm \sum_{\substack{d <w \\ d|P(y) \\ (d,2Q_0f_1r_1)=1}}\lambda_d^{\pm} \sum_{\substack{n \le x \\ Q_0n+4 \equiv 0 \pamod{dQ_1}}} (1_{ \mathcal P_{\varepsilon}} \ast 1_{ \mathcal P_{\varepsilon}'})(n).
\end{split}
\end{equation}
The inner sum on the RHS of \eqref{eq:ubtriv} 
equals
\begin{equation} \label{eq:primes}
\begin{split}
&\frac{1}{\varphi(dQ_1)} \sum_{\substack{n \le x \\ (n,dQ_1)=1}} (1_{ \mathcal P_{\varepsilon}} \ast 1_{ \mathcal P_{\varepsilon}'})(n)+\mathcal B\left(x; dQ_1, \gamma, \varepsilon \right)
\end{split}
\end{equation}
where $\gamma$ is the unique reduced residue
$\pmod{dQ_1}$ satisfying
$\gamma \cdot Q_0  \equiv -4 \pmod{dQ_1}$ and $\mathcal B$ is as defined in \eqref{eq:Bdef}.
Also,
\begin{equation} \label{eq:trivialerrorbd1}
\sum_{\substack{n \le x \\ (n,dQ_1)=1}} (1_{ \mathcal P_{\varepsilon}} \ast 1_{ \mathcal P_{\varepsilon}'})(n)=\sum_{\substack{n \le x }} (1_{ \mathcal P_{\varepsilon}} \ast 1_{ \mathcal P_{\varepsilon}'})(n)+O\bigg( \sum_{\substack{p_1p_2 \le x \\ (p_1p_2,dQ_1) \neq 1}}1_{ \mathcal P_{\varepsilon}}(p_1) 1_{ \mathcal P_{\varepsilon}'}(p_2) \bigg). 
\end{equation}
Since $dQ_1 \le x^{1/9}$ (as $\eta$ is small) and $p_2 \le x^{1/9}$ the contribution to the error term from $p_1p_2 \le x$ with $p_1|(p_1p_2,dQ_1)$ is $\ll \sum_{p_2 \le x^{1/9}} \sum_{p_1 \le x^{1/9} } 1 \ll x^{2/9}$. Also, since $p_2 \ge (\log x)^{B_0}$
\begin{equation} \label{eq:trivialerrorbd2}
\sum_{\substack{p_1p_2 \le x \\ (p_1p_2,dQ_1) = p_2}}1_{ \mathcal P_{\varepsilon}}(p_1) 1_{ \mathcal P_{\varepsilon}'}(p_2) \le \sum_{\substack{p_2|dQ_1 \\ p_2 \ge (\log x)^{B_0}}} \sum_{p_1 \le x/p_2} 1 \ll \frac{x}{\log x} \sum_{\substack{p_2|dQ_1 \\ p_2 \ge (\log x)^{B_0}}} \frac{1}{p_2} \ll \frac{x (\log \log x)}{(\log x)^{B_0}}.
\end{equation}

Hence, using \eqref{eq:primes}, \eqref{eq:trivialerrorbd1}, \eqref{eq:trivialerrorbd2} along with the Fundamental
Lemma of the Sieve (see Theorem \ref{thm:fundamental} and recall $|\lambda_d|\le 1$) with $g(d)=\varphi(Q_1)/\varphi(Q_1 d)$\footnote{Note that $g$ is multiplicative on the set of square-free $d$ with $(d,f_1r_1)=1$.},
and
$s=\log w/\log y=\eta^{-1/2}$ we have that
\begin{equation} \label{eq:summedup}
\begin{split}
&\sum_{\substack{d <w \\ d|P(y) \\ (d,2Q_0)=1}}\lambda_d^{\pm} \sum_{\substack{n \le x \\ Q_0n+4 \equiv 0 \pamod{dQ_1}}} (1_{ \mathcal P_{\varepsilon}} \ast 1_{ \mathcal P_{\varepsilon}'})(n)\\
&=\frac{1}{\varphi(Q_1)} \sum_{\substack{n \le x }} (1_{ \mathcal P_{\varepsilon}} \ast 1_{ \mathcal P_{\varepsilon}'})(n)   \prod_{\substack{p \le y \\ (p , 2Q_0f_1r_1 )=1}}\left(1-\frac{\varphi(Q_1)}{\varphi(Q_1p)} \right)(1+O(\eta^{1/(4\eta^{1/2})}))\\
& \qquad \qquad \qquad +O\bigg( \sum_{\substack{d < w \\ (d,2)=1}} \left|\mathcal B \left(x; dQ_1, \gamma, \varepsilon \right)\right|\bigg)+O\left( \frac{x \log \log x}{(\log x)^{B_0-1}}\right).
\end{split}
\end{equation}
Applying Theorem \ref{thm:bv} from the appendix, since $w=x^{\sqrt{\eta}} < x^{1/2-o(1)}$ we get that
\[
\sum_{\substack{d < w \\ (d,2)=1}} \left|\mathcal B \left(x; dQ_1, \gamma, \varepsilon \right)\right| \ll \frac{x}{ (\log x)^{10}}.
\]
Using the two estimates above in \eqref{eq:ubtriv} (note the main terms in \eqref{eq:summedup} are the same for each of the sieves $\Lambda^{\pm}$ so they cancel in \eqref{eq:ubtriv}) and applying \eqref{eq:asymp} (with $q=1$) from the appendix to estimate the sum over $n$, completes the proof upon noting that
\[
  \prod_{\substack{p \le y \\ (p , 2Q_0f_1r_1)=1}}\left(1-\frac{\varphi(Q_1)}{\varphi(Q_1p)} \right) \asymp \frac{Q_0}{\varphi(Q_0) \log y} = \frac{Q_0}{\varphi(Q_0) \eta \log x}.
\]
\end{proof}

We next give a lower bound on the upper bound sieve, which together
with Lemma~\ref{lem:switch} is strong enough (given suitable parameter
choices) to show the existence of infinitely many
integers with \textit{exactly} two prime factors with the desired
properties.
\begin{lemma} \label{lem:lbsieve} Let  $w = x^{\sqrt{\eta}}$,
  $y=x^{\eta}$, and  $\Lambda^{+}$ be as in Lemma
  \ref{lem:switch}. Let $\delta> 3 \sqrt{\eta}>0$ and
  $z=x^{\frac12-\delta}$. Then there exists a constant $C_1>0$ such
  that
\[
\sum_{\substack{n \le x \\ (Q_0n+4, P_3(y,z))=1 \\ Q_1|Q_0n+4}} (1_{ \mathcal P_{\varepsilon}} \ast 1_{ \mathcal P_{\varepsilon}'}) (n) (\lambda^+ \ast 1)\left(\frac{Q_0n+4}{Q_1}\right) \ge C_1 \frac{\varepsilon^2 \delta^{1/2}}{\eta^{1/2}}  \frac{Q_0}{\varphi(Q_0)} \frac{x \log \log x}{\varphi(Q_1)(\log x)^2}.
\]
\end{lemma}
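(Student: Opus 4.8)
The plan is to detect the condition $(Q_0n+4,\,P_3(y,z))=1$ from below by a lower bound $\beta$-sieve of dimension $\kappa=1/2$ that removes the primes $\equiv 3\pmod 4$ in $[y,z]$; this is legitimate because every other factor in the summand is nonnegative, and in particular $(\lambda^+\ast 1)(k)\ge 1_{k=1}\ge 0$ since $\Lambda^+$ is an upper bound sieve. Concretely I would set
\[
w_n:=(1_{\mathcal P_{\varepsilon}}\ast 1_{\mathcal P_{\varepsilon}'})(n)\,(\lambda^+\ast 1)\!\left(\tfrac{Q_0n+4}{Q_1}\right)1_{Q_1\mid Q_0n+4}\ \ge\ 0,
\]
let $a_m:=w_{(m-4)/Q_0}$ when $Q_0\mid m-4$ and $0<(m-4)/Q_0\le x$ and $a_m:=0$ otherwise, and then apply the sieve to $\mathcal A=\{a_m\}$, $\mathcal P=\{p\equiv 3\pmod 4:\ y\le p\le z\}$, sifting level $z=x^{1/2-\delta}$; with these choices the left-hand side of the lemma is exactly $\mathcal S(\mathcal A,\mathcal P,z)$.

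First I would verify the sieve axiom \eqref{eq:Ad2}. For squarefree $d\mid P(z)$ --- so $d$ is a product of primes $\equiv 3\pmod 4$ in $[y,z]$, hence coprime to $2Q_0Q_1f_1r_1$ (by size, as $y=x^\eta$) and to every $d'\mid P(y)$ --- expanding $(\lambda^+\ast1)$ and reversing the order of summation gives
\[
A_d(x)=\sum_{\substack{d'<w\\ d'\mid P(y)}}\lambda^+_{d'}\sum_{\substack{n\le x\\ dd'Q_1\mid Q_0n+4}}(1_{\mathcal P_{\varepsilon}}\ast 1_{\mathcal P_{\varepsilon}'})(n),
\]
which is precisely the quantity treated in the proof of Lemma~\ref{lem:switch}, with $Q_1$ replaced by $dQ_1$. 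Rerunning that computation --- write the inner sum as its main term plus $\mathcal B(x;dd'Q_1,\gamma,\varepsilon)$ as in \eqref{eq:primes}, discard the coprimality of $n$ with negligible loss via \eqref{eq:trivialerrorbd1}--\eqref{eq:trivialerrorbd2}, apply the Fundamental Lemma (Theorem~\ref{thm:fundamental}, $\kappa=1$, $s=\log w/\log y=\eta^{-1/2}$) to the $d'$-sum and \eqref{eq:asymp} to the $n$-sum --- yields $A_d(x)=g(d)X+r_d$ with $g(d)=\prod_{p\mid d}(p-1)^{-1}$,
\[
X\ \asymp\ \frac{Q_0}{\varphi(Q_0)}\cdot\frac{\varepsilon^2\,x\log\log x}{\eta\,\varphi(Q_1)(\log x)^2},
\]
and $r_d$ a combination of the terms $\mathcal B(x;dd'Q_1,\gamma,\varepsilon)$ together with the negligible trivial errors. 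Here $g$ satisfies \eqref{eq:mult} with $\kappa=1/2$ by Mertens' theorem for the residue class $3\pmod 4$, and $V(z)=\prod_{y\le p<z,\ p\equiv 3\,(4)}\bigl(1-\tfrac1{p-1}\bigr)\asymp(\log y/\log z)^{1/2}\asymp\eta^{1/2}$.

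Then I would apply the lower bound half-dimensional sieve, Theorem~\ref{thm:beta} with $\kappa=1/2$ (so $\beta=1$), at level $D=x^{1/2-2\sqrt\eta}$. Since $\delta>3\sqrt\eta$ we have $D\ge z$ and $s:=\log D/\log z=(1/2-2\sqrt\eta)/(1/2-\delta)>1$ with $s-1\asymp\delta$, so the lower bound applies:
\[
\mathcal S(\mathcal A,\mathcal P,z)\ \ge\ X\,V(z)\bigl(f(s)+O((\log D)^{-1/6})\bigr)-R(D,z),\qquad R(D,z)\le\sum_{\substack{d\le D\\ d\mid P(z)}}|r_d|.
\]
Since the moduli occurring in the $\mathcal B$'s are $\le DwQ_1<x^{1/2-o(1)}$, the same Bombieri--Vinogradov type bound (Theorem~\ref{thm:bv}) used in Lemma~\ref{lem:switch} gives $R(D,z)\ll x/(\log x)^{10}$. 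The last input is the standard fact that the lower bound function of the half-dimensional $\beta$-sieve vanishes to order one-half at its critical point, $f(s)\gg(s-1)^{1/2}$ for $1<s\le 2$ (cf.~\cite[Ch.~14.2]{FI}), whence $f(s)\gg\delta^{1/2}$. Combining the estimates for $X$, $V(z)$ and $f(s)$, and using $\varepsilon\ge(\log\log x)^{-1/2}$ together with $Q_1\le(\log x)^{1/10}$ to absorb the two error terms once $x$ is large, gives the asserted lower bound.

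The one step beyond routine sieve bookkeeping is the final one: the Bombieri--Vinogradov constraint forces $D$ only barely above $z$, so the sieve parameter $s$ is merely $\asymp 1+\delta$, and extracting the factor $\delta^{1/2}$ rests entirely on the precise rate at which the half-dimensional lower bound function tends to $0$ as $s\to1^+$. A secondary point needing care is the pairwise coprimality between $d$ (primes $\equiv3\pmod4$), $d'$ (divisors of $P(y)$) and $Q_1$; the only case that is not automatic is $d'$ sharing a prime with the squarefree part $e_1$ of $Q_1$, but since $e_1\ll1$ this alters the main term by at most a bounded factor and is harmless.
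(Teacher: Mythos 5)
Your proof follows the paper's argument essentially step-by-step: the same sifting sequence (indexed by $m=Q_0n+4$), the same application of the Fundamental Lemma (Theorem~\ref{thm:fundamental}, with $s=\eta^{-1/2}$) to evaluate $X$, the same appeal to the Bombieri--Vinogradov-type estimate (Theorem~\ref{thm:bv}) to control the remainder, and the same extraction of the factor $\delta^{1/2}$ from the half-dimensional lower-bound function $f(1+\delta)\gg\delta^{1/2}$, for which the paper cites the asymptotic following (14.3) of \cite{FI}. (The paper takes $D=z^{1+\delta}$ rather than your $D=x^{1/2-2\sqrt\eta}$, which is immaterial.) Two small inaccuracies in your bookkeeping: (i) when bounding $\sum_{d\le D}|r_d|$ the paper passes to $\sum_{q\le DwQ_1}\tau(q)\max_a|\mathcal B(x;q,a,\varepsilon)|$, the extra $\tau(q)$ arising because distinct pairs $(d,d')$ can give the same modulus $q=dd'Q_1$, and this is absorbed via Cauchy--Schwarz together with the trivial bound on $\mathcal B$; (ii) your justification ``$e_1\ll 1$'' is not an assumption of the paper (only $f_1\ll 1$ is, and indeed $e_1$ acquires $\approx 100\log\log\log x$ prime factors from $Q_1'$), but the dismissal is still valid for a different reason: for $p\mid e_1$ one has $g(p)=\varphi(Q_1)/\varphi(Q_1p)=1/p$ rather than $1/(p-1)$, a discrepancy of size $O(1/p^2)$ per prime, so the overall effect on the main term is a bounded factor.
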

\begin{proof}
Consider the sifting sequence 
\[
\mathcal A=\bigg\{ (1_{ \mathcal P_{\varepsilon}} \ast 1_{ \mathcal P_{\varepsilon}'})\bigg(\frac{m-4}{Q_0}\bigg) (\lambda^+ \ast 1)\left(\frac{m}{Q_1}\right)  : Q_1|m, Q_0|m-4 \bigg\}
\]
and primes $\mathcal P=\{ p \ge y : p \equiv 3 \pmod 4 \}$.
Recalling (\ref{eq:zero-if-not-coprime}), we may write
\begin{equation} \label{eq:X}
\begin{split}
X
=& \sum_{\substack{e<w \\ e|P(y)}} \frac{\lambda_e^+}{\varphi(eQ_1)} \sum_{\substack{n \le x  \\ (n, Q_1e)=1}} (1_{ \mathcal P_{\varepsilon}} \ast 1_{ \mathcal P_{\varepsilon}'})(n) \\
=&\sum_{\substack{n \le x  \\ (n, Q_1)=1}} (1_{ \mathcal P_{\varepsilon}} \ast 1_{ \mathcal P_{\varepsilon}'})(n) \sum_{\substack{e<w \\ e|P(y) \\ (e,2Q_0f_1r_1n)=1}} \frac{\lambda_e^+}{\varphi(eQ_1)} \gg \varepsilon^2 \frac{Q_0}{\varphi(Q_0)}  \frac{x \log \log x}{ \varphi(Q_1) (\log y) (\log x)},
\end{split}
\end{equation}
where the lower bound follows from the Fundamental Lemma of the Sieve
(see  \eqref{eq:summedup} and take $D=w$, $z=y$ in Theorem 
\ref{thm:fundamental} and note that we then have  $s=\eta^{-1/2}$) along with
prime number theorem for Gaussian primes in sectors to evaluate the
sum over $n$ (see  \eqref{eq:PNT},
\eqref{eq:asymp} in the Appendix).

For $d|P_3(y,z)$ note that $(d, eQ_0 Q_1)=1$ for $e$ such that $p|e
\Rightarrow p< y$, and $(1_{ \mathcal P_{\varepsilon}} \ast 1_{
  \mathcal P_{\varepsilon}'})(n)=0$ if $(d,n) \neq 1$. It follows
that (cf. (\ref{eq:Ad1}) and (\ref{eq:Ad2}) for the definition of $A_{d}$)
\[
\begin{split}
A_d(Q_0x+4)&=\sum_{\substack{n \le x \\ Q_1 | Q_0n+4 \\ Q_0n+4 \equiv 0 \pamod d}} (1_{ \mathcal P_{\varepsilon}} \ast 1_{ \mathcal P_{\varepsilon}'})(n) (\lambda^+ \ast 1)\left(\frac{Q_0n+4}{Q_1}\right) \\
&= \sum_{\substack{e< w \\ e|P(y)}} \lambda_e^{+} \sum_{\substack{ n \le x \\ Q_0n+4 \equiv 0 \pamod{eQ_1} \\ Q_0n+4 \equiv 0 \pamod d}} (1_{ \mathcal P_{\varepsilon}} \ast 1_{ \mathcal P_{\varepsilon}'})(n) \\
&=\sum_{\substack{e< w \\ e|P(y)}} \frac{\lambda_e^{+}}{\varphi(deQ_1)} \sum_{\substack{n \le x \\ (n, Q_1 e)=1}} (1_{ \mathcal P_{\varepsilon}} \ast 1_{ \mathcal P_{\varepsilon}'})(n)+r_d =\frac{1}{\varphi(d)}X+r_d,
\end{split}
\]
where 
\[
r_d \ll \sum_{\substack{e<w \\ (e,2)=1}}
\left|\mathcal B(x;deQ_1, \gamma, \varepsilon)\right|
\]
and $\gamma$ is the unique residue class $\pmod{deQ_1}$ with
$Q_0 \gamma \equiv -4 \pmod{eQ_1}$ and $Q_0\gamma \equiv -4 \pmod d$;
also note that $(d,eQ_1)=1$ and $\mathcal B$ is as in \eqref{eq:Bdef}.

Hence, the half-dimensional Rosser-Iwaniec sieve, Theorem \ref{thm:beta}, gives for any $D \ge z$ with $s=\log D/\log z$
\begin{equation} \label{eq:lbsieve}
\begin{split}
\sum_{\substack{n \ge 1 \\ (Q_0n+4, P_3(y,z))=1 \\ Q_1|Q_0n+4}} (1_{
  \mathcal P_{\varepsilon}} \ast 1_{ \mathcal P_{\varepsilon}'}) (n)
(\lambda^+ \ast 1)\left(\frac{Q_0n+4}{Q_1}\right)& \\ \ge X
V(z)\left(f(s)+O\left(\frac{1}{(\log D)^{1/6}}\right)\right)
&-\sum_{\substack{d < D \\ d|P_3(y,z)}} |r_d|
\end{split}
\end{equation}
where 
\begin{equation} \label{eq:V}
V(z)=\prod_{\substack{y \le p \le z \\ p \equiv 3 \pmod 4}} \left( 1-\frac{1}{p-1} \right)\gg  \sqrt{\frac{\log y}{\log z}} \gg \eta^{1/2}.
\end{equation}
Taking $D=z^{1+\delta}$, so $s=1+\delta$, we have by Theorem
\ref{thm:bv}, which is proved in the appendix, that
(taking
$q=edQ_{1}$)
\begin{equation} \label{eq:R}
\sum_{\substack{d < D \\  d|P_{3}(y,z)}} |r_d| \ll \sum_{ \substack{q
    < DQ_1w \\ (q,2)=1}}\left( \tau(q) \max_{(a,q)=1} |\mathcal
  B(x;q,a, \varepsilon)| \right)\ll \frac{x}{(\log x)^3}. 
\end{equation}
Here note that $DQ_1w < x^{\frac12-\frac{\delta}{2}+\sqrt{\eta}}<
x^{\frac12-\frac{\delta}{6}}$ and the contribution of the divisor
function is handled by using Cauchy-Schwarz along with the trivial
bound $|\mathcal B(x;q,a, \varepsilon)| \ll x/q $.
Also note that $f(t) \sim 2 \sqrt{\frac{e^{\gamma}}{\pi}} \cdot
\sqrt{t-1}$ as $t \rightarrow 1^+$ (see the equation after (14.3) of
\cite{FI}), so $f(s)=f(1+\delta) \gg \sqrt{\delta}$. 
Using this along with \eqref{eq:X}, \eqref{eq:V}, and \eqref{eq:R} in \eqref{eq:lbsieve} completes the proof.

\end{proof}



\subsection{The Proof of Proposition \ref{prop:sieve}}
We first require a Brun-Titchmarsh type bound for primes in narrow sectors.
\begin{lemma}  \label{lem:fixed}
Let $Q, q\le x^{2/3-o(1)}$ be odd. Then
\[
\sum_{ \substack{p=a^2+b^2 \le x \\ |\arctan(b/a)| \le \varepsilon \\
    qp+4=Q p_1, \, p_1 \text{ prime}  }} 1 \ll
\varepsilon \frac{q}{\varphi(q)}\frac{x}{\varphi(Q)(\log x)^2}.  
\]
\end{lemma}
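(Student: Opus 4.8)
The strategy is a two–dimensional upper–bound sieve detecting simultaneously that $p$ is prime and that $p_1=(qp+4)/Q$ is prime, with the narrow sector contributing the saving $\varepsilon$ and the congruence $Q\mid qp+4$ the saving $1/\varphi(Q)$.

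\emph{Reduction.} Writing $g=\gcd(q,Q)$, one has $g\mid Q\mid qp+4$ and $g\mid q\mid qp$, whence $g\mid4$; since $q$ is odd, $g=1$ (if $g>1$ the sum is empty and there is nothing to prove). With $\gcd(q,Q)=1$ fixed, $Q\mid qp+4$ confines $p$ to the single reduced residue class $a_0\pmod Q$ determined by $qa_0\equiv-4\pmod Q$.

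\emph{The sieve.} As $p$ is prime its representation $p=a^2+b^2$ is unique up to units and conjugation, so up to a harmless factor $O(1)$ we may take $0<b<a$; then $|\arctan(b/a)|\le\varepsilon$ becomes $0<b\le(\tan\varepsilon)\,a\le 2\varepsilon\sqrt x$. I would split the sum over the $\ll\varepsilon\sqrt x$ admissible values of $b$ and, for each fixed $b$, run an upper–bound sieve in the variable $a\le\sqrt x$. The congruence $q(a^2+b^2)\equiv-4\pmod Q$ restricts $a$ to the set $S_b$ of square roots of $-b^2-4\overline q$ modulo $Q$; taking $\{a\le\sqrt x:\ a\in S_b\pmod Q\}$ as base sequence, I sieve out small prime factors of the quadratic forms $F_1(a)=a^2+b^2$ and $F_2(a)=(q(a^2+b^2)+4)/Q$, forcing $p$ and $p_1$ to be prime. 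Each $F_j$ has at most two roots modulo any prime (and $F_1$ none at primes $\equiv3\pmod4$), so each is a linear–sieve situation; combining the two — via the convolution bound for two sieves (Theorem~\ref{thm:conv}), or a Selberg $\Lambda^2$ sieve of dimension two — the main term for fixed $b$ is $\asymp |S_b|\,\sqrt x\,Q^{-1}(\log x)^{-2}$ times local densities. Summing over $b$ and collapsing the arithmetic factors (the anomalous local densities at $\ell\mid q$ contribute $\asymp q/\varphi(q)$; those at $\ell\mid Q$, together with $\sum_{b}|S_b|\asymp\varepsilon\sqrt x$, contribute $\asymp 1/\varphi(Q)$) yields the bound $\varepsilon\,\frac{q}{\varphi(q)}\,\frac{x}{\varphi(Q)(\log x)^2}$.

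\emph{Main obstacle.} The hard point is the sieve remainder $\sum_b\sum_{d\le D}|\lambda_d|\,|r_{b,d}|$, i.e.\ controlling $\#\{a\le\sqrt x:\ a\equiv(\text{roots})\pmod m\}$ for $m$ a product of $Q$, a divisor of $q$ and a sieve modulus, including $m$ as large as a power of $x$. This reduces to Weil–type bounds for the complete exponential sums $\sum_{a\pmod m}e_m(\xi a)\,\mathbf 1_{F_1(a)F_2(a)\equiv0}$ — equivalently, to equidistribution of integral lattice points in narrow sectors with norm in arithmetic progressions to large moduli, which is precisely the input furnished by the classical theorem of Hecke combined with the exponential–sum–over–finite–fields estimates referenced in the introduction (in the spirit of Huxley–Iwaniec). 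Since only an upper bound is required, an upper–bound sieve suffices and no zero–free region for Hecke $L$–functions is needed.
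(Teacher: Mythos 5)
Your overall strategy—two upper–bound sieves detecting $p$ and $p_1$ prime simultaneously, with the convolution bound of Theorem~\ref{thm:conv} to keep the sieve factors linear—is the right one, and the reduction to $(q,Q)=1$ matches the paper. But the fiber-by-fiber decomposition, where you fix $b$ and run the sieve in the single variable $a\le\sqrt x$, cannot possibly reach the full range $Q\le x^{2/3-o(1)}$; indeed the remark immediately after Lemma~\ref{lem:fixed} emphasizes that the whole point is that the bound holds for large moduli $Q>x^{1/2}$. For a fixed $b$, the congruence $q(a^2+b^2)+4\equiv 0\pmod Q$ confines $a$ to at most $\tau(Q)$ classes modulo $Q$, so your ``base sequence'' $\{a\le\sqrt x:\ a\in S_b\ (\mathrm{mod}\ Q)\}$ has $|S_b|(\sqrt x/Q+O(1))$ elements, which is $O(\tau(Q))$ as soon as $Q>\sqrt x$. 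Your proposed main term $|S_b|\sqrt x\,Q^{-1}(\log x)^{-2}$ is then already $<1$, while each sieve remainder $r_{b,d}$ is a rounding error of size $\gg 1$; summing over the $D^{O(1)}$ sieve moduli and the $\varepsilon\sqrt x$ fibers in $b$, the remainder overwhelms the main term by a power of $\log x$. In other words, the fiber count $\#\{a\le\sqrt x:\ a\equiv a_0\ (\mathrm{mod}\ m)\}$ has no extractable main term once $m>\sqrt x$, and no exponential-sum or Hecke-type estimate can remedy that — the issue is not lack of cancellation but lack of points.

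The decisive move in the paper's proof, and the one your decomposition forfeits, is precisely to \emph{not} fix $b$. After switching the order of summation, the inner object is $\sum_{n\le x,\ n\equiv\gamma\pmod{Qde}} r_\varepsilon(n)$: the genuinely two-dimensional count of Gaussian integers $\alpha=a+ib$ in the sector with $N(\alpha)\le x$ and $N(\alpha)$ in a prescribed residue class modulo $Qde$. That count is of size $\asymp \varepsilon x\,\eta_\gamma(Qde)/(Qde)^2$, which remains a positive power of $x$ for $Qde<x^{2/3}$; this is what Proposition~\ref{prop:gauss-integ-sect} supplies, via Tolev's bound for the complete exponential sums attached to $\alpha_1^2+\alpha_2^2\equiv a\ (\mathrm{mod}\ q)$ on boxes (again, jointly in $(\alpha_1,\alpha_2)$, not along a fiber). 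You correctly intuit that ``equidistribution of integral lattice points in narrow sectors with norm in arithmetic progressions to large moduli'' is the key technical input, but your claimed equivalence between that 2D statement and the 1D sum $\sum_{a\ (\mathrm{mod}\ m)}e_m(\xi a)\mathbf 1_{F_1(a)F_2(a)\equiv 0}$ does not hold: fixing $b$ discards the extra power of $x$ that makes large $Q$ tractable, and no finite-field estimate recovers it. The fix is to sieve the quantity $r_\varepsilon(n)$ directly (as the paper does), so that the remainder control is 2D from the start.
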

\begin{remark}
The point of the lemma is that it holds for large moduli
$Q>x^{1/2}$. To accomplish this we use asymptotic estimates for
Gaussian integers $\alpha=a+ib$ with $N(\alpha) \le x$ and $N(\alpha)
\equiv a \pmod Q$ \textbf{and} $|\arg(\alpha)|\le \varepsilon$, where
$N(\alpha)=\alpha \overline{\alpha}$ is the norm of $\alpha$.
Details are given in Appendix, cf. section \ref{sec:gauss-integ-sect}.
\end{remark}

The main step in the proof of Proposition \ref{prop:sieve} is the following lemma.

\begin{lemma} \label{lem:reverse} 
Let $z=x^{\frac12-\delta}$ where $\delta>0$ 
is sufficiently small and $ y = x^{\eta}$ with $0< \eta < 1/3$. 
There exists a constant $C_2>0$ such that
\[
\sum_{\substack{n \le x \\ Q_1|Q_0n+4 \\ (\frac{Q_0n+4}{Q_1},P(y)P_3(y,z))=1}} (1_{ \mathcal P_{\varepsilon}} \ast 1_{ \mathcal P_{\varepsilon}'})(n) = \sum_{\substack{n \le x \\ Q_1|Q_0n+4 \\(\frac{Q_0n+4}{Q_1},P(y))=1 \\ p|Q_0n+4 \Rightarrow p \equiv 1 \pamod 4}} (1_{ \mathcal P_{\varepsilon}} \ast 1_{ \mathcal P_{\varepsilon}'})(n) +R
\]
where
\[
0 \le R \le C_2 \cdot \varepsilon^2 \cdot \frac{\delta^{3/2}}{\eta^{1/2}} \frac{Q_0}{\varphi(Q_0)} \cdot \frac{x \log \log x}{\varphi(Q_1)(\log x)^2}.
\]
\end{lemma}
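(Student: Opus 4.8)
The plan is to interpret the left-hand side as a sifted quantity and show that sieving out the primes $p \equiv 3 \pmod 4$ in the range $[y,z]$ from $(Q_0n+4)/Q_1$ removes exactly those $n$ for which $Q_0n+4$ has a prime factor $\equiv 3\pmod 4$ in that range; after that sieving the surviving $n$ with $(\,(Q_0n+4)/Q_1,P(y))=1$ automatically satisfy $p\mid Q_0n+4 \Rightarrow p\equiv 1\pmod 4$ up to a controlled error, because (i) the small prime factors below $y$ have already been excluded and (ii) a prime factor $p\equiv 3\pmod 4$ exceeding $z = x^{1/2-\delta}$ can divide $Q_0n+4 \le x^{1+o(1)}$ at most to the first power together with a cofactor $\le x^{1/2+\delta+o(1)}$, and such $n$ are rare. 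Concretely, $R$ is the number of $n\le x$ with $Q_1\mid Q_0n+4$, $(\,(Q_0n+4)/Q_1,P(y))=1$, weighted by $(1_{\mathcal P_\varepsilon}\ast 1_{\mathcal P_\varepsilon'})(n)$, such that $Q_0n+4$ has a prime factor $p\equiv 3\pmod 4$ with $p > z$ (the range $[y,z]$ being handled by the explicit sieve, and factors in $[3,y)$ being killed by the coprimality condition). The positivity $R\ge 0$ is immediate since the right-hand sum imposes strictly more conditions.

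For the upper bound on $R$, I would write $n = p_1 p_2$ with $p_1\in\mathcal P_\varepsilon$, $p_2\in\mathcal P_\varepsilon'$ (so $p_2\le x^{1/9}$), and for each such $n$ with $p\mid Q_0n+4$, $p\equiv 3\pmod 4$, $p>z$, factor $Q_0n+4 = Q_1 \cdot p \cdot k$ where $k \le (Q_0 x + 4)/(Q_1 z) \ll x^{1/2+\delta+o(1)}/Q_1$. The key point is that $p$ and $p_1$ are then both large primes constrained by the linear relation $Q_0 p_1 p_2 + 4 \equiv 0 \pmod{Q_1 p}$, i.e. $p_1$ lies in a fixed residue class modulo $Q_1 p k$ once $p_2, k$ are fixed; this is precisely the setup of the Brun--Titchmarsh bound for primes in narrow sectors, Lemma \ref{lem:fixed}, applied with modulus $Q = Q_1 p$ (which can exceed $x^{1/2}$, exactly the regime the lemma is designed for) and the auxiliary modulus $q$ accounting for $Q_0$. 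Summing the bound of Lemma \ref{lem:fixed} over the admissible $p_2 \le x^{1/9}$, over the cofactors $k$, and over the primes $p\in(z, (Q_0x+4)/Q_1]$ with $p\equiv 3\pmod 4$, and using $\sum_{p>z}1/p \ll 1/\log z \ll 1/\log x$ together with $\sum_{p_2\le x^{1/9}} 1/p_2 \ll \log\log x$, should produce a bound of the shape $\varepsilon^2 \cdot \frac{Q_0}{\varphi(Q_0)}\cdot\frac{x}{\varphi(Q_1)(\log x)^2}$ times $\log\log x$ and times a factor coming from the length of the $p$-range; the $\delta^{3/2}$ gain is extracted from the fact that $p$ ranges over $(z,\infty)$ with $z = x^{1/2-\delta}$, so there is room of relative size $O(\delta)$ in the logarithmic scale, and $\eta^{-1/2}$ enters through the density $V(y)^{-1}\asymp (\log y)^{1/2} \asymp (\eta\log x)^{1/2}$ lost in removing the $P(y)$-sifting condition on $(Q_0n+4)/Q_1$ when converting the two-variable prime count into the claimed main-term normalization.

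The main obstacle I anticipate is bookkeeping the modulus in the application of Lemma \ref{lem:fixed}: one must ensure that $Q_1 p k \le x^{2/3 - o(1)}$ (so that the lemma applies), which forces a careful split — for $p$ only slightly larger than $z$ the cofactor $k$ can be as large as $x^{1/2+\delta}$, and then $Q_1 p k$ could approach $x^{1+O(\delta)}$, which is too large. The resolution is that when $p$ is that large, $k$ is correspondingly small, and one should instead pin down $p$ in a residue class modulo $Q_1 p_1 k$ (viewing $p$, not $p_1$, as the "free" large prime) and apply a Brun--Titchmarsh bound in the $p$ variable; alternatively, since $p_2 \le x^{1/9}$ and $p_1 \ge p/x^{1/9} \cdot(\text{const})$, one can always arrange that the sieved prime variable has modulus below $x^{2/3}$ by choosing which of $\{p_1, p\}$ to treat as running over a progression according to whether $p \lessgtr x^{5/12}$, say. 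A secondary technical point is keeping the narrow-sector condition $0 \le \arctan(b_1/a_1)\le\varepsilon$ on $p_1$ through this argument, but this is exactly what the sector-aware Lemma \ref{lem:fixed} is built to preserve, so it costs only the advertised factor $\varepsilon$ per prime and hence $\varepsilon^2$ overall once combined with the (unconditional, sector-free) count of $p_2$.
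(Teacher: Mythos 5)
Your proposal correctly identifies the main tool (Lemma~\ref{lem:fixed}) and the overall shape of the argument — bound $R$ by counting $n=p_1p_2$ for which $(Q_0n+4)/Q_1$ has a large prime factor $\equiv 3\pmod 4$, and use the sector-aware Brun--Titchmarsh bound in $p_1$. But there is a genuine gap, and it is precisely the difficulty you flag at the end: you have no handle on the cofactor $k$ in the factorization $Q_0n+4 = Q_1\,p\,k$, which can be as large as $x^{1/2+\delta}$, so the modulus you feed into Lemma~\ref{lem:fixed} can blow past $x^{2/3}$, and the proposed workarounds don't clearly rescue the argument (in particular, you can't apply Lemma~\ref{lem:fixed} with $p\equiv 3\pmod 4$ as the free variable, since that lemma counts primes that are sums of two squares in a narrow sector).

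The missing idea is a \emph{parity} observation that the paper exploits: since all prime factors of $Q_0n+4$ are $\equiv 1\pmod 4$ except possibly those divisible by primes $\equiv 3\pmod 4$, and $(Q_0n+4)/Q_1 \equiv 1\pmod 4$, the number of prime factors $\equiv 3\pmod 4$ of $(Q_0n+4)/Q_1$ (with multiplicity) must be \emph{even}. After removing the sifting conditions, the surviving primes $\equiv 3\pmod 4$ all exceed $z=x^{1/2-\delta}>x^{1/4}$, and $(Q_0n+4)/Q_1 \le x^{1+o(1)}$, so there can be at most three — hence \emph{exactly two}. This forces $(Q_0n+4)/Q_1 = a\,q_1q_2$ with $q_1,q_2>z$ primes $\equiv 3\pmod 4$ and, crucially, a \emph{tiny} cofactor $a \le 2Q_0x/(Q_1z^2) \ll x^{2\delta+o(1)}$. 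With $a$ this small the modulus $aq_1Q_1$ stays well under $(x/p_2)^{2/3-o(1)}$, and both savings fall out cleanly: $\sum_{z<q_1\le\sqrt{2Q_0x/(aQ_1)}}1/q_1 \ll \delta$, and $\sum_{a\le x^{2\delta+o(1)},\,(a,P(y))=1}b(a)/\varphi(a)\ll\sqrt{\log(x/z^2)/\log y}\ll\sqrt{\delta/\eta}$, giving the $\delta^{3/2}/\eta^{1/2}$. Your accounting for those factors (``room of size $O(\delta)$ in the $p$-range'' and ``$V(y)^{-1}\asymp(\log y)^{1/2}$'') is in fact backwards on the $\eta$-power — $V(y)^{-1}\asymp\sqrt{\eta\log x}$ would give $\eta^{+1/2}$, not $\eta^{-1/2}$ — which is a symptom of the missing parity step: the $\eta^{-1/2}$ genuinely comes from the $y$-rough cofactor $a$ being supported only up to $x^{2\delta}$, not from a $V(y)$ normalization.
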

\begin{proof}
By construction for $\ast 1_{ \mathcal P_{\varepsilon}'})(n) \neq 0$, $Q_0n+4 \equiv 1 \pmod 4$ and $Q_1 \equiv 1 \pmod 4$ so that $(Q_0n+4)/Q_1 \equiv 1 \pmod 4$ and must have an even number of prime factors which are congruent to $3 \pmod 4$. Since $z>x^{1/4}$ the integers which contribute to $R$ must have precisely two such prime factors. 
Dropping several conditions on the integers $n$ which contribute to
$R$, it follows  that $R$ is bounded by the number of integers
$n=p_1p_2 \le x$, $( 1_{ \mathcal P_{\varepsilon}} \ast 1_{ \mathcal
  P_{\varepsilon}'})(n) \neq 0$ such that $(Q_0n+4)/Q_1=aq_1q_2$ where
$b(a)=1$, $(a,P(y))=1$, $q_1 \equiv q_2 \equiv 3 \pmod 4$ and
$q_1,q_2$ are primes with $z< q_1, q_2 \le 2Q_0x/Q_1$ so $a \le
2Q_0x/(Q_1z^2)$. By symmetry, it suffices to consider the terms with
$q_1 \le q_2$. We get that
\begin{equation} \label{eq:Rfirst}
R \le 2  \sum_{p_2 \le x^{1/9} } 1_{\mathcal P_{\varepsilon}'}(p_2)
\sum_{\substack{ a \le \frac{2Q_0 x}{Q_{1}z^2} \\ (a,P(y))=1}} b(a)
\sum_{z< q_1 \le \sqrt{\frac{2Q_0 x}{aQ_1}}}
\sum_{q_1  \le q_2 \le
  2Q_0x/Q_1} \sum_{\substack{p_1 \le x/p_2 \\ Q_0p_1p_2+4=aq_1q_2Q_1}}
1_{\mathcal P_{\varepsilon}}(p_1). 
\end{equation}
Applying Lemma \ref{lem:fixed} with $q=Q_0p_2$
and $Q=aq_1Q_1$ 
\begin{equation} \label{eq:primebd}
 \sum_{\substack{p_1 \le x/p_2 \\ Q_0p_1p_2+4=aq_1q_2Q_1}} 1_{\mathcal P_{\varepsilon}}(p_1) \ll \varepsilon \frac{Q_0}{\varphi(Q_0)}\frac{x}{ \varphi(aQ_1) q_1p_2 (\log x)^2}.
\end{equation}
Note that $x/p_2 \ge x^{8/9}$ and
$Q_0p_2, aq_1Q_1 \le \left( \frac{x}{p_2}\right)^{2/3-o(1)}$, for $\delta>0$ sufficiently
small so the application of Lemma \ref{lem:fixed} is valid.

We claim that
\begin{equation} \label{eq:Rbd4}
\sum_{\substack{a \le \frac{2Q_0x}{Q_1z^2} \\ (a,P(y))=1}} \frac{b(a)}{\varphi(a)}  \ll  \sqrt{\frac{\log x/z^2}{\log y}},
\end{equation}
which we will justify below.
Additionally,
\begin{equation} \label{eq:Rbd5}
\sum_{z<q_1 \le \sqrt{\frac{2Q_0 x}{aQ_1}}} \frac{1}{q_1}
\sim \log \frac{\log \sqrt{\frac{2Q_0 x}{aQ_1}}}{\log z}
\ll \frac{\log \frac{ x}{z^2}}{\log z}+\frac{\log Q_0}{\log z}
\ll \frac{\log \frac{ x}{z^2}}{\log z}
\ll \delta.
\end{equation}
Therefore, using \eqref{eq:primebd}, \eqref{eq:Rbd4}, and \eqref{eq:Rbd5} in  \eqref{eq:Rfirst} we conclude that
\[
\begin{split}
 R \ll & \varepsilon  \cdot \frac{Q_0}{\varphi(Q_0)} \cdot \frac{x \log x/z^2}{\varphi(Q_1) (\log x)^2 \log z}  \sqrt{\frac{\log x/z^2}{\log y}}  \sum_{p_2 \le x^{1/9} } \frac{  1_{\mathcal P_{\varepsilon}'}(p_2)}{p_2} \\
 \ll & \varepsilon^2 \cdot  \frac{\delta^{3/2}}{\eta^{1/2}}\cdot \frac{ Q_0}{  \varphi(Q_0)} \cdot \frac{x \cdot  \log \log x }{\varphi(Q_1) (\log x)^2}  
 \end{split}
\]
as desired.

It remains to justify \eqref{eq:Rbd4}. Let $F(n)$ be the completely
multiplicative function defined by $F(p)=1$ if $p \ge y$ and zero
otherwise. Then for all $t \ge y$, it follows from basic estimates for
multiplicative functions (see (1.85) of \cite{IK}) that 
\[
\begin{split}
\sum_{\substack{n \le t \\ (n,P(y))=1}} b(n) \frac{n}{\varphi(n)} \le& \sum_{\substack{n \le t }} b(n) \frac{n}{\varphi(n)} F(n) \\
\ll & \frac{t}{\log t} \prod_{ p \le t} \left(1+ \frac{b(p)F(p)}{p-1} \right) 
\ll \frac{t}{\sqrt{\log t \log y}}.
\end{split}
\]
For $1 \le t \le y$ the sum on the LHS is empty so the bound is true in that case as well. Hence, $\eqref{eq:Rbd4}$ follows from this estimate along with partial summation.
\end{proof}

\begin{proof}[Proof of Proposition \ref{prop:sieve}] 
  Let $\delta$ be sufficiently small in terms of $\eta$, $C_1$ and $C_2$.  Applying
  the inequality \eqref{eq:lower} for a lower bound sieve (also recall
  our notation \eqref{eq:sievenotation}) along with
  Lemmas \ref{lem:switch} and \ref{lem:lbsieve}, using a lower bound sieve to take care of the condition
 $( \frac{Q_0n+4}{Q_1} , P(y)) = 1$, we have
  that
\begin{equation} \label{eq:sievelast}
\begin{split}
\sum_{\substack{n \le x \\ Q_1 |Q_0n+4 \\ (\frac{Q_0n+4}{Q_1},P(y)P_3(y,z))=1 }} (1_{ \mathcal P_{\varepsilon}} \ast 1_{ \mathcal P_{\varepsilon}'})(n)
\ge& \sum_{\substack{ n \le x \\ Q_1 |Q_0n+4\\ (Q_0n+4,P_3(y,z))=1}} (1_{ \mathcal P_{\varepsilon}} \ast 1_{ \mathcal P_{\varepsilon}'})(n) (\lambda^- \ast 1)\left(\frac{Q_0n+4}{Q_1}\right) \\
=&   \sum_{\substack{ n \le x \\ Q_1 |Q_0n+4\\ (Q_0n+4,P_3(y,z))=1}} (1_{ \mathcal P_{\varepsilon}} \ast 1_{ \mathcal P_{\varepsilon}'})(n) (\lambda^+ \ast 1)\left(\frac{Q_0n+4}{Q_1}\right)\\
&\qquad \qquad +O\left(\varepsilon^2 \eta^{1/(4\eta^{1/2})-1} \frac{Q_0}{\varphi(Q_0)} \frac{x \log \log x}{\varphi(Q_1)(\log x)^2} \right) \\
\ge& C_1 \frac{\varepsilon^2 \delta^{1/2}}{\eta^{1/2}} \frac{Q_0}{\varphi(Q_0)} \frac{x \log \log x}{\varphi(Q_1)(\log x)^2}\left(1+O\left( \frac{ \eta^{\frac{1}{4\eta^{1/2}}-\frac12}}{\delta^{1/2}} \right)\right) .
\end{split}
\end{equation}
 Choosing $\eta$ sufficiently small in terms of $\delta$ (which is fixed) the $O$-term above is $ \le 1/2$ in absolute value.
Therefore, by \eqref{eq:sievelast} along with Lemma \ref{lem:reverse} it follows that
\begin{equation} \notag
\begin{split}
\sum_{\substack{n \le x \\ Q_1 |Q_0n+4 \\ (\frac{Q_0n+4}{Q_1},P(y)P_3(y,z))=1 \\ p|Q_0n+4 \Rightarrow p \equiv 1 \pamod 4 }} (1_{ \mathcal P_{\varepsilon}} \ast 1_{ \mathcal P_{\varepsilon}'})(n) 
 \ge \left(\frac{C_1}{2} \frac{\varepsilon^2 \delta^{1/2}}{\eta^{1/2}}-\frac{C_2 \varepsilon^2 \delta^{3/2}}{\eta^{1/2}} \right) \frac{Q_0}{\varphi(Q_0)} \frac{x \log \log x}{\varphi(Q_1)(\log x)^2}.
\end{split}
\end{equation}
The term $\left(\frac{C_1}{2} \delta^{1/2}-C_2  \delta^{3/2} \right)$
is positive for  $\delta$ sufficiently small in terms of $C_1$ and
$C_2$. Also $b(Q_0n+4) =1$
for $n$ such that all the prime factors of $Q_0n+4$ are congruent to
$1 \pmod 4$. This 
completes the proof.
\end{proof}
\section{Truncating the spectral equation} \label{sec:truncate}


In this section we show that it is possible to achieve a very short truncation of the spectral equation which holds for almost all new eigenvalues.

\begin{theorem} \label{thm:spectral} 
Let $A \ge 1$. Then for
  $B=B(A)$ sufficiently large  we have for every eigenvalue $\lambda_n
  \in \Lambda_{new} \cap [1,x]$ except those outside an exceptional
  set of size $O(x/(\log x)^A)$ that 
\begin{equation} \label{eq:initial2}
\sum_{m:|m-n| \le \frac{n}{x} (\log x)^B} \frac{r(m)}{m-\lambda_n}
=
\begin{cases}
\pi \log \lambda_n+O(1)&  \text{ in the weak coupling quantization},\\
\frac{1}{\alpha}+O(1) & \text{ in the strong coupling quantization}.
\end{cases}
\end{equation}

\end{theorem}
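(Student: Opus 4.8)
The plan is to start from the full spectral equation \eqref{eq:spectral2} (weak coupling) or \eqref{eq:spectral} (strong coupling), and show that the tail of the sum $\sum_m r(m)(\tfrac{1}{m-\lambda_n} - \tfrac{m}{m^2+1})$ over $m$ with $|m-n| > \tfrac{n}{x}(\log x)^B$ contributes only $O(1)$ for all but $O(x/(\log x)^A)$ eigenvalues $\lambda_n \le x$. First I would separate the range of $m$ into dyadic blocks $|m - n| \asymp 2^k \cdot \tfrac{n}{x}(\log x)^B$ and also a ``far'' range, say $|m - n| \ge n/2$, where the summand $\tfrac{1}{m - \lambda_n}$ is comparable to $1/m$; in the far range one uses the elementary identity $\tfrac{1}{m-\lambda} - \tfrac{m}{m^2+1} = O\!\big(\tfrac{1+\lambda}{m^2}\big)$ together with $\sum_{m} r(m)/m^2 < \infty$ (and the crude bound $\sum_{m \le x} r(m) \ll x$) to see that this part is absolutely $O(1)$ uniformly in $\lambda_n \asymp n$. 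This reduces matters to the ``near'' dyadic blocks, where the content is purely about sums of $r(m)$ in short intervals near $n$.

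The core estimate I would aim for is: for a dyadic parameter $H$ with $1 \le H \le n$,
\[
S_n(H) := \sum_{\substack{n < m \le n + H \\ \text{or } n - H \le m < n}} \frac{r(m)}{m - \lambda_n}
\ll \frac{1}{H}\Big( \max_{0 \le t \le H} \big| \sum_{n < m \le n+t} r(m) - \tfrac{\pi t}{?} \big| \Big) + (\text{main term}),
\]
more precisely, by partial summation (Abel summation) against the smooth weight $\tfrac{1}{m-\lambda_n}$, one reduces $S_n(H)$ to the behavior of the partial sums $\sum_{|m - n| \le t} r(m)$ for $t \le H$. Writing $r = 1 \ast \chi$ with $\chi$ the non-principal character mod $4$ (so $r(m) = \sum_{d \mid m} \chi(d)$), one has $\sum_{m \le Y} r(m) = \pi Y + O(Y^{1/2})$ globally, but what is needed is a short-interval version with a \emph{good mean-square bound over $n$}. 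I would establish, via a second moment estimate for the Dedekind zeta function $\zeta_{\Q(i)}(s)$ on vertical lines (as the paper's outline announces), a bound of the shape
\[
\sum_{n \le x} \Big| \sum_{n < m \le n + h} r(m) - \pi h \Big|^2 \ll x \cdot h \cdot (\log x)^{O(1)}
\]
uniformly for $h$ in the relevant range (roughly $h \le x^{o(1)}$, but really $h$ around $(\log x)^{B}$ suffices for our smallest blocks; for larger $h$ up to $n$ a weaker power-saving suffices). Feeding this into Chebyshev, the set of $n \le x$ for which some dyadic block has an anomalously large short-interval sum has size $O(x / (\log x)^A)$ once $B = B(A)$ is taken large enough, because each unit of $\log$-saving in $B$ buys a corresponding power of $\log x$ in the exceptional-set bound, and there are only $O(\log x)$ dyadic blocks to union over.

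Assembling: on the complement of this exceptional set, every near-block with $|m-n| > \tfrac{n}{x}(\log x)^B$ contributes $O(1)$ after Abel summation (the ``expected'' contribution $\sum \pi /(m - \lambda_n)$ over a symmetric-ish dyadic block telescopes/cancels to $O(1)$ since $\int \tfrac{\pi\,dt}{t}$ over $[H, 2H]$ is bounded, and, summed over the $O(\log x)$ blocks out to $m \asymp n$, still only $O(\log \lambda_n)$ — which is precisely the $\pi \log \lambda_n$ main term appearing on the right), the far range contributes $O(1)$ as above, and the correction $\sum_m r(m)\tfrac{m}{m^2+1}$ over the truncated-away $m$ is $O(1)$ by absolute convergence. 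Subtracting the truncated-away part from \eqref{eq:spectral2} / \eqref{eq:spectral} leaves the stated truncated identity, with the $\pi \log \lambda_n$ term in weak coupling arising from the main terms of the near blocks interpolating between $\tfrac{n}{x}(\log x)^B$ and $n$, and $1/\alpha$ in strong coupling since there the equation is already restricted to $|m - \lambda| \le \lambda^{1/2}$ and we are merely shrinking the window further.

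The main obstacle is the short-interval second-moment estimate for $r(m)$ with \emph{explicit, uniform} control in both the interval length $h$ and the main term: one must handle the regime where $h$ is as small as a power of $\log x$, which is well below the square-root barrier, so one cannot use a pointwise short-interval asymptotic and genuinely needs the $L^2$ average over $n$ coming from contour-shifting $\zeta_{\Q(i)}$ and its second moment on $\mathrm{Re}(s) = 1/2$ (or a large-values/mean-value input of Montgomery–Huxley type for the associated Dirichlet series $\zeta(s)L(s,\chi)$). Keeping the dependence on $h$ sharp enough that summing the contributions of all $O(\log x)$ dyadic blocks does not destroy the $O(x/(\log x)^A)$ exceptional bound — equivalently, ensuring each block's exceptional set is $O(x/(\log x)^{A+2})$ or so — is the delicate bookkeeping step, and is where the choice $B = B(A)$ gets pinned down.
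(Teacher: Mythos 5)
Your overall strategy — the Selberg-type $L^2$ estimate for $\zeta_{\Q(i)}$ on the critical line, dyadic decomposition, Abel summation, and Chebyshev to control the exceptional set — is the same as the paper's (Lemmas \ref{lem:selberg} and \ref{lem:partial}), and you correctly identify that this is the heart of the matter. But there is a genuine error in how you obtain the $\pi\log\lambda_n$ main term, and the ``far range'' estimate does not work as stated.

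You claim that the far range $|m-n| \ge n/2$ contributes $O(1)$, using $\tfrac{1}{m-\lambda}-\tfrac{m}{m^2+1} = O\big(\tfrac{1+\lambda}{m^2}\big)$ and $\sum_m r(m)/m^2 < \infty$. This works for $m > 3n/2$, but for the far-left piece $1 \le m < n/2$ it gives $\sum_{m<n/2} r(m)\tfrac{\lambda}{m^2} \asymp \lambda$, which is useless. More importantly, the true contribution of that range is $-\pi\log\lambda_n + O(1)$, not $O(1)$: for $m \ll n$ one has $\tfrac{1}{m-\lambda_n}-\tfrac{m}{m^2+1} = -\tfrac{m}{m^2+1} + O(1/\lambda_n)$, and $\sum_{m<n/2}r(m)\tfrac{m}{m^2+1} = \pi\log n + O(1)$ by partial summation against the circle-problem asymptotic $\sum_{m\le Y}r(m)=\pi Y + O(Y^{1/3})$. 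This is exactly the paper's Lemma \ref{lem:trivial}, and it is the \emph{source} of the $\pi\log\lambda_n$ in the theorem. Your alternative explanation — that $\pi\log\lambda_n$ accumulates from $O(\log x)$ near dyadic blocks each contributing $O(1)$ — is internally inconsistent: an $O(1)$ bound per block can only give an upper bound $O(\log x)$, never the asymptotic $\pi\log\lambda_n + O(1)$. Moreover it contradicts your own (correct) observation that symmetric blocks around $n$ cancel: after writing $\sum_k \tfrac{r(n+k)}{k-\delta_n}$ and replacing $\tfrac{1}{k-\delta_n}$ by $\tfrac{1}{k}+O(\delta_n/k^2)$, the two-sided sum becomes $\sum_{k>0}\tfrac{r(n+k)-r(n-k)}{k}$, whose main term vanishes identically — which is precisely what lets the $L^2$ argument give $O(1)$ there for most $n$. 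The near range produces the error $O(1)$, not the main term.

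To repair the argument you should: (i) do the far-range estimate via partial summation against $A(Y)=\sum_{m\le Y}r(m)=\pi Y + O(Y^{1/3})$ (as in Lemma \ref{lem:trivial}), explicitly tracking the $-\pi\log\lambda_n$ that emerges from the lower endpoint $m\approx 1$; this deterministically reduces both spectral equations to the window $|m-n|\le \tfrac{n}{x}x^{1/2}$; (ii) in the remaining window, make the symmetrization $r(n+k)-r(n-k)$ explicit before applying the $L^2$/Chebyshev machinery, since it is this cancellation of the main term (not a dyadic telescoping) that makes the variance bound usable; (iii) treat the passage from $\tfrac{1}{k-\delta_n}$ to $\tfrac{1}{k}$ with some care, since it requires $\delta_n$ to be small — the paper first excludes $O(x/(\log x)^{B/2})$ values of $n$ to ensure $\delta_n<(\log x)^{B/2}$. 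With these corrections your proof would match the paper's.
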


The above theorem is proved by capturing cancellation in the spectral equation even at very small scales, for almost all new eigenvalues. This is done by showing that the average behavior of sums of $r(n)$ over even very short intervals is fairly regular.



\begin{lemma} \label{lem:selberg}
Let $x\ge3$  and $3 \le L \le x $. Also, let $h(x)=x/L$.  Then
\begin{equation} \label{eq:varbd}
\frac{1}{x}
\sum_{ \ell \le x} \bigg| \sum_{\ell \le n \le \ell+h(\ell)} r(n)-\pi h(\ell) \bigg|^2 \ll h(x) (\log x)^2.
\end{equation}
\end{lemma}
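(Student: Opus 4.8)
The plan is to exploit the classical identity $r(n) = 4\sum_{d\mid n}\chi(d)$, where $\chi$ is the non-principal character mod $4$, which reduces the problem to understanding the variance of $\sum_{\ell \le n \le \ell + h(\ell)}\sum_{d\mid n}\chi(d)$. Writing $n = dm$ and organizing the sum by $d$, one sees that $\sum_{\ell \le n \le \ell+h(\ell)} r(n) = 4\sum_{d}\chi(d)\left(\#\{m : \ell/d \le m \le (\ell+h(\ell))/d\}\right)$. The main term $\pi h(\ell)$ comes from $4\sum_d \chi(d)\, h(\ell)/d = \pi h(\ell) + O(1)$ (using $\sum_d \chi(d)/d = \pi/4$ and truncating the tail), while the fluctuations come from the sawtooth terms $\psi(\ell/d) - \psi((\ell+h(\ell))/d)$ where $\psi(t) = \{t\} - 1/2$. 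So the task becomes bounding the mean square over $\ell \le x$ of a sum of sawtooth functions; a convenient route is to pass to the Dedekind zeta function $\zeta_{\Q(i)}(s) = \zeta(s)L(s,\chi)$ and use a Perron/contour-shift argument together with a mean-value (second moment) estimate for $\zeta_{\Q(i)}$ on vertical lines — this is the tool the introduction flags (``uses a second moment estimate of the Dedekind zeta-function'').

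Concretely, I would first reduce to a smoothed version: replace the sharp cutoff $\ell \le n \le \ell + h(\ell)$ by a smooth weight (or handle the sharp cutoff directly via a truncated Perron formula), so that
\[
\sum_{\ell \le n \le \ell+h(\ell)} r(n) - \pi h(\ell)
= \frac{1}{2\pi i}\int_{(c)} \zeta_{\Q(i)}(s)\,\frac{(\ell+h(\ell))^{s} - \ell^{s}}{s}\,ds + (\text{error}),
\]
then shift the contour to $\Re s = 1/2$ picking up the pole at $s=1$ (which produces exactly the main term, with residue giving the $\pi h(\ell)$) plus the error from truncation. Squaring and integrating over $\ell \in [1,x]$, and expanding the square, the diagonal contribution is controlled by $\int_{-T}^{T} |\zeta_{\Q(i)}(1/2+it)|^2\,\frac{dt}{1+|t|^2}$ weighted against the Fourier-type kernel coming from $|(\ell+h(\ell))^{1/2+it} - \ell^{1/2+it}|^2$; since $h(\ell) = \ell/L$, this kernel is of size $\asymp \min(1, |t|^2 h(\ell)^2/\ell^2) = \min(1,|t|^2/L^2)$ times $\ell$, and the known bound $\int_0^T |\zeta_{\Q(i)}(1/2+it)|^2\,dt \ll T(\log T)^2$ then yields, after integrating in $\ell$, a total of size $\ll x\cdot (x/L)(\log x)^2 = x\, h(x)(\log x)^2$, which is the claimed bound.

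The main obstacle is the interchange of the $\ell$-integration with the contour integral and making the off-diagonal/error terms genuinely negligible: the sharp cutoff in $n$ forces a careful truncated Perron step whose error must be beaten down, and one must be honest about the dependence on $L$ (i.e. that the estimate degrades correctly as $L$ grows, so that it is nontrivial even for $L$ close to $x$, where $h(x)$ is as small as $1$). A clean way around the sharp-cutoff technicality is to bound $\sum_{\ell \le n \le \ell+h(\ell)} r(n)$ above and below by smoothed sums with slightly perturbed windows and control the difference by $\sum_{\ell \le x}\sum_{|n-\ell|\ll h}\mathbf{1}$-type estimates using $\sum_{n\le t} r(n) = \pi t + O(t^{1/3})$; after that, the second-moment bound for $\zeta_{\Q(i)}$ on the critical line does all the remaining work. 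I would also record that the implied constant is absolute, as the statement requires.
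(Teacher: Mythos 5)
Your proposal follows essentially the same route as the paper: Perron's formula applied to $\zeta_{\mathbb Q(i)}(s)=\zeta(s)L(s,\chi_4)$, a contour shift to $\Re s = 1/2$ picking up the pole at $s=1$ for the main term $\pi h$, then a Plancherel-type argument converting the mean square over $\ell$ into the weighted second moment $\int |\zeta_{\mathbb Q(i)}(1/2+it)|^2 |w(1/2+it)|^2\,dt \ll L^{-1}(\log L)^2$, and finally a comparison step to pass from the continuous average to the discrete sum over integers $\ell$. Your opening paragraph about $r(n)=4\sum_{d\mid n}\chi(d)$ and sawtooth terms is a slight detour you don't ultimately use, and the paper handles the discrete-to-continuous step a bit more concretely (picking a minimizer $v_\ell\in[\ell,\ell+1]$ and bounding $|F(\ell)-F(v_\ell)|$ by $r(\ell)+r(\ell^*)+1$), but the core argument and the key ingredient (the second moment of $\zeta_{\mathbb Q(i)}$ on the critical line) are the same.
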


\begin{proof}
We repeat a classical argument, which was used by Selberg \cite{Selberg} to study primes in short intervals. Consider
\[
\zeta_{\mathbb Q(i)}:= \frac14 \sum_{n \ge 1} \frac{r(n)}{n^s}=L(s,\chi_4) \zeta(s)  \qquad \tmop{Re}(s)>1,
\]
where $L(s,\chi_4)$ is the Dirichlet $L$-function attached to the non-trivial Dirichlet character $\pmod 4$, and $\zeta(s)$ denotes the Riemann zeta-function. Note $L(1, \chi_4)=\pi/4$.
Applying Perron's formula, then shifting contours to
$\tmop{Re}(s)=1/2$ (which is valid since the it is well-known that\fixmehide{Check, SL this is the convexity bound, we could add a reference} $\zeta_{\mathbb
  Q(i)}(\sigma+it) \ll t^{1-\sigma+o(1)}$, for $0 \le \sigma \le 1$)
and picking up a simple pole at $s=1$ we see that for $v,v+v/L \notin
\mathbb Z$
\[
\begin{split}
\sum_{v \le n \le v+\frac{v}{L}} r(n)=& \frac{1}{2\pi i} \int_{(2)}
4\zeta_{\mathbb Q(i)} (s) \frac{(v+\frac{v}{L})^s-v^s}{s} \, ds \\
=& 4 L(1,\chi_4) \cdot \frac{v}{L}+\frac{v^{1/2}}{2\pi} \int_{\mathbb R} 4 \zeta_{\mathbb Q(i)}(\tfrac12+it) \frac{(1+\frac{1}{L})^{\frac{1}{2}+it}-1}{\frac12+it} \cdot e^{i t \log v} \, dt.
\end{split}
\]
Notice that the integral on the RHS is a Fourier transform. Writing
$\nu=\log(1+\frac1L)$, making a change of variables $x=e^{\tau}$
and then applying Plancherel's Theorem yields
\[
\begin{split}
\frac{1}{x^2} \int_1^{x} \bigg(\sum_{v \le n \le v+\frac{v}{L}} r(n)- \pi \cdot \frac{v}{L} \bigg)^2 \, dv \le &  \int_{\mathbb R} \bigg(\sum_{e^{\tau} \le n \le e^{\tau+\nu}} r(n)- \pi \cdot \frac{e^{\tau}}{L} \bigg)^2  \frac{d\tau}{e^{\tau}} \\
=&\frac{8}{\pi} \int_{\mathbb R} |\zeta_{\mathbb Q(i)}(\tfrac12+it)|^2 |w_{\nu}(\tfrac12+it)|^2 \, dt
\end{split}
\]
where $w_{\nu}(s)=(e^{\nu s}-1)/s \ll \min\{ \nu, 1/(1+|t|) \}$ uniformly for $\frac14 \le \tmop{Re}(s) \le 1$.
To estimate the integral on the RHS we apply
the well-known bound 
\[
\int_0^{T} |\zeta_{\mathbb Q(i)} (\tfrac12+it)|^2 \, dt \ll T (\log T)^2
\]
(see the introduction of \cite{Mu}).
Hence we see that
\[
\begin{split}
\int_{\mathbb R} |\zeta_{\mathbb Q(i)}(\tfrac12+it)|^2 |w_{\nu}(\tfrac12+it)|^2 \, dt \ll &  \nu^2 \int_{|t| \le 1/\nu} |\zeta_{\mathbb Q(i)} (\tfrac12+it)|^2 \, dt+\int_{|t| \ge 1/\nu} |\zeta_{\mathbb Q(i)} (\tfrac12+it)|^2 \frac{dt}{t^2} \\
\ll & \nu (\log 1/\nu)^2 \ll \frac{1}{L} (\log L)^2.
\end{split}
\]
Combining the estimates above we conclude that for $h=h(x)=x/L$
\begin{equation} \label{eq:plancherel}
\frac{1}{x} \int_{x}^{2x} \bigg( \sum_{v \le n \le v+h(v)} r(n)-\pi h(v) \bigg)^2 \, dv \ll h(x) (\log x)^2.
\end{equation}


We will now bound the sum over integers $  \ell \le x$ on the LHS of
\eqref{eq:varbd} in terms of an integral over $1 \le v \le x$. Let 
\[
F(v)=\sum_{v \le n \le v+h(v)} r(n)- \pi h(v)
\]
and let $v_{\ell} \in [\ell, \ell+1]$ be a point where the minimum 
of
$|F(v)|$ 
on $[\ell, \ell+1]$ is achieved. Observe that
\[
F(\ell)=F(v_{\ell})+O\left( r(\ell)+r(\ell^*)+1 \right)
\]
where $\ell^*=\lfloor \ell+1+h(\ell+1) \rfloor $. Hence,
\[
\begin{split}
\frac{1}{x}\sum_{ \ell \le x} F(\ell)^2 &\ll
\frac{1}{x}\sum_{ \ell \le x} F(v_{\ell} )^2+\frac{1}{x}\sum_{\ell \le x }(r^2(\ell)+r^2(\ell^*))+1 \\ 
& \ll  
\frac{1}{x}\int_{1}^{x} F(x)^2 \, dx+ \log x \ll  h(x) (\log x)^2,
\end{split}
\]
where the last bound follows from \eqref{eq:plancherel}.

\end{proof}

\begin{lemma} \label{lem:partial}
Let $A \ge 3$ and $x,Y \ge 3$. Then
for all but $\ll x/(\log x)^{A}$ integers $m \in [ 1, x]$ we
have
\[
\bigg| \sum_{Y \frac{m}{x} < k \le x^{1/2} \frac{m}{x}} \frac{r(m+k)-r(m-k)}{k}
\bigg| \le \frac{(\log x)^{3A}}{\sqrt{Y}}.
\]
\end{lemma}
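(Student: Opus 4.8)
The plan is to bound the second moment over $m \le x$ of the dyadic pieces of the sum $\sum_{Ym/x < k \le x^{1/2}m/x} (r(m+k)-r(m-k))/k$, and then conclude by Chebyshev that the exceptional set is small. First I would split the range of $k$ into $O(\log x)$ dyadic blocks $K < k \le 2K$ with $Ym/x \le K \le x^{1/2}m/x$. On each block the sum $\sum_{K < k \le 2K}(r(m+k)-r(m-k))/k$ has size at most $(1/K)\bigl|\sum_{K<k\le 2K} r(m+k) - r(m-k)\bigr|$ plus partial-summation corrections, and the inner difference telescopes into a difference of two short sums $\sum_{m< n\le m+2K} r(n) - \sum_{m-2K< n \le m}r(n)$ (with a minor contribution from the $k\in(K, Ym/x]$ issue absorbed into the block where $K\asymp Ym/x$). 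Thus the key quantity to control on average over $m$ is $\bigl|\sum_{m< n\le m+h} r(n) - \pi h\bigr|$ for $h \asymp K$, i.e.\ precisely the object estimated in Lemma \ref{lem:selberg}; the two $\pi h$'s cancel when we take the difference $\sum_{m<n\le m+h}r(n) - \sum_{m-h<n\le m}r(n)$.

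Next, for a fixed dyadic scale $K$, Lemma \ref{lem:selberg} (applied with $L = x/K$, so $h(x)=K$, after checking the hypothesis $3\le L\le x$, which holds since $K$ ranges over $[Ym/x,\, x^{1/2}]$ and $m\le x$) gives
\[
\frac{1}{x}\sum_{\ell\le x}\Bigl|\sum_{\ell < n\le \ell+K} r(n) - \pi K\Bigr|^2 \ll K(\log x)^2,
\]
and likewise for the interval $(\ell-K,\ell]$. Hence the block contribution, weighted by $1/K$, has second moment $\ll K(\log x)^2 / K^2 = (\log x)^2/K$ on average, and since $K \ge Ym/x$, summing over the $O(\log x)$ dyadic blocks and using $\sum_K 1/K \ll 1/Y$ over the relevant range (the smallest block dominates), the full second moment of the sum in the lemma is
\[
\frac{1}{x}\sum_{m\le x}\Bigl|\sum_{Ym/x < k\le x^{1/2}m/x}\frac{r(m+k)-r(m-k)}{k}\Bigr|^2 \ll \frac{(\log x)^{O(1)}}{Y}.
\]
Finally, Chebyshev's inequality: the number of $m\le x$ for which the sum exceeds $(\log x)^{3A}/\sqrt{Y}$ in absolute value is at most $x \cdot \frac{(\log x)^{O(1)}/Y}{(\log x)^{6A}/Y} = x(\log x)^{O(1) - 6A} \ll x/(\log x)^A$ provided $A$ is large enough (and one can afford the crude power $3A$ in the statement precisely to swallow all the logarithmic losses from dyadic decomposition and from the $(\log x)^2$ in Lemma \ref{lem:selberg}).

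The main obstacle I anticipate is bookkeeping rather than a genuine difficulty: one must handle carefully the fact that the summation variable $m$ appears both in the summand $r(m\pm k)$ and in the endpoints $Ym/x$, $x^{1/2}m/x$, so the dyadic ranges of $k$ depend on $m$; the cleanest fix is to first dyadically localize $m \asymp M$ (another $O(\log x)$ loss, absorbed into the $3A$), on which block the $k$-range becomes essentially independent of $m$. A second point requiring care is partial summation: replacing $\sum_{K<k\le 2K}(r(m+k)-r(m-k))/k$ by $K^{-1}\sum_{K<k\le 2K}(r(m+k)-r(m-k))$ plus controlled error, and then recognizing the latter as a telescoped difference of short sums of $r$ to which Lemma \ref{lem:selberg} applies directly; the error terms from partial summation involve the same short-sum discrepancies at nearby points and are handled identically.
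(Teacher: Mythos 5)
Your proposal is essentially correct and uses the same core strategy as the paper: reduce to a second-moment bound on the quantity $R_m(t) = \sum_{1\le k\le t}(r(m+k)-r(m-k))$ via Lemma~\ref{lem:selberg}, then apply Chebyshev. The paper organizes this differently: rather than a dyadic decomposition in $k$, it does a single Abel summation giving two boundary terms plus $\int_{Ym/x}^{x^{1/2}m/x} R_m(t)/t^2\,dt$, bounds the integral's second moment by Cauchy-Schwarz (after the change of variables $t\mapsto t\,m/x$, which is what makes the contribution of $1/Y$ rather than $1/K$ come out cleanly), and uses the elementary inequality $(|a|+|b|+|c|)^2\le 9(a^2+b^2+c^2)$ to combine all three pieces into one Chebyshev application. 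Your dyadic version is the discrete analogue and buys nothing extra; if anything, it imports an additional $\log x$ from Cauchy-Schwarz over blocks, which is harmless given the generous $(\log x)^{3A}$ threshold.

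There is one inaccuracy to flag: Lemma~\ref{lem:selberg} controls short sums over intervals of length $h(\ell)=\ell/L$, which \emph{varies with} $\ell$ (this is forced by the Plancherel/mult-structure proof), not over intervals of a fixed length $K$. Your statement ``gives $\frac{1}{x}\sum_{\ell\le x}|\sum_{\ell<n\le\ell+K}r(n)-\pi K|^2\ll K(\log x)^2$'' therefore misquotes the lemma. In the paper this mismatch never arises: Abel summation produces $R_m(H\,\tfrac{m}{x})$ with argument proportional to $m$, which lines up exactly with $h(\ell)=\ell/L$. In your framework you would need to dyadically decompose $k$ at the \emph{rescaled} scales $K\,m/x$ (equivalently, substitute $k'=kx/m$ and decompose $k'$), which also makes the range of $k$ agree with the $m$-dependent endpoints $Ym/x$, $x^{1/2}m/x$ without a separate localization of $m$. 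This is the kind of bookkeeping issue you flag at the end, so I regard the proposal as correct in spirit, but this particular point is not optional: replacing the $m$-dependent window by a fixed one would require re-proving (a version of) Lemma~\ref{lem:selberg}.
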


\begin{proof}

Let
\[
R_m(t)=\sum_{1 \le k \le t} (r(m+k)-r(m-k)).
\]
It suffices to consider $m \in [x/(\log x)^A,x]$.
Hence, by summation by parts for each integer $m \in [x/(\log x)^A,x]$ we have that
\[
\begin{split}
\sum_{Y \frac{m}{x} < k \le x^{1/2} \frac{m}{x}} \frac{r(m+k)-r(m-k)}{k}=&\frac{R_m(x^{1/2} \frac{m}{x})}{x^{1/2} \frac{m}{x}}- \frac{R_m(Y \frac{m}{x})}{Y \frac{m}{x}}+\int_{Y \frac{m}{x}}^{x^{1/2} \frac{m}{x}} \frac{R_m(t)}{t^2} \, dt.
\end{split}
\]

Using this along with Chebyshev's inequality and the elementary
inequality $(|a|+|b|+|c|)^{2} \le 3^2
(a^2+b^2+c^2)$  
it follows that
\begin{equation} \label{eq:chebysel}
\begin{split}
& \# \left \{ \frac{x}{(\log x)^A} \le m \le x : \bigg| \sum_{Y \frac{m}{x} < k \le x^{1/3} \frac{m}{x}} \frac{r(m+k)-r(m-k)}{k} \bigg| \ge \frac{(\log x)^{3A}}{\sqrt{Y}} \right\} \\
& \le 9 \frac{Y}{(\log x)^{6A}} 
\sum_{ \frac{x}{(\log x)^A} \le m \le x} \left(\frac{R_m\left(x^{1/2} \frac{m}{x}\right)^2 (\log x)^{2A}}{x}+
 \frac{R_m\left(Y \frac{m}{x}\right)^2 (\log x)^{2A}}{Y^2}+ \left(\int_{Y \frac{m}{x}}^{x^{1/2} \frac{m}{x}} \frac{R_m(t)}{t^2} \, dt \right)^2
\right).
\end{split}
\end{equation}
In the integral we make a change of variables and apply the Cauchy-Schwarz inequality to get for each $m \in [x/(\log x)^A,x]$ that
\begin{equation} \label{eq:cauchyme}
\left(
\int_{Y \frac{m}{x}}^{x^{1/2} \frac{m}{x}} \frac{R_m(t)}{t^2} \, dt \right)^2\le \frac{(\log x)^{2A}}{Y} \int_{Y}^{x^{1/2}} \frac{1}{t^2} R_m\left( t\frac{m}{x}\right)^2 \, dt.
\end{equation}

Observe that
\[
R_m\left(H \frac{m}{x}\right)=\sum_{m \le n \le m+\frac{m}{x} H} r(n)-\sum_{m-\frac{m}{x} H \le n \le m} r(n).
\]
Hence, by Lemma \ref{lem:selberg} with $L=x/H$ (along with an analogue of this lemma for the second sum, which is proved in the same way) we get
\[
\frac{1}{x}\sum_{ m \le x}R_m\left(H \frac{m}{x}\right)^2 \ll H (\log x)^2,
\]
for $1 \le H \le x/3$. Using this bound and \eqref{eq:cauchyme} in \eqref{eq:chebysel} gives
\[
\begin{split}
 &\# \left \{ \frac{x}{(\log x)^A} \le m \le x : \bigg|\sum_{Y \frac{m}{x} < k \le x^{1/2} \frac{m}{x}} \frac{r(m+k)-r(m-k)}{k} \bigg| \ge \frac{(\log x)^{3A}}{\sqrt{Y}} \right\} \\
 &\qquad \qquad \ll \frac{Y \cdot x}{(\log x)^{4A}}\left(\frac{(\log x)^2}{x^{1/2}}+\frac{(\log x)^2}{Y}+\frac{(\log x)^3}{Y} \right) \ll \frac{x}{(\log x)^{4A-3}},
 \end{split}
\]
since we may assume $Y \le x^{1/2}$ otherwise the set on the LHS above is empty.
\end{proof}
Before proving the main result of this section we require the following technical lemma.

\begin{lemma}  \label{lem:trivial}  
Let $u,v$ be sufficiently large positive real numbers such that $v^{9/10} \le u \le 2v$. Let $t > 1$ be a real number, that is not an integer which is expressible as a sum of two squares, such that $|u-t|\le v^{1/3}$. Then
\[
\sum_{m:|m-u| > v^{\frac12}} r(m) \left(\frac{1}{m-t}-\frac{m}{m^2+1} \right)=-\pi \log t+O(1).
\]
\end{lemma}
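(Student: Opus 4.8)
The plan is to split the sum according to the size of $m$ relative to $u$, arranging that a single range of $m$ produces the main term $-\pi\log t$ and that the others contribute $O(1)$. First note that the hypotheses force $t\asymp u\asymp v$: since $v^{9/10}\le u\le 2v$ and $|u-t|\le v^{1/3}=o(u)$, we have $t=u(1+o(1))$, so $\log t=\log u+o(1)$, and the deleted window $\{m:|m-u|\le v^{1/2}\}$ lies strictly inside $(u/2,2u)$ because $v^{1/2}=o(u)$. Every $m$ in the sum has $|m-t|\ge|m-u|-|u-t|>v^{1/2}-v^{1/3}\gg v^{1/2}$, so there is no problem near $m=t$, and the tail converges absolutely since $\frac1{m-t}-\frac m{m^2+1}\ll t/m^2$ for $m\gg t$ while $\sum_m r(m)/m^2<\infty$. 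Throughout I use the Gauss circle estimate $R(Y):=\sum_{m\le Y}r(m)=\pi Y+E(Y)$, $E(Y)\ll Y^{1/3}$, together with its standard consequences $\sum_{m\le Y}r(m)/m=\pi\log Y+O(1)$ and $\sum_{m>Y}r(m)/m^2\ll 1/Y$.

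Using $\frac1{m-t}-\frac m{m^2+1}=\bigl(\frac1{m-t}-\frac1m\bigr)+\frac1{m(m^2+1)}$, the contribution of $\frac1{m(m^2+1)}$ is $\le\sum_{m\ge1}r(m)/m^3=O(1)$, so it remains to handle $\sum_{|m-u|>v^{1/2}}r(m)\bigl(\frac1{m-t}-\frac1m\bigr)$, which I break into $m\le u/2$, $u/2<m<2u$, and $m\ge 2u$. For $m\le u/2$ we have $|m-t|\gg u$, hence $\bigl|\sum_{m\le u/2}\tfrac{r(m)}{m-t}\bigr|\ll\tfrac1u R(u/2)\ll1$, while $\sum_{m\le u/2}\tfrac{r(m)}m=\pi\log(u/2)+O(1)=\pi\log t+O(1)$; thus this range contributes $-\pi\log t+O(1)$. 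For $m\ge 2u$ we have $|m-t|\gg m$, so $\bigl|\tfrac1{m-t}-\tfrac1m\bigr|=\tfrac t{m|m-t|}\ll\tfrac t{m^2}$ and $\sum_{m\ge2u}r(m)\tfrac t{m^2}\ll t/u\ll1$. In the middle range the $\tfrac1m$-part is negligible, $\sum_{u/2<m<2u}\tfrac{r(m)}m\ll\tfrac1u R(2u)\ll1$, so the entire lemma reduces to proving
\[
\Sigma:=\sum_{\substack{u/2<m<2u\\ |m-u|>v^{1/2}}}\frac{r(m)}{m-t}=O(1).
\]

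I expect this last bound to be the main obstacle: each dyadic block $|m-t|\asymp 2^k v^{1/2}$ of the deleted-window sum contributes only $O(1)$, but there are $\asymp\log v$ such blocks, so one must extract the cancellation between $m<t$ and $m>t$. Split $\Sigma$ over $I=(u/2,u-v^{1/2})$ and $J=(u+v^{1/2},2u)$; since $|u-t|\le v^{1/3}=o(v^{1/2})$, the point $t$ sits just to the right of $I$ and just to the left of $J$, with $\operatorname{dist}(t,\overline I)$ and $\operatorname{dist}(t,\overline J)$ both $\asymp v^{1/2}$. On each of $I,J$, partial summation against $R(x)=\pi x+E(x)$ replaces the sum by its integral: integrating $dE(x)=dR(x)-\pi\,dx$ by parts gives a boundary term $\ll u^{1/3}/v^{1/2}$ and an integral $\ll u^{1/3}\int\frac{dx}{(x-t)^2}\ll u^{1/3}/v^{1/2}$, both $\ll v^{-1/6}$ as $u^{1/3}\ll v^{1/3}$. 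Hence
\[
\Sigma=\pi\int_I\frac{dx}{x-t}+\pi\int_J\frac{dx}{x-t}+O(v^{-1/6})
=\pi\log\frac{(t-u+v^{1/2})(2u-t)}{(t-u/2)(u+v^{1/2}-t)}+O(v^{-1/6}),
\]
and, using $|u-t|\le v^{1/3}$, each of the four factors is (for $v$ large) within a factor $2$ of its leading value $v^{1/2}$, $u$, $u/2$, $v^{1/2}$ respectively, so the argument of the logarithm lies in a fixed compact subset of $(0,\infty)$ and $\Sigma=O(1)$. Assembling the three ranges then gives the total $-\pi\log t+O(1)$.
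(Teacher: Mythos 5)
Your proof is correct. Both your argument and the paper's rest on the same two ingredients: partial summation against the lattice-point count $A(x)=\sum_{n\le x}r(n)=\pi x+E(x)$ with the Gauss circle bound $E(x)\ll x^{1/3}$, and the observation that $|u-t|\le v^{1/3}=o(v^{1/2})$ keeps $t$ comfortably inside the excluded window, which is what makes the near-$t$ contribution cancel. The organization, however, is genuinely different. The paper notices that $f_t(x):=\log\frac{|x-t|}{\sqrt{x^2+1}}$ satisfies $f_t'(x)=\frac{1}{x-t}-\frac{x}{x^2+1}$ and does a \emph{single} partial summation on the two rays $[1,u-v^{1/2}]$ and $[u+v^{1/2},\infty)$: the main term $-\pi\log t$ appears directly as $-\pi f_t(1^-)$, the boundary difference $f_t(u-v^{1/2})-f_t(u+v^{1/2})$ is $O(1)$ because $|u-t|\ll v^{1/2}$, and the $E$-contributions give $O(1+u^{1/3}/v^{1/2})$. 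You instead peel off $\frac{1}{m(m^2+1)}$, split the remaining weight's range into $m\le u/2$, $u/2<m<2u$, $m\ge 2u$, extract the main term from $\sum_{m\le u/2}r(m)/m$, and isolate the critical cancellation in the middle-range sum $\Sigma$, which you then estimate by a separate partial summation. The paper's route is shorter and tidier (one antiderivative does all the bookkeeping); your route is more pedestrian but makes the source of the cancellation near $m\approx t$ very explicit and doesn't require spotting the antiderivative in advance. Both are complete proofs.
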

\begin{proof}
Let $A(x)=\sum_{1 \le n \le x} r(n)=\pi x+E(x)$, it is well-known
that (cf. \cite{sierpinski-circle-problem}) that
$E(x) \ll x^{\frac13}$. Also, let $f_t(x)=\log
\frac{|x-t|}{(x^2+1)^{1/2}}$, (so $f_t(x) \rightarrow 0$ as $x
\rightarrow \infty$). Since $|u-t| \le v^{1/3}$, partial summation
gives 
\[
\begin{split}
\sum_{m:|m-u| > v^{\frac12}} r(m) \left(\frac{1}{m-t}-\frac{m}{m^2+1} \right)=& \int_{u+v^{\frac12}}^{\infty} f_t'(x) dA(x)+\int_{1^-}^{(u-v^{\frac12})^{-}} f_t'(x) dA(x) \\
=& \pi \left(f_t(u-v^{\frac12})-f_t(u+v^{\frac12})-\log t\right)\\
&+O\left(1+\max_{\pm} \frac{u^{\frac13}}{|u\pm v^{\frac12}-t|} \right).
\end{split}
\]
The error is $O(1)$ since we assumed  $|u-t| \le  v^{1/3}$. Also, 
\[
f_t(u-v^{\frac12})-f_t(u+v^{\frac12})=\log \frac{|u-t-v^{\frac12}|}{|u-t+v^{\frac12}|}+O(1) \ll 1.
\]
\end{proof}
We are now ready to prove the main result of this section.

\begin{proof} [Proof of Theorem \ref{thm:spectral}] Let $A \ge 1$.
In the weak coupling quantization, it follows from the spectral equation \eqref{eq:spectral2} along with Lemma \ref{lem:trivial} that
\begin{equation} \label{eq:initial1}
\sum_{ m : |m-n| \le \frac{n}{x} x^{1/2}} \frac{r(m)}{m-\lambda_n}=\pi
\log \lambda_n+O(1) 
\end{equation}
for every integer $\frac{x}{(\log x)^A} \le n \le x$, which is a sum of two squares.
Note that the application of Lemma \ref{lem:trivial} is justified
since it is well-known that $\lambda_n-n \le n^+-n \le 10n^{1/4}$ (see
for instance \cite{MV} p. 43). 

In the strong coupling quantization, applying Lemma \ref{lem:trivial}
twice we get for $\frac{x}{(\log x)^A} \le n \le x$ that 
\[
\bigg|
\sum_{m : |m-n| > \frac{n}{x} x^{1/2}}  r(m)
\left(\frac{1}{m-\lambda_n}-\frac{m}{m^2+1} \right)
-
\sum_{m :|m-\lambda_n| > \lambda_n^{1/2}}  r(m)
\left(\frac{1}{m-\lambda_n}-\frac{m}{m^2+1} \right)\bigg|
\ll 1.
\]
Hence, using this along with the spectral equation \eqref{eq:spectral} we have
 \[
 \begin{split}
 \sum_{|m-n| \le  \frac{n}{x} x^{1/2}}  r(m) \left(\frac{1}{m-\lambda_n}-\frac{m}{m^2+1} \right)=&\sum_{|m-\lambda_n| \le \lambda_n^{1/2}}  r(m) \left(\frac{1}{m-\lambda_n}-\frac{m}{m^2+1} \right)+O(1) \\
 =& \frac{1}{\alpha}+O(1).
 \end{split}
 \]
 Hence, in the strong coupling quantization for each $\frac{x}{(\log x)^A} \le n \le x$
\begin{equation} \label{eq:initial}
\sum_{ m : |m-n| \le \frac{n}{x} x^{1/2}} \frac{r(m)}{m-\lambda_n}
=
\frac{1}{\alpha}+O(1).
\end{equation}

For $\frac{x}{(\log x)^A} \le n \le x$, we now analyze the sum that appears on the LHS of both \eqref{eq:initial1} and \eqref{eq:initial}. Let $B \ge 1$, to be determined later and consider
\begin{equation} \label{eq:pf0}
\sum_{|m-n|\le \frac{n}{x} x^{1/2}} \frac{r(m)}{m-\lambda_n}= \sum_{|m-n|\le  \frac{n}{x}(\log x)^B} \frac{r(m)}{m-\lambda_n}+
\sum_{ \frac{n}{x}(\log x)^B < |k| \le \frac{n}{x} x^{1/2} } \frac{r(n+k)}{k-\delta_n},
\end{equation}
where recall $\delta_n=\lambda_n-n$.
Note that 
\[
\begin{split}
\sum_{\substack{n \le x \\ \delta_n \ge (\log x)^{B/2}}} b(n)
\le & \frac{1}{(\log x)^{B/2}} \sum_{n \le x} b(n)\delta_n \\
\le & \frac{1}{(\log x)^{B/2}} \sum_{n \le x} b(n)  (n^+-n) \ll \frac{x}{(\log x)^{B/2}}.
\end{split}
\]
Hence, for all but $O(x/(\log x)^{B/2})$ integers $n \le x$ which are
representable as a sum of two squares, $\delta_n < (\log
x)^{B/2}$. For 
these integers, with the second sum on the RHS of \eqref{eq:pf0} equals 
\begin{equation} \label{eq:pf11}
\sum_{ \frac{n}{x}(\log x)^B \le k \le \frac{n}{x}x^{1/2} } \frac{r(n+k)-r(n-k)}{k}+O\left( (\log x)^{B/2} \sum_{ \frac{n}{x}(\log x)^B \le |k| \le x^{1/2}} \frac{r(n+k)}{k^2} \right).
\end{equation}
Since
\[
\begin{split}
&\# \bigg\{ \frac{x}{(\log x)^A} \le n \le x : (\log x)^{B/2} \sum_{\frac{n}{x}(\log x)^B \le |k| \le x^{1/2}} \frac{r(n+k)}{k^2} \ge 1 \bigg\}\\
&\le (\log x)^{B/2}  \sum_{(\log x)^{B-A} \le |k| \le x^{1/2}} \frac{1}{k^2}\sum_{ n \le x} r(n+k) \ll  \frac{x}{(\log x)^{B/2-A}}
\end{split}
\]
the $O$-term in \eqref{eq:pf11} is $\ll 1$ for all but $O(x/(\log x)^{B/2-A})$ integers $\frac{x}{(\log x)^A} \le n \le x$. The first
sum in \eqref{eq:pf11} is estimated using Lemma \ref{lem:partial},
with $Y=(\log x)^B$; so for $B \ge 6A$ this sum is $\ll 1$
for all but at most $\ll x/(\log x)^{A}$ integers $n \le x$. Hence,
applying the two previous estimates in \eqref{eq:pf11} and using the resulting
bound along with \eqref{eq:pf0} in \eqref{eq:initial1} and
\eqref{eq:initial} completes the proof upon taking $B \ge 6A$.



\end{proof}

\section{Estimates for new eigenvalues nearby almost primes} \label{sec:normal}

In this section we analyze the location of eigenvalues in
$\Lambda_{\text{new}}$ nearby certain integers which are almost
primes. To state the result, let 
\begin{equation} \label{eq:Ndef}
\begin{split}
\mathcal N_1=&\{ n \in \mathbb N :  (1_{ \mathcal P_{\varepsilon}} \ast 1_{ \mathcal P_{\varepsilon}'})(n) \neq 0, \, b(Q_0n+4)=1, \, \& \, \, Q_1 | Q_0n+4\}, \\
\mathcal N_2=&\left\{ n \in \mathcal N_1 : \left(\frac{Q_0n+4}{Q_1}, P(y)\right)=1\right\},
\end{split}
\end{equation}
where $y=x^{\eta}$ with $\eta$ as in Proposition \ref{prop:sieve} and $Q_0,Q_1, \varepsilon, 1_{\mathcal P_{\varepsilon}}$ and  $b(\cdot)$ are as defined in the beginning of Section \ref{sec:sieve}. 
For $j=1,2$ let $\mathcal N_j(x)=\mathcal N_j \cap [1,x]$. In particular,
for each $n \in \mathcal N_2(x)$, $Q_0n+4=Q_1 \mathcal \ell_n$ where $\ell_n$ is an integer which is a sum of two squares. Moreover, since every prime divisor of $\ell_n$ is $\ge y=x^{\eta}$
so for $n \le x$,
$ x^{ \eta \cdot \# \{ p| \ell_n\}} \le \ell_n \le 2Q_0x$
and
\begin{equation}\label{eq:primefactors}
\# \{ p| \ell_n\} \le \frac{2}{\eta}.
\end{equation}

Also, for
a polynomial $R =\sum a_n X^n \in \mathbb Z[X]$, let $\lVert R
\rVert_1=\sum |a_n|$. Note that by Propositions \ref{prop:sieve} and
\ref{prop:sieve2}  
\begin{equation} \label{eq:Nbd}
\begin{split}
\# \mathcal N_1(x) \gg & \varepsilon^2 \frac{1}{\varphi(Q_1)} \frac{x \log \log x}{(\log x)^{3/2}}, \\
\# \mathcal N_2(x) \gg & \varepsilon^2 \frac{Q_0}{\varphi(Q_0 Q_1)} \frac{x \log \log x}{(\log x)^2}.
\end{split}
\end{equation}
Additionally by using an upper bound sieve, it is not difficult to
prove that
\begin{equation} \label{eq:Nbd2}
\begin{split}
\# \mathcal N_1(x) \ll & \varepsilon^2 \frac{1}{\varphi(Q_1)} \frac{x \log \log x}{(\log x)^{3/2}} \\
\# \mathcal N_2(x) \ll & \varepsilon^2 \frac{Q_0}{\varphi(Q_0 Q_1)} \frac{x \log \log x}{(\log x)^2}.
\end{split}
\end{equation}

The main result of this section is the following proposition.
 
 \begin{proposition} \label{prop:eigenvalue} For all
 $n \in \mathcal N_j(x)$, $j=1,2$, except outside an exceptional set of size 
 \[
 \ll \frac{\# \mathcal N_j(x)}{ \varepsilon^2 (\log \log x)^{1-o(1)}}
 \]
 we have for $m=Q_0n$ that $m^+=m+4$ and
 \[
 \begin{split}
 &\frac{r(m)}{m-\lambda_m}+\frac{r(m^+)}{m^+-\lambda_m} \\
 & \qquad  \qquad =\begin{cases}
 \pi \log \lambda_m+O\left( (\log \log x)^5 \right) & \text{ in the weak coupling quantization},\\
 O\left( (\log \log x)^5 \right) & \text{ in the strong coupling quantization}.
 \end{cases}
 \end{split}
 \]
 \end{proposition}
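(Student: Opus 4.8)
The plan is to start from the truncated spectral equation provided by Theorem~\ref{thm:spectral}, which (outside an exceptional set of size $O(x/(\log x)^A)$ for any fixed $A$) reduces the spectral sum to
\[
\sum_{|m-n| \le \frac{n}{x}(\log x)^B} \frac{r(m)}{m-\lambda_n} = \pi \log \lambda_n + O(1) \quad \text{(weak)}, \qquad = \frac{1}{\alpha}+O(1) \quad \text{(strong)},
\]
and then to further isolate the two ``heavy'' terms $m=Q_0 n$ and $m^+=Q_0 n+4$. First I would verify the claim $m^+=m+4$: by construction of $\mathcal N_1$ every prime factor of $Q_0 n+4$ is $\equiv 1 \pmod 4$, so $b(Q_0 n+4)=1$; and since $Q_0 n \equiv 0 \pmod 4$ (as $Q_0$ is odd and $n$ is a product of primes $\equiv 1\pmod 4$, hence odd — actually $n=p_1 p_2$ with $p_j\equiv1\pmod4$, so $n$ is odd and $Q_0 n$ is odd, hence $m\equiv 1\pmod 4$), one checks there is no sum of two squares strictly between $Q_0 n$ and $Q_0 n+4$. (The integers $Q_0n+1, Q_0n+2, Q_0 n+3$ have to be excluded; this is where the residue conditions built into $\mathcal N_j$ and parity of $Q_0 n$ are used.) With $\lambda:=\lambda_m$ lying in the interval $(m, m^+)=(Q_0 n, Q_0 n+4)$, the two terms $\frac{r(m)}{m-\lambda}$ and $\frac{r(m^+)}{m^+-\lambda}$ have denominators of size $O(1)$, while $r(m), r(m^+) \gg 1$; the goal is to show that the \emph{rest} of the truncated sum is $O((\log\log x)^5)$ for almost all such $n$, so that these two terms alone reproduce the right-hand side of the spectral equation up to that error.

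The main work is therefore a second-moment (variance) estimate for the tail sum
\[
S_n(\lambda) := \sum_{\substack{|m-Q_0 n| \le \frac{Q_0 n}{x}(\log x)^B \\ m \ne Q_0 n,\, Q_0 n+4}} \frac{r(m)}{m-\lambda},
\]
averaged over $n \in \mathcal N_j(x)$. Since $|m - Q_0 n| $ ranges over an interval of length $\ll Q_0 (\log x)^B \ll (\log x)^{B+1/10}$ and $\lambda \in (Q_0n, Q_0n+4)$, writing $m = Q_0 n + k$ we have $|k - \delta_m| \gg 1$ for $k \neq 0, 4$ (after first discarding the rare $n$ with $\delta_m \ge (\log x)^{B/2}$, exactly as in the proof of Theorem~\ref{thm:spectral}), so $|S_n(\lambda)| \ll \sum_{1 \le |k| \ll (\log x)^{B+1}} \frac{r(Q_0 n + k)}{|k|} + O(1)$, where the $k=4$ term can also be dropped. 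The point is then to bound
\[
\sum_{n \in \mathcal N_j(x)} \Bigl( \sum_{1 \le |k| \ll (\log x)^{B+1}} \frac{r(Q_0 n + k)}{|k|} \Bigr)
\]
and show it is $\ll \#\mathcal N_j(x) \cdot (\log\log x)^{1+o(1)}$ or so, so that by Chebyshev all but $\ll \#\mathcal N_j(x)/((\varepsilon^2(\log\log x)^{1-o(1)})$ of the $n$ have tail sum $O((\log\log x)^5)$ — matching the stated exceptional set size. Here one swaps the order of summation: for each fixed small $k$, one needs an upper bound for $\sum_{n \in \mathcal N_j(x)} r(Q_0 n + k)$, i.e.\ for a sum of the multiplicative function $r$ evaluated at a linear polynomial in $n$, where $n$ itself is constrained to be (essentially) a product of two primes $p_1 p_2$ with $p_j \in \mathcal P_\varepsilon$ (or $\mathcal P_\varepsilon'$) and with $Q_0 n + 4$ a sum of two squares. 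The cleanest route is to drop the ``$Q_0 n+4$ is a sum of two squares'' constraint (it only shrinks the set) and apply an upper-bound sieve together with Henriot's bounds \cite{H} for averages of nonnegative multiplicative functions over values of polynomials, being careful — as flagged in the outline — to keep explicit control of the dependence on the discriminant of the relevant polynomial (here the discriminant involves $Q_0$ and $k$, both of which are small powers of $\log x$, contributing only $(\log\log x)^{O(1)}$-type factors). Summing the resulting bound $\sum_n r(Q_0 n+k) \ll \frac{\#\mathcal N_j(x)}{\varepsilon^2}\cdot(\text{something small in } k, Q_0)$ against $\sum_{1\le|k|\ll(\log x)^{B+1}} 1/|k| \ll \log\log x$ gives the claimed mean bound.

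The hard part, and the place where the argument is genuinely delicate, is precisely obtaining these polynomial-value multiplicative sums $\sum_{n \le x, \, n \in \mathcal N_j} r(Q_0 n + k)$ with error terms uniform in the auxiliary parameters: $Q_0, Q_1$ are as large as $(\log x)^{1/10}$, $\varepsilon$ can be as small as $(\log\log x)^{-1/2}$, and $|k|$ ranges up to $(\log x)^{B+1}$, so any bound with, say, a factor $Q_0$ or $k^{o(1)}$ that is not genuinely of size $(\log x)^{o(1)}$ would destroy the final estimate. One must therefore use a version of the sieve/Henriot input that tracks the discriminant $\mathrm{disc} \ll (Q_0 k)^{O(1)}$ explicitly and shows its contribution is at worst $(\log(Q_0 k))^{O(1)} \ll (\log\log x)^{O(1)}$, which is absorbed into the $(\log\log x)^5$ (or, at the level of the exceptional set, into the $(\log\log x)^{o(1)}$). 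A secondary technical point is handling the interplay between ``$n$ restricted to $\mathcal N_j$'' (a sieved, sparse set of density $\asymp \varepsilon^2/(\log x)^{3/2}$ or $\varepsilon^2 Q_0/(\varphi(Q_0 Q_1)(\log x)^2)$, as recorded in \eqref{eq:Nbd}--\eqref{eq:Nbd2}) and ``$r$ evaluated at $Q_0 n+k$'': one wants the upper-bound sieve weights for $n \in \mathcal N_j$ and the multiplicative weight $r(Q_0 n+k)$ to interact well, which is why it is convenient to bound $r$ by $\tau$ (the divisor function) or a suitable sieve majorant and apply a single combined sieve. Once these mean-value bounds are in hand, the rest — discarding the $O(x/(\log x)^A)$ set from Theorem~\ref{thm:spectral}, discarding the large-$\delta_m$ set, the Chebyshev step, and reading off the two surviving terms — is routine, and the normalization $\lambda_m \in (Q_0 n, Q_0 n+4)$ together with $r(Q_0 n), r(Q_0 n+4) \gg 1$ gives exactly the stated conclusion in both the weak and strong coupling cases.
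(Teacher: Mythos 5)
Your overall strategy — invoke Theorem~\ref{thm:spectral}, isolate the two heavy terms at $k=0,4$, and control the remaining terms by a first-moment bound plus Chebyshev, using Henriot-type estimates with explicit discriminant control — is exactly the route the paper takes (via Lemma~\ref{lem:cheby} and Lemma~\ref{lem:correlate}). (What you call a ``second-moment/variance'' estimate is in fact a first moment plus Markov, which is also what the paper does; that mislabel is harmless.) There are, however, two points where your write-up goes wrong, one of them fatal.

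The fatal one: you propose to ``drop the $Q_0n+4$ is a sum of two squares constraint (it only shrinks the set) and apply an upper-bound sieve together with Henriot's bounds.'' Dropping $b(Q_0n+4)$ does shrink the set you sum over, but your bound is compared against $\#\mathcal N_j(x)$, which retains the $b$-constraint, and the mismatch in densities is catastrophic for $j=1$. By~\eqref{eq:Nbd}--\eqref{eq:Nbd2}, $\#\mathcal N_1(x)\asymp \varepsilon^2 x\log\log x/(\varphi(Q_1)(\log x)^{3/2})$, whereas the set of $n$ with only the almost-prime condition on $n$ and $Q_1\mid Q_0n+4$ has cardinality $\asymp \varepsilon^2 x\log\log x/(\varphi(Q_1)\log x)$: a factor $(\log x)^{1/2}$ larger. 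A Henriot bound over the larger set therefore produces $\sum_n r(Q_0n+k)\ll (\log x)^{1/2}\cdot\#\mathcal N_1(x)\cdot(\ldots)$, which destroys the proposed Chebyshev step. The paper's Lemma~\ref{lem:correlate} avoids this precisely by keeping $b$ (as the multiplicative weight $F_2=F_{T_j}\cdot b$) inside Henriot's theorem: the Euler product over primes $p\equiv3\pmod4$ then supplies the needed $\prod_{p\equiv3(4),\,p\le Y}(1-1/p)\asymp(\log Y)^{-1/2}$ saving, matching the density of $\mathcal N_1$. You cannot recover this saving after the fact with ``an upper-bound sieve''; the constraint $b(Q_0n+4)=1$ must be kept as a weight throughout. (For $j=2$ the loss is only an $\eta^{-1/2}=O(1)$, so there dropping $b$ happens to be tolerable; but the proposition is asserted for both $j$.)

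Second, you claim $m^+=m+4$ follows from ``residue conditions built into $\mathcal N_j$ and parity of $Q_0n$.'' This is not so. With $m=Q_0n\equiv1\pmod4$, the integer $m+2\equiv3\pmod4$ is indeed never a sum of two squares, but $m+1\equiv2\pmod4$ and $m+3\equiv0\pmod4$ certainly can be (e.g.\ $m+1=2p$ with $p\equiv1\pmod4$). These must be excluded probabilistically — this is exactly what \eqref{eq:space2} in Lemma~\ref{lem:cheby} does, ruling out $b(Q_0n+k)=1$ for $1\le|k|\le(\log x)^{1/2}/U$, $k\ne4$, outside an exceptional set of the required size. This point matters structurally: without it, the terms with $k=1,2,3$ have denominators $|k-\delta_m|$ that can be arbitrarily small (as $\delta_m$ ranges over $(0,4)$), so the bound $\sum_k r(Q_0n+k)/|k|\le U$ from \eqref{eq:cor2} is not by itself enough; one first needs $r(Q_0n+k)=0$ for $1\le|k|\le3$. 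So the paper needs both \eqref{eq:space2} and \eqref{eq:cor2}, and your proposal should make that separation explicit rather than attributing the $m^+=m+4$ identity to congruences.
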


 We also require a sieve estimate for averages of correlations of
 multiplicative functions. The following result is due to
 Henriot \cite{H}, which
 builds on the work of Nair and Tenenbaum \cite{NT}. See Corollary 1
 of \cite{H} and the subsequent remark therein. Recall that
 $\tau(n)=\sum_{d|n}1$ denotes the divisor function.

\begin{lemma} \label{lem:NT}
Let $R_1(X),\ldots, R_k( X) \in \mathbb Z[ X]$ be irreducible,
pairwise co-prime polynomials, for which each polynomial $R_j$ does
not have a fixed prime divisor. Let $D$ be the discriminant of
$R=R_1\cdots R_k$ and $\varrho_{R_j}(n)=\# \{ a \pmod n: R_j(a)
\equiv 0 \pmod n\}$. Then there exist $C,c_0>0$ such that
for any non-negative multiplicative functions $F_j$, $j=1, \ldots, k$
with $F_j(n) \le \tau(n)$,
we have for $x \ge c_0 \lVert R \rVert_1^{1/10} $ and some $A \ge 1$ that 
\[
\sum_{n \le x} \prod_{j=1}^k F_j(|R_j(n)|) 
\ll  \Delta_D \, x  \prod_{p \le x} \left(1-\frac{\varrho_{R}(p)}{p} \right)  \prod_{j=1}^k \left(\sum_{n \le x}  \frac{F_j(n) \varrho_{R_j}(n)}{n}\right)
\]
where
\[
\Delta_D :=\prod_{p|D}\left(1+\frac{1}{p} \right)^{C},
\]
and the implicit constant, $C$ and $c_0$ depend at most on the degree of $R$.
\end{lemma}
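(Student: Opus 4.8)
The plan is to deduce Lemma \ref{lem:NT} by specializing Corollary~1 of Henriot \cite{H} together with the remark that follows it, which is precisely where the uniformity in the discriminant $D$ (encoded by the factor $\Delta_D$) is extracted. First I would record the hypotheses of Henriot's result and check that they are met in our situation: the $R_1,\dots,R_k$ are irreducible over $\Z$ and pairwise coprime (so each resultant $\mathrm{Res}(R_i,R_j)$ is a nonzero integer, whence the product $\prod_j R_j(n)$ is squarefree-ish away from $D$), none of the $R_j$ has a fixed prime divisor (equivalently $\varrho_{R_j}(p)<p$ for every prime $p$), and each $F_j$ is a non-negative multiplicative function with $F_j(n)\le\tau(n)$ --- these are exactly the standing assumptions of the lemma. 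The condition $x\ge c_0\lVert R\rVert_1^{1/10}$ is the range-of-validity hypothesis in Henriot's theorem (its ``$A$'' parameter being the admissible smoothing exponent), ensuring that $x$ is large enough relative to the size of the coefficients of $R=R_1\cdots R_k$ for the main term to dominate the residue-counting error.

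Next I would carry out the bookkeeping that matches Henriot's normalization to the form stated here. Henriot bounds $\sum_{n\le x}\prod_j F_j(|R_j(n)|)$ by a constant depending only on $\deg R$, times $\Delta_D$, times $x$, times a product of local factors of exactly the shape $\prod_{p\le x}\bigl(1-\varrho_R(p)/p\bigr)\prod_{j=1}^k\bigl(\sum_{n\le x}F_j(n)\varrho_{R_j}(n)/n\bigr)$. The one place care is needed is that $n\mapsto\prod_j F_j(|R_j(n)|)$ is genuinely multiplicative in $n$ only away from primes $p\mid D$: for $p\nmid D$ a prime divides at most one of the values $R_j(n)$, and with multiplicity controlled by $\varrho_R$, so the Euler factorization goes through cleanly; at the primes $p\mid D$ the discrepancy between the true local behaviour and the model is absorbed into $\Delta_D=\prod_{p\mid D}(1+1/p)^C$. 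This is the content of the remark following Corollary~1 in \cite{H}, so this step is essentially verification rather than new work.

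For completeness I would also indicate the mechanism behind Henriot's estimate, so the argument is not a pure citation. One first applies the Nair--Tenenbaum combinatorial device \cite{NT}: for $n\le x$ one replaces $\prod_j F_j(|R_j(n)|)$ by a sum, over divisors $d\mid\prod_j R_j(n)$ with $d\le x^{\theta}$ for a small fixed $\theta$, of a non-negative multiplicative weight, at the cost of a factor depending only on $\deg R$. One then swaps the order of summation, counts the $n\le x$ in each of the $\varrho_R(d)$ residue classes modulo $d$ --- obtaining $x\varrho_R(d)/d+O(\varrho_R(d))$ with the error term dominated precisely because $d\le x^{\theta}$ and $x\ge c_0\lVert R\rVert_1^{1/10}$ --- and finally estimates $\sum_{d\le x^{\theta}}(\text{weight})(d)\varrho_R(d)/d$ by a Shiu-type / fundamental-lemma bound, which factors over the pairwise coprime $R_j$ and produces the singular-series factor $\prod_{p\le x}\bigl(1-\varrho_R(p)/p\bigr)$. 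The one genuinely delicate point --- and the main obstacle --- is tracking the dependence on $D$ through all three of these stages (the combinatorial replacement, the local count, and the Euler product) so that every prime $p\mid D$ contributes only $(1+1/p)^{O(1)}$ rather than an unmanageable $(1-\varrho_R(p)/p)^{-1}$-type factor; achieving this is exactly Henriot's refinement of Nair--Tenenbaum, and since it is carried out in full in \cite{H}, the proof here reduces to invoking Corollary~1 and its remark after the verifications above.
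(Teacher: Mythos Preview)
Your proposal is correct and matches the paper's approach exactly: the paper does not prove this lemma at all but simply attributes it to Henriot, stating ``The following result is due to Henriot \cite{H}, which builds on the work of Nair and Tenenbaum \cite{NT}. See Corollary 1 of \cite{H} and the subsequent remark therein.'' Your write-up is thus more detailed than the paper's treatment, but the substance --- deduce the bound by specializing Corollary~1 of \cite{H} together with the remark following it, which supplies the $\Delta_D$ uniformity --- is identical.
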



We first start with a technical lemma. 
\begin{lemma} \label{lem:correlate}
Let $f$ be a non-negative multiplicative function with $f(n) \le
\tau(n)$ and $f(mn) \le \max\{1,f(n)\} f(m)$ for $m \in \mathbb N$ and
$n$ such that $b(n)=1$. Then for $1 \le |h| \le x^{1/30}$, with $h
\neq 4$ and $j=1,2$, we have
\begin{equation} \label{eq:convolution}
\begin{split}
 \sum_{n \in \mathcal N_j(x)} f(Q_0n+h)   \ll \frac{1}{\varepsilon^2} \cdot g(h)  \prod_{p|Q_0Q_1} \left(1+\frac{1}{p} \right)^C \prod_{p \le x} \left(1+\frac{f(p)-1}{p} \right)  \# \mathcal N_j(x) 
\end{split}
\end{equation}
where $C>0$ is an absolute constant and
\[
g(h)=\tau(|h|)\tau(|h-4|)\prod_{p|h} \left(1+\frac{1}{p} \right)^C\prod_{p|h-4} \left(1+\frac{1}{p} \right)^C.
\]
Additionally (for $h=4$)  there exists $C>0$ such that 
\[
\sum_{n \in \mathcal N_1(x)} f(Q_0n+4) \ll \frac{1}{\varepsilon^2} \cdot  f(Q_1) \prod_{p|Q_0Q_1}\left(1+\frac{1}{p} \right)^C \prod_{\substack{p \le x \\ p \equiv 1 \pamod 4}} \left(1+\frac{f(p)-1}{p} \right)  \# \mathcal N_1(x) .
\]
\end{lemma}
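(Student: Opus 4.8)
The plan is to reduce the sum over $n \in \mathcal N_j(x)$ to a sum over all integers $n \le x$ by inserting the sieve weights that define membership in $\mathcal N_j$, and then apply Henriot's bound (Lemma \ref{lem:NT}) to the resulting polynomial correlation sum. First I would open up the definition of $\mathcal N_j(x)$: for $n \in \mathcal N_j(x)$ we have $(1_{\mathcal P_\varepsilon} \ast 1_{\mathcal P_\varepsilon'})(n) \neq 0$, hence $n = p_1 p_2$ with $p_1 \in \mathcal P_\varepsilon$, $p_2 \in \mathcal P_\varepsilon'$, and moreover $b(Q_0n+4)=1$, $Q_1 \mid Q_0 n+4$ (and for $j=2$ also $((Q_0n+4)/Q_1, P(y))=1$). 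I would bound $\sum_{n \in \mathcal N_j(x)} f(Q_0 n + h)$ from above by dropping the condition $b(Q_0n+4)=1$ but \emph{keeping} enough of the sieve structure — namely keeping $(1_{\mathcal P_\varepsilon} \ast 1_{\mathcal P_\varepsilon'})(n) \neq 0$ and the congruence condition $Q_1 \mid Q_0n+4$ — and then majorizing $1_{\mathcal P_\varepsilon} \ast 1_{\mathcal P_\varepsilon'}$ by an upper-bound $\beta$-sieve convolution. Concretely, writing $n = p_1 p_2$ and summing over the larger prime $p_1 \le x/p_2$ in an arithmetic progression modulo $Q_1$ (coming from $Q_0 p_1 p_2 \equiv -h \pmod{Q_1}$, using that $(Q_0, Q_1)=1$), I would replace the indicator of $p_1$ being prime by $(\lambda \ast 1)(p_1)$ for a level-$D$ upper-bound sieve, reducing matters to
\[
\sum_{p_2 \le x^{1/9}} 1_{\mathcal P_{\varepsilon}'}(p_2) \sum_{\substack{d \le D \\ (d, \, Q_0 Q_1 p_2) = 1}} \lambda_d \sum_{\substack{m \le x/(p_2 d) \\ m \equiv \gamma \,(\mathrm{mod}\, Q_1)}} f(Q_0 p_2 d\, m + h),
\]
where $\gamma$ is the relevant residue. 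The inner sum over $m$ is then a one-variable correlation sum to which Lemma \ref{lem:NT} applies with $R_1(m) = Q_0 p_2 d\, m + h$ (a linear, irreducible polynomial, with no fixed prime divisor after the coprimality conditions are arranged), of $\ell_1$-norm $\ll x$, so the hypothesis $x \gg \lVert R \rVert_1^{1/10}$ holds.

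The main term Henriot's lemma produces is $\Delta_D \cdot (x/(p_2 d)) \cdot \prod_{p \le x}(1 - \varrho_R(p)/p) \cdot \sum_{m \le x} f(m)\varrho_{R_1}(m)/m$. The discriminant factor $\Delta_D$ here is where the $g(h)$ and $\prod_{p|Q_0 Q_1}(1+1/p)^C$ factors enter: the discriminant of the relevant product polynomial (for $h=4$ one must be slightly careful, which is why that case is separated out) has prime divisors among those dividing $h$, $h-4$, $Q_0 Q_1$, and $p_2 d$, and the contribution of $p_2 d$ is absorbed by summing $\tau$-type factors against $\lambda_d/d$ and $1/p_2$. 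The local-density product $\prod_{p \le x}(1 - \varrho_{R_1}(p)/p) \asymp 1/\log x$ since $R_1$ is linear, and $\sum_{m \le x} f(m)\varrho_{R_1}(m)/m \ll \prod_{p \le x}(1 + (f(p)-1)/p) \cdot \log x$ by a standard mean-value estimate for multiplicative functions (here one uses $f(p) \le \tau(p) = 2$ and $\varrho_{R_1}(p) \le 1$), so these two combine to the factor $\prod_{p \le x}(1 + (f(p)-1)/p)$ appearing on the right. Summing the factor $x/(p_2 d)$ against $|\lambda_d| \le 1$ over $d \le D$ and $1/p_2$ over $p_2 \le x^{1/9}$ gives back a quantity of size $\asymp \varepsilon^{-2} \# \mathcal N_j(x) \cdot (\text{log factors})$, using the lower bounds \eqref{eq:Nbd} for $\# \mathcal N_j(x)$ (note $\# \mathcal N_j(x)$ already carries the $\varepsilon^2$, which is why $1/\varepsilon^2$ appears) together with $\sum_{p_2 \le x^{1/9}} 1_{\mathcal P_\varepsilon'}(p_2)/p_2 \asymp \varepsilon \log\log x$ by the prime number theorem for Gaussian primes in sectors (cf. \eqref{eq:PNT}). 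Matching the $\log x$, $\log\log x$ and $\varphi(Q_0), \varphi(Q_1)$ powers carefully then yields the stated bound; in the $h = 4$ case the extra restriction that every prime factor of $(Q_0n+4)/Q_1$ is $\equiv 1 \pmod 4$ restricts the mean-value product to primes $p \equiv 1 \pmod 4$ and produces the factor $f(Q_1)$ (from the forced divisibility $Q_1 \mid Q_0n+4$ contributing $f$ at the primes dividing $Q_1$).

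The hypothesis $f(mn) \le \max\{1, f(n)\} f(m)$ for $b(n) = 1$ is used precisely to separate the (fixed, bounded) contribution of $Q_1$ — and, in the $h=4$ case, to extract the factor $f(Q_1)$ from $f(Q_0n+4) = f(Q_1 \cdot (Q_0n+4)/Q_1)$ — from the varying part $f((Q_0n+4)/Q_1)$, since $(Q_0n+4)/Q_1$ is a sum of two squares in that case so $b = 1$ there. The main obstacle, I expect, is bookkeeping the discriminant/$\Delta_D$ factor uniformly in $h$, $d$, $p_2$ and the moduli $Q_0, Q_1$: one must check that the prime divisors of the discriminant arising from the auxiliary parameters $d, p_2$ contribute only $O(1)$ on average (after division by $d$ and $p_2$ and summation), so that they do not degrade the final bound, while the genuine $h$-dependence is correctly isolated into $g(h)$. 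A secondary technical point is ensuring the polynomial $R_1$ has no fixed prime divisor after imposing $(d, Q_0 Q_1 p_2) = 1$ and the congruence modulo $Q_1$, and handling the degenerate coincidences (such as $h = 4$, or $h \equiv 0$) by the separate statement or by absorbing them into the $\tau(|h|)\tau(|h-4|)$ factor in $g(h)$.
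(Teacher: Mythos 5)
Your reduction drops the condition $b(Q_0n+4)=1$ (and for $j=2$ also $((Q_0n+4)/Q_1,P(y))=1$) and retains only the almost-primality of $n$ together with $Q_1\mid Q_0n+4$. That retained set has size $\asymp \varepsilon^2 x\log\log x/(\varphi(Q_1)\log x)$, while $\#\mathcal N_1(x)\asymp \varepsilon^2 x\log\log x/(\varphi(Q_1)(\log x)^{3/2})$ and $\#\mathcal N_2(x)$ is smaller still by roughly another $(\log x)^{1/2}\cdot \varphi(Q_0)/Q_0$. So even for $f\equiv 1$ your bound would overshoot the target by a factor $(\log x)^{1/2}$ for $j=1$ and $\log x$ for $j=2$; the upper-bound sieve on $p_1$ saves one factor of $\log$, but the second (and for $j=2$ a third) saving can only come from the condition that $(Q_0n+4)/Q_1$ is a sum of two squares (and, for $j=2$, free of prime factors below $y$), which you have discarded. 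A second, structural sign that something is missing: your single linear polynomial $R_1(m)=Q_0 p_2 d\,m+h$ has trivial discriminant, so Henriot's lemma applied to it alone cannot produce a factor involving $h-4$. In the paper the factor $g(h)=\tau(|h|)\tau(|h-4|)\prod_{p\mid h(h-4)}(1+1/p)^C$ comes precisely from the discriminant $D=16\frac{K^2Q_1^2}{d^4}h^2(h-4)^2$ of the product $R_1R_2R_3$, where $R_2$ is the polynomial that encodes $Q_0n+4$ and $R_3$ the one that encodes $Q_0n+h$ (and their resultant is $\propto h-4$). So your claim about where the $g(h)$ factor comes from is not consistent with your own setup.

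The paper does not use explicit sieve weights for the condition ``$p_1$ prime'' at all. Instead it keeps \emph{all} of the defining conditions of $\mathcal N_j$, writes ``no prime factor below $Z$'' as a totally multiplicative indicator $F_Z$, and feeds three multiplicative functions $F_{\sqrt Y}$, $F_{T_j}\cdot b$, $f$ evaluated respectively at three pairwise-coprime linear polynomials $R_1,R_2,R_3$ (two polynomials when $h=4$) into a \emph{single} application of Henriot's Lemma~\ref{lem:NT}. The product $\prod_{p\le Y}(1-\varrho_R(p)/p)$ is then of order $(\log Y)^{-3/2}(\log T_j)^{-1/2}$, which supplies exactly the log savings needed to match $\#\mathcal N_j(x)$, and the discriminant factor $\Delta_D$ supplies $g(h)\prod_{p\mid Q_0Q_1}(1+1/p)^C$. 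Your approach can be repaired — keep $b(Q_0n+4)$ and (for $j=2$) the $P(y)$-coprimality condition and feed them into Henriot alongside $f$; you may still use an explicit sieve on $p_1$ if you prefer — but at that point it is essentially the paper's argument with $F_{\sqrt Y}$ replaced by sieve weights, a cosmetic change.
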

\begin{remark}
When applying this lemma we will take $f(n)=\tfrac14 \cdot r(n),b(n)$ or $2^{-\omega_1(n)}$ where $\omega_1(n)=\# \{ p|n : p \equiv 1 \pmod 4\}$. The hypotheses of the lemma are satisfied for each of these choices.
\end{remark}
\begin{proof}
  Let $T_j=2$ if $j=1$ and $T_j=y$ if $j=2$. Dropping several of the
  conditions on $n \in \mathcal N_j$ we get that (here $q< p$ denote
  primes)
\begin{equation} \label{eq:convolution2}
\begin{split}
 \sum_{n \in \mathcal N_j(x)} f(Q_0n+h) \le 2 \sum_{\substack{q \le \sqrt{x} \\ q\equiv 1 \pamod 4}} \sum_{\substack{p \le x/q \\ Q_1 | Q_0pq+4 \\ (\frac{Q_0pq+4}{Q_1}, P(T_j))=1}} b(Q_0qp+4)f(Q_0qp+h).
    \end{split}
\end{equation}

Let $K=Q_0q$ and $Y=x/q$.  Note that the sum above is empty unless $(K,Q_1)=1$.
Since $(K, Q_1)=1$ there exist integers $\overline{K},\overline Q_1$ with $1 \le | \overline {K}| < Q_1$ and $1 \le |\overline {Q_1}| < K$ such that $K \overline K -Q_1\overline Q_1=1$. Also, for $Z \ge 1$
let $F_Z$ be the totally multiplicative function given by $F_Z(p)=1$
if $p \ge Z$ and zero otherwise. The inner sum on the RHS of
\eqref{eq:convolution2} is bounded by
\begin{equation} \label{eq:correlate2}
\begin{split}
    \ll& \sum_{\substack{n \le Y \\ Q_1|Kn+4}} F_{\sqrt{Y}}(n)F_{T_j}\left(\frac{Kn+4}{Q_1} \right)b(Kn+4)f(Kn+h)+Y^{1/2+o(1)} \\
=&  \sum_{\substack{m \le \frac{Y-4 \overline{K}}{Q_1}}} F_{\sqrt{Y}}(Q_1m-4\overline{K})F_{T_j}\left(Km-4\overline{Q_1} \right)b(KQ_1m-4Q_1\overline{Q_1})f(KQ_1m+h-4K\overline{K})\\
&+O(Y^{1/2+o(1)}) .
\end{split}
\end{equation}
First note $b(KQ_1n-4Q_1\overline{Q_1})=b(Kn-4\overline{Q_1})$. Let $d=(KQ_1,h-4K\overline K)$ and suppose that $h \neq 4$.
We have
\[
f(KQ_1m+h-4K\overline{K}) \le \max\{1, f(d)\} f\left(\frac{KQ_1}{d}m+\frac{h-4K\overline{K}}{d}\right).
\]
Let  $R_1(X)=Q_1  X-4 \overline K$, $R_2(X)=K X-4 \overline Q_1$, $R_3( X)=\frac{KQ_1}{d} X+\frac{h-4K\overline K}{d}$ and $D$ denote the discriminant of $R=R_1R_2R_3$. The polynomials $R_1,R_2,R_3$ and multiplicative functions $F_1=F_{\sqrt{Y}}$, $F_2=F_{T_j}\cdot b$ and $F_3=f$  satisfy the assumptions of Lemma \ref{lem:NT}. Also for $(p,KQ_1)=1$ we have $\varrho_R(p)=3$ and $\varrho_{R_j}(p^k)=1$ for each $j=1,2,3$ and $k \ge 1$, which follows from Hensel's lemma. Hence, the sum in \eqref{eq:correlate2} is bounded by
\[
\begin{split}
\ll & \max\{1,f(d)\} \Delta_{D}   \frac{Y}{Q_1} \prod_{p \le Y} \left(1+\frac{F_{\sqrt{Y
}}(p)+F_{T_j}(p)b(p)+f(p)-3}{p} \right) \prod_{p|KQ_1}\left( 1+\frac{1}{p}\right)^C \\
\ll & \max\{1,f(d)\} \Delta_{D}   \prod_{p|KQ_1}\left(1+\frac{1}{p} \right)^C \frac{Y}{Q_1(\log Y)^{3/2}(\log T_j)^{1/2}} \prod_{p \le Y} \left( 1+\frac{f(p)-1}{p}\right).
\end{split}
\]
Write $d=p_1^{a_1} \cdots p_{\ell}^{a_\ell}$. For each $j=1, \ldots, \ell$ we have $p_j^{a_j}|h$ or $p_j^{a_j}|h-4$ (depending on whether $p_j^{a_j}|K$ or $p_j^{a_j}|Q_1$, respectively); so $f(d) \ll \tau(|h|)\tau(|h-4|)$.
Note the discriminant of $R$ equals $D=16 \frac{K^2Q_1^2}{d^4}h^2(h-4)^2$ so that
\[
\max\{1,f(d)\} \Delta_D \ll g(h) \prod_{p|Q_1K}\left(1+\frac{1}{p} \right)^C.
\]
Also since $Y =x/q\ge \sqrt{x}$, $\prod_{p \le Y} \left( 1+\frac{f(p)-1}{p}\right) \ll \prod_{p \le x} \left( 1+\frac{f(p)-1}{p}\right) $.
Hence, applying the estimates above in \eqref{eq:convolution2}, summing over $q$ and using \eqref{eq:Nbd}
gives the claimed bound for $h \neq 4$. 

For $h =4$ we argue similarly, only now in order to estimate \eqref{eq:correlate2} we use Lemma \ref{lem:NT} with $R_1, R_2$ as before, $R=R_1R_2$ (so the discriminant is $D=16$) and $F_1=F_{\sqrt{Y}}$, $F_2= b \cdot f$. Also noting that here $d=Q_1$  we conclude that \eqref{eq:correlate2} is bounded by
\[
\begin{split}
&\ll f(Q_1) \prod_{p|Q_1K}\left(1+\frac{1}{p} \right)^C \frac{Y}{Q_1 (\log Y)^2} \prod_{p \le x} \left(1+\frac{b(p)f(p)}{p} \right)  \\
&\ll f(Q_1) \prod_{p|Q_1K}\left(1+\frac{1}{p} \right)^C \frac{Y}{Q_1 (\log x)^{3/2}} \prod_{\substack{p \le x \\ p \equiv 1 \pamod 4}} \left(1+\frac{f(p)-1}{p} \right).
\end{split}
\]
Hence, the claim follows in the same way as before.
\end{proof}

\begin{lemma} \label{lem:cheby}
Let $(\log \log x)^4 \le U \le \frac{1}{10}(\log x)^{1/2}$.
There exists $C>0$ such that for all $n \in \mathcal N_j(x)$, $j=1,2$, outside a set of
size
$$\ll
\frac{1}{\varepsilon^2} \cdot  \# \mathcal N_j(x)
\prod_{p|Q_1Q_0}\left( 1+\frac{1}{p}\right)^C \frac{(\log \log
  x)^4}{U}$$ the following hold: 
\begin{equation} \label{eq:space2}
    \sum_{\substack{1 \le |k| \le \frac{1}{U}(\log x)^{1/2} \\ k \neq 4}} b(Q_0n+k)=0, 
\end{equation}
\begin{equation} \label{eq:cor2}
    \sum_{\substack{ 1 \le |k| \le  \frac{n}{x}(\log x)^{B} \\ k \neq 4}} \frac{r(Q_0n+k)}{|k|}  \le U,
\end{equation}
and
\begin{equation} \label{eq:cor3}
    \sum_{ |k| \ge U } \frac{r(Q_0n+k)}{k^2} \le  \frac{1}{\log \log x}.
\end{equation}

\end{lemma}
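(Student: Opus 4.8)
The plan is to establish each of the three claims \eqref{eq:space2}, \eqref{eq:cor2}, \eqref{eq:cor3} by a first-moment argument. For a given one of them, bound the number of $n\in\mathcal N_j(x)$ that violate it by a constant times the sum over $n\in\mathcal N_j(x)$ of the nonnegative quantity occurring in it; interchange the order of summation so that the outer sum runs over the shift $k$ (with $k\neq 4$) and the inner sum becomes $\sum_{n\in\mathcal N_j(x)}f(Q_0n+k)$ for an appropriate multiplicative $f$; and apply Lemma~\ref{lem:correlate}. Since $\sum_{1\le|k|\le U^{-1}(\log x)^{1/2},\,k\neq 4}b(Q_0n+k)$ is a nonnegative integer, for \eqref{eq:space2} Markov's inequality bounds the exceptional set by $\sum_{1\le|k|\le U^{-1}(\log x)^{1/2},\,k\neq 4}\ \sum_{n\in\mathcal N_j(x)}b(Q_0n+k)$, and for each such $k$ (note $|k|\le(\log x)^{1/2}\le x^{1/30}$) Lemma~\ref{lem:correlate} with $f=b$ applies; here $\prod_{p\le x}(1+\tfrac{b(p)-1}{p})=\prod_{p\le x,\;p\equiv 3\;(\mathrm{mod}\;4)}(1-\tfrac1p)\asymp(\log x)^{-1/2}$ by Mertens in the progression $3\bmod 4$. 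Summing over $k$ and using the shifted-sum estimate $\sum_{1\le|k|\le K}g(k)\ll K(\log K)^2$ together with $\log(U^{-1}(\log x)^{1/2})\le\tfrac12\log\log x$ then gives a bound of the claimed form.

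For \eqref{eq:cor2} and \eqref{eq:cor3} proceed similarly with $f=\tfrac14 r$, noting $f(p)\le 2=\tau(p)$ and $\prod_{p\le x}(1+\tfrac{r(p)/4-1}{p})\asymp\prod_{p\le x,\;p\equiv 1\;(\mathrm{mod}\;4)}(1+\tfrac1p)\prod_{p\le x,\;p\equiv 3\;(\mathrm{mod}\;4)}(1-\tfrac1p)\asymp 1$. For \eqref{eq:cor2} apply Markov with weight $U^{-1}$, first extend the summation range from $|k|\le\tfrac nx(\log x)^B$ to $|k|\le(\log x)^B$ (valid since $n\le x$, the terms are nonnegative, and $(\log x)^B\le x^{1/30}$ for $x$ large), and bound $\sum_{1\le|k|\le(\log x)^B}g(k)/|k|\ll(\log\log x)^3$ by partial summation. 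For \eqref{eq:cor3} apply Markov with weight $\log\log x$ and split the $k$-sum at $|k|=x^{1/30}$: on $U\le|k|\le x^{1/30}$ use Lemma~\ref{lem:correlate} and $\sum_{|k|\ge U}g(k)/k^2\ll(\log U)^2/U\ll(\log\log x)^2/U$ (partial summation, using $\log U\le\tfrac12\log\log x$), while on $|k|>x^{1/30}$ --- outside the range of validity of Lemma~\ref{lem:correlate} --- the trivial bounds $\sum_{n\in\mathcal N_j(x)}r(Q_0n+k)\ll x^{1+o(1)}$ (from $\sum_{m\le M}r(m)\ll M$) and $\sum_{|k|>x^{1/30}}k^{-2}\ll x^{-1/30}$ show the tail contributes $\ll x^{1-1/30+o(1)}$, which is negligible next to the target since $\#\mathcal N_j(x)\gg\varepsilon^2 x/(\log x)^2$ by \eqref{eq:Nbd} and $\varepsilon\ge(\log\log x)^{-1/2}$.

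Combining the three pieces, each exceptional set has size $\ll\varepsilon^{-2}\#\mathcal N_j(x)\prod_{p\mid Q_0Q_1}(1+\tfrac1p)^C(\log\log x)^3/U$, which for $x$ large is $\ll\varepsilon^{-2}\#\mathcal N_j(x)\prod_{p\mid Q_0Q_1}(1+\tfrac1p)^C(\log\log x)^4/U$; taking the union of the three sets and enlarging $C$ if necessary finishes the proof. (The hypothesis $(\log\log x)^4\le U\le\tfrac1{10}(\log x)^{1/2}$ serves to keep $\log U\asymp\log\log x$ of the right size in the partial summations and to make the exceptional-set bound nontrivial.) I expect the only real work to be bookkeeping the powers of $\log\log x$: one must verify that each sum over the shift $k$ of the weights $g(k)=\tau(|k|)\tau(|k-4|)\prod_{p\mid k}(1+\tfrac1p)^C\prod_{p\mid k-4}(1+\tfrac1p)^C$ loses only $(\log K)^{2}$ rather than a larger power, which uses a Nair--Tenenbaum / Henriot type bound for $\tau$-like multiplicative functions over shifted arguments together with the fact that $\sum_p((1+\tfrac1p)^C-1)/p$ converges (so the extra factors in $g$ do not inflate the exponent), as well as keeping the implied constants $C$ uniform across the three applications and correctly handling the short detour in \eqref{eq:cor3} past the range $|h|\le x^{1/30}$ where Lemma~\ref{lem:correlate} is valid.
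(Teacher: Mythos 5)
Your proposal follows essentially the same argument as the paper: for each of the three claims apply Markov's/Chebyshev's inequality with the appropriate weight, interchange the order of summation so the inner sum becomes $\sum_{n\in\mathcal N_j(x)}f(Q_0n+k)$ with $f=b$ or $f=\tfrac14 r$, invoke Lemma~\ref{lem:correlate}, and then control $\sum_k g(k)$ (resp. $\sum_k g(k)/|k|$, $\sum_k g(k)/k^2$) via Lemma~\ref{lem:NT} and partial summation. The paper gives this argument explicitly for \eqref{eq:space2} and \eqref{eq:cor2} and omits \eqref{eq:cor3} as similar; your treatment of the tail $|k|>x^{1/30}$ by a trivial $r$-sum bound together with the lower bound \eqref{eq:Nbd} on $\#\mathcal N_j(x)$ correctly fills in the omitted step, and the bookkeeping of the $(\log\log x)$ powers matches.
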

\begin{proof}
We first establish \eqref{eq:space2}. By Chebyshev's inequality
\begin{equation} \label{eq:chebystep}
    \# \bigg\{ n \in \mathcal N_j(x) :  \sum_{\substack{1 \le |k| \le \frac{1}{U}(\log x)^{1/2} \\ k \neq 4}} b(Q_0n+k) \ge 1 \bigg\}
    \le \sum_{\substack{1 \le |k| \le \frac{1}{U}(\log x)^{1/2} \\ k \neq 4}} \sum_{n \in \mathcal N_j(x)} b(Q_0n+k) .
\end{equation}
Applying Lemma \ref{lem:correlate} to the inner sum  and noting that
\[
\prod_{p \le x} \left(1+\frac{b(p)-1}{p} \right)\ll \frac{1}{\sqrt{\log x}}
\]
we get that the LHS of \eqref{eq:chebystep} is bounded by
\begin{equation}
    \begin{split}
&\ll\prod_{p |Q_1Q_0} \left(1+\frac{1}{p} \right)^C   \frac{\# \mathcal N_j(x)}{\varepsilon^2 \sqrt{\log x}}  \sum_{\substack{1 \le |k| \le \frac{1}{U}(\log x)^{1/2} \\ k \neq 4}}  g(k)\\
&\ll \prod_{p |Q_1Q_0} \left(1+\frac{1}{p} \right)^C \frac{\# \mathcal N_j(x)}{\varepsilon^2} \frac{(\log \log x)^2}{U},
\end{split}
\end{equation}
where the second step follows upon using Lemma \ref{lem:NT}. 

To prove \eqref{eq:cor2}, we argue similarly and apply Lemmas \ref{lem:NT} and \ref{lem:correlate} to get
\begin{equation} \notag
\begin{split}
&\# \bigg\{ n \in \mathcal N_j(x) : \sum_{\substack{ 1 \le |k| \le  \frac{n}{x}(\log x)^{B} \\ k \neq 4}} \frac{r(Q_0n+k)}{|k|}  > U \bigg\} \\
& \qquad \qquad \qquad \le \frac{1}{U} \sum_{\substack{ 1 \le |k| \le  (\log x)^{B} \\ k \neq 4}}  \frac{1}{|k|}\sum_{n \in \mathcal N_j(x)} r(Q_0n+k) \\
& \qquad \qquad \qquad \ll \frac{\# \mathcal N_j(x)}{\varepsilon^2 U} \prod_{p|Q_0Q_1}\left(1+\frac{1}{p} \right)^C \sum_{\substack{ 1 \le |k| \le  (\log x)^{B} \\ k \neq 4}}  \frac{g(k)}{|k|} \\
& \qquad \qquad \qquad  \ll  \frac{\# \mathcal N_j(x)}{\varepsilon^2 U} \prod_{p|Q_0Q_1}\left(1+\frac{1}{p} \right)^C (\log \log x)^3.
    \end{split}
\end{equation}
We will omit the proof of \eqref{eq:cor3} since it follows similarly.
\end{proof}
For almost all $n \in \mathcal N_1(x)$ it is possible to show that $r(Q_0n+4) \asymp (\log n)^{\log 2/2\pm o(1)}$, however since we do not actually need this estimate we will record the weaker estimate below, which suffices for our purposes and is simpler to prove.
\begin{lemma} \label{lem:typical}
Let $\nu>0$ be sufficiently small.
There exists $C>0$ such that for all $n \in \mathcal N_1(x)$ outside a set of size $$\ll \frac{1}{\varepsilon^2} \# \mathcal N_1(x) \frac{(\log \log x)^C}{(\log x)^{\nu}}$$ the following holds
\begin{equation} \label{eq:space}
      (\log x)^{1/4-\nu} \le r(Q_0n+4) \le (\log x)^{1/2+\nu}. 
\end{equation}
\end{lemma}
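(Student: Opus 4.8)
The plan is to sandwich $r(Q_0n+4)$ between $2^{\omega_1(Q_0n+4)}$ and $\tau(Q_0n+4)$, where $\omega_1(m):=\#\{p\mid m:\ p\equiv 1 \pmod 4\}$, and then to control the first moments of these quantities along $n\in\mathcal N_1(x)$ via Lemma~\ref{lem:correlate} in the case $h=4$, deducing \eqref{eq:space} by Chebyshev's inequality. The starting point is the elementary fact that whenever $b(m)=1$,
\[
4\cdot 2^{\omega_1(m)}\ \le\ r(m)\ =\ 4\prod_{\substack{p^a\|m\\ p\equiv 1 \pmod 4}}(a+1)\ \le\ 4\,\tau(m),
\]
so that $r(Q_0n+4)>0$ for every $n\in\mathcal N_1(x)$, and the two inequalities reduce the lemma to: (i) an upper bound for $\sum_{n\in\mathcal N_1(x)} r(Q_0n+4)$, giving the upper bound in \eqref{eq:space}, and (ii) an upper bound for $\sum_{n\in\mathcal N_1(x)} 2^{-\omega_1(Q_0n+4)}$, giving the lower bound.

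For (i) I would apply Chebyshev's inequality followed by Lemma~\ref{lem:correlate} with $f=\tfrac14 r$ and $h=4$ (this $f$ is admissible by the remark following that lemma). Since $f(p)=2$ for $p\equiv 1 \pmod 4$, the Euler product appearing there satisfies $\prod_{p\le x,\ p\equiv 1\,(4)}\bigl(1+(f(p)-1)/p\bigr)\asymp(\log x)^{1/2}$, so
\[
\#\{n\in\mathcal N_1(x):\ r(Q_0n+4)>(\log x)^{1/2+\nu}\}\ \le\ \frac{1}{(\log x)^{1/2+\nu}}\sum_{n\in\mathcal N_1(x)} r(Q_0n+4)\ \ll\ \frac{1}{\varepsilon^2}\,\#\mathcal N_1(x)\,\frac{r(Q_1)\prod_{p\mid Q_0Q_1}(1+1/p)^{C}}{(\log x)^{\nu}}.
\]
Because $Q_0,Q_1\le(\log x)^{1/10}$ one has $\prod_{p\mid Q_0Q_1}(1+1/p)^{C}\ll(\log\log x)^{C}$ and $r(Q_1)\le 4\tau(Q_1)=(\log x)^{o(1)}$, so after shrinking $\nu$ slightly this is of the required shape.

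For (ii) I would use $r(m)\ge 4\cdot 2^{\omega_1(m)}$: if $r(Q_0n+4)<(\log x)^{1/4-\nu}$ then $\omega_1(Q_0n+4)<K$, where $K:=(1/4-\nu)\tfrac{\log\log x}{\log 2}$, i.e.\ $2^{K}=(\log x)^{1/4-\nu}$. Applying Chebyshev to $2^{-\omega_1(Q_0n+4)}$ and then Lemma~\ref{lem:correlate} with $f=2^{-\omega_1}$ (again admissible) and $h=4$, now with $f(p)=\tfrac12$ for $p\equiv 1 \pmod 4$ so that $\prod_{p\le x,\ p\equiv 1\,(4)}(1-1/(2p))\asymp(\log x)^{-1/4}$ and $f(Q_1)\le 1$, I would obtain
\[
\#\{n\in\mathcal N_1(x):\ \omega_1(Q_0n+4)<K\}\ \le\ 2^{K}\sum_{n\in\mathcal N_1(x)} 2^{-\omega_1(Q_0n+4)}\ \ll\ \frac{2^{K}}{\varepsilon^2}\,\#\mathcal N_1(x)\,\frac{(\log\log x)^{C}}{(\log x)^{1/4}}\ \ll\ \frac{1}{\varepsilon^2}\,\#\mathcal N_1(x)\,\frac{(\log\log x)^{C}}{(\log x)^{\nu}}.
\]
Combining this with the display in (i) proves the lemma. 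The only steps requiring care — but no real ingenuity — are verifying that $f=\tfrac14 r$ and $f=2^{-\omega_1}$ meet the hypotheses of Lemma~\ref{lem:correlate} (this is the content of the remark following it, and reduces to the pointwise inequalities $a+b+1\le(a+1)(b+1)$ and $\omega_1(mn)\ge\omega_1(m)$), and checking that the $Q_0,Q_1$-dependent factors emerging from Lemma~\ref{lem:correlate} are at worst $(\log x)^{o(1)}$ (in fact $\ll(\log\log x)^{O(1)}$ apart from the harmless divisor-function factor $\tau(Q_1)$); the rest is routine, so I do not anticipate a genuine obstacle.
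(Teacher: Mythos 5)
Your proof takes essentially the same route as the paper: the paper proves the lower bound in \eqref{eq:space} exactly as you do (via $r(m)\ge 2^{\omega_1(m)}$, Chebyshev, and Lemma~\ref{lem:correlate} with $f=2^{-\omega_1}$, $h=4$), and it explicitly omits the upper bound as similar, which is precisely the argument you supply with $f=\tfrac14 r$ as anticipated in the remark following Lemma~\ref{lem:correlate}. One small improvement over your hedge \emph{``after shrinking $\nu$ slightly''}: by the construction \eqref{eq:Qdef}, $Q_1$ has only $O(\log\log\log x)$ prime factors, so $r(Q_1)\ll(\log\log x)^{O(1)}$ rather than merely $(\log x)^{o(1)}$, and the stated exceptional-set bound then holds for the given $\nu$ without any adjustment.
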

\begin{proof}
We will only prove the lower bound stated in \eqref{eq:space}.
Let $\omega_1(n)=\sum_{\substack{p |n \\ p \equiv 1 \pmod 4}} 1$. For $n$ which is a sum of two squares $r(n) \ge 2^{\omega_1(n)}$.
 Using this with Chebyshev's inequality and Lemma \ref{lem:correlate} the number of $n \in \mathcal N_1(x)$ which $r(Q_0n+4)<(\log x)^{1/4-\nu}$ is bounded by
\[
\begin{split}
(\log x)^{1/4-\nu}\sum_{n \in \mathcal N_1(x)} 2^{-\omega_1(Q_0n+4)} \ll & (\log x)^{1/4-\nu} \cdot \frac{(\log \log x)^C}{\sqrt{\log x}} \prod_{\substack{p \le x \\ p \equiv 1 \pmod 4}}\left(1+\frac{1}{2p} \right) \cdot \frac{1}{\varepsilon^2} \# \mathcal N_1(x) \\
\ll &  \frac{1}{\varepsilon^2} \# \mathcal N_1(x) \frac{(\log \log x)^C}{(\log x)^{\nu}}
\end{split}
\]
using Lemma \ref{lem:correlate}.
\end{proof}

\begin{proof}[Proof of Proposition \ref{prop:eigenvalue}]
By Theorem \ref{thm:spectral} we get
for all but $O(x/(\log x)^A)$ new eigenvalues $ \lambda_{\ell} \le x$ that
\begin{equation} \notag
\sum_{|m-\ell| \le  \frac{\ell}{x}(\log x)^B} \frac{r(m)}{m-\lambda_{\ell}}=\begin{cases}
\pi \log \lambda_{\ell}+O(1) & \text{in the weak coupling quantization},\\
\frac{1}{\alpha}+O\left(1 \right) & \text{in the strong coupling quantization}.
\end{cases}
\end{equation}
We now consider integers $\ell=Q_0n$ with $n \in \mathcal N_j(x)$,
$j=1,2$ such that the above holds. Using Lemma \ref{lem:cheby}, in
particular \eqref{eq:space2} and \eqref{eq:cor2}  with $U=(\log \log
x)^5$ it follows that for all but $O(\# \mathcal
N_j/(\varepsilon^2(\log \log x)^{1-o(1)}))$ of these integers $n \in
\mathcal N_j(x)$, $j=1,2$, with $\ell=Q_0n$ that $\ell^+=\ell+4$
and 
\[
\begin{split}
\sum_{|m-\ell| \le  \frac{\ell }{x}(\log x)^B} \frac{r(m)}{m-\lambda_{\ell}}=\frac{r(\ell)}{\ell-\lambda_{\ell}}+\frac{r(\ell^+)}{\ell^+-\lambda_{\ell}}+O\left((\log \log x)^5 \right).
\end{split}
\]
Combining the two estimates above completes the proof. \end{proof}

\section{Proofs of the main theorems}

\subsection{Quantization of Observables}
On the unit cotangent bundle $\mathbb S^* M \mathbb \cong 
\mathbb T^2\times S^1$, a smooth function $f \in C^{\infty}(S^1)$ 
has the Fourier expansion
\[
f(x, \phi) =\sum_{ \zeta \in \mathbb Z^2, k \in \mathbb Z} \widehat f(\zeta,k) e^{i \langle x, \zeta \rangle +i k \phi}.
\]
Following Kurlberg and Uebersch\"ar \cite{KU}, we quantize our observables as follows. For $g \in L^2(S^1)$ let
\begin{equation} \label{eq:quantization}
(\tmop{Op}(f) g)(x)=\sum_{ \xi \in \mathbb Z^2 \setminus 0 }\sum_{ \zeta \in \mathbb Z^2 , k \in \mathbb Z} \widehat f(\zeta,k) e^{ik \arg \xi} \widehat g(\xi) e^{i \langle \zeta+\xi, x \rangle}+\sum_{ \zeta \in \mathbb Z^2 , k \in \mathbb Z} \widehat f(\zeta,k) \widehat g(0) e^{i \langle \zeta, x \rangle}.
\end{equation}
Hence, for pure momentum observables $f : S^1 \rightarrow \mathbb R$ one has
\begin{equation} \label{eq:observe}
(\tmop{Op}(f)g)(x)=\sum_{\xi \in \mathbb Z^2 } f\left( \frac{\xi}{|\xi|} \right) \widehat g(\xi) e^{i \langle \xi, x\rangle}
\end{equation}
and for $\xi=0$, $f(\tfrac{\xi}{|\xi|})$ is defined to be $\int_{S^1} f(\theta) \, \frac{d\theta}{2\pi}$.

Let $g_{\lambda}$ be as given in \eqref{eq:glambda}.
Then for $f$ a pure momentum observable it follows from \eqref{eq:glambda} and \eqref{eq:observe} that
\begin{equation} \label{eq:momentum}
\begin{split}
\langle \tmop{Op}(f) g_{\lambda}, g_{\lambda} \rangle=& \frac{1}{16 \pi^4 }\cdot \frac{1}{\lVert G_{\lambda} \rVert_2^2} \sum_{ n \ge 0 }\frac{1}{(n-\lambda)^2} \sum_{ a^2+b^2=n} f\left(\frac{a+ib}{|a+ib|} \right) \\
=&\frac{1}{\sum_{n \ge 0} \frac{r(n)}{(n-\lambda)^2}}  \sum_{ n \ge 0 }\frac{1}{(n-\lambda)^2} \sum_{ a^2+b^2=n} f\left(\frac{a+ib}{|a+ib|} \right).
\end{split}
\end{equation}

\subsection{Measures associated to sequences of almost primes in narrow sectors}
Let $\mathcal N_1, \mathcal N_2$ be as in \eqref{eq:Ndef}.
Before proceeding to the main result of this section we will specify
our choice of $Q_0,Q_1$. Consider the set of primes
\begin{equation} \label{eq:5prime}
\mathcal S=
\{ p : p=a^2+b^2, 0 \le b \le a \text{ and } \, 0 < \arctan(b/a) \le p^{-1/10} \}
\end{equation}
and let $q_j$ be the $jth$ element of $\mathcal S$. It follows from work of Kubilius \cite{Kub} that
\[
\# \{ p \le x : p \in \mathcal S\} \asymp \frac{x^{9/10}}{\log x},
\]
so $q_j \asymp (j \log j)^{10/9}$. 
Let $ T = \lfloor \log  \log x \rfloor$,   $H= \lfloor 100 \log \log \log x\rfloor$ and
\begin{equation} \label{eq:Qpdef}
Q_0'= \prod_{j=T}^{T+H-1} q_j, \qquad Q_1'=\prod_{j=T+H}^{T+2H-1} q_j.
\end{equation}
Also, let $r_0, r_1 \in \mathcal S$ with $\tfrac14 \log \log x \le r_0,r_1 \le \frac12 \log \log x$ and $a_0,a_1 \in \mathbb Z$ with $0 \le a_0,a_1 \le \log \log \log x$.
Let  $m_0,m_1$ be integers, which are fixed (in terms of $x$),
whose prime factors are all congruent to  $1 \pmod 4$. 
Write $(m_0,m_1)=p_1^{e_1}\cdots p_s^{e_s}$ and let $g'=\widetilde p_1^{e_1} \cdots \widetilde p_s^{e_s}$
where $\tfrac12 \log \log x < \widetilde p_j < \log \log x$, $\widetilde p_j=c_j^2+d_j^2$ with $0 \le c_j \le d_j$
and $\arctan(c_j/d_j)=\arctan(b_j/a_j)+O(1/(\log \log x)^{1/10})$ where $a_j^2+b_j^2=p_j$ with $0 \le b_j \le a_j$, for each $j=1, \ldots, s$.
We now take 
\begin{equation} \label{eq:Qdef}
Q_0=Q_0'm_0 r_0^{a_0}, \qquad Q_1=Q_1' \frac{m_1}{(m_0,m_1)} r_1^{a_1}g'.
\end{equation}
Note that $(Q_0,Q_1)=1$ and that $Q_0,Q_1 \ll \exp(200 (\log \log \log
x)^2) \le (\log x)^{1/10}$ so that this choice of $Q_0,Q_1$ is
consistent with our prior assumption. 
For $j=1,2$ let
\begin{equation} \label{eq:Mdef}
\mathcal M_j(x)=\{ m \le x : m=Q_0n \text{ and } n \in \mathcal N_j\}. 
\end{equation}
By \eqref{eq:Nbd} and \eqref{eq:Nbd2},
\begin{equation} \label{eq:M1bd}
 \#   \mathcal M_1(x) \asymp \varepsilon^2
   \frac{1}{\varphi(Q_1)} \frac{x \log \log x}{Q_0(\log x)^{3/2}} 
\end{equation}
and
\begin{equation} \label{eq:M2bd}
\#  \mathcal M_2(x) \asymp \varepsilon^2
   \frac{1}{\varphi(Q_0Q_1)} \frac{x \log \log x}{(\log x)^{2}}. 
\end{equation}
We also now assume that
\[
\varepsilon = (\log \log x)^{-1/4}
\]

\begin{lemma} \label{lem:inert}
Let $Q_0,Q_1$ be as in \eqref{eq:Qdef} and $\varepsilon, \eta>0$ be as in Proposition \ref{prop:sieve}. 
Let $m \in \mathcal M_j(x)$, $j=1,2$ where $\mathcal M_j(x)$ is defined as in \eqref{eq:Mdef}. Then for $f\in C^1(S^1)$ with $|f'| \ll1$ 
\begin{equation} \label{eq:inv1}
\frac{1}{r(m)} \sum_{a^2+b^2=m}f\left(\frac{a+ib}{|a+ib|} \right)
=\frac{1}{r(m_0)}  \sum_{a^2+b^2=m_0}f\left(\frac{a+ib}{|a+ib|} \right)+O\left( \varepsilon \right).
\end{equation}
Under the same hypotheses, we have for $m=Q_0n \in \mathcal N_2(x) $
that there exists an integer $\ell_n$ which is a sum of two squares
with $\#\{ p|\ell_n \} \le 2/\eta$ such that 
\begin{equation} \label{eq:inv2}
\frac{1}{r(m^+)} \sum_{a^2+b^2=m^+}f\left(\frac{a+ib}{|a+ib|} \right)
=\frac{1}{r(m_1 \ell_n)}  \sum_{a^2+b^2=m_1\ell_n}f\left(\frac{a+ib}{|a+ib|} \right)+O\left(  \frac{1}{(\log \log x)^{1/11}}\right).
\end{equation}
\end{lemma}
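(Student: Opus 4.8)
The plan is to exploit the (essentially multiplicative) structure of the assignment $n \mapsto \mu_n$ on sums of two squares. Passing to ideals of $\mathbb{Z}[i]$, let $\tilde\mu_n$ denote the uniform probability measure on $\{\arg\mathfrak{a} \bmod \tfrac{\pi}{2} : N\mathfrak{a} = n\}$, viewed as a measure on $\mathbb{R}/\tfrac{\pi}{2}\mathbb{Z}$; then $\mu_n$ is the lift of $\tilde\mu_n$ under $S^1 \to \mathbb{R}/\tfrac{\pi}{2}\mathbb{Z}$, so that $\int_{S^1} f\, d\mu_n = \int \bar f\, d\tilde\mu_n$ with $\bar f(\theta) = \tfrac14\sum_{k=0}^3 f(\theta + k\pi/2)$, and most importantly $\tilde\mu_{ab} = \tilde\mu_a * \tilde\mu_b$ whenever $(a,b)=1$. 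Since $\|\bar f'\|_\infty \le \|f'\|_\infty \ll 1$, shifting the argument by an angle $\phi$ changes $\int \bar f\, d\tilde\mu$ by $O(|\phi|)$; moreover the first absolute moment about $0$ of a convolution of finitely many measures, each supported in a small arc around $0$, is at most the sum of the individual first absolute moments (triangle inequality).

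For \eqref{eq:inv1}: for $x$ large $m_0$ is coprime to the appended factor $M := Q_0' r_0^{a_0} p_1 p_2$, so from $m = Q_0 n = m_0 M$ and multiplicativity $\tilde\mu_m = \tilde\mu_{m_0} * \tilde\mu_M$, whence
\[
\int f\, d\mu_m - \int f\, d\mu_{m_0} = \int\!\!\!\int \bigl[\bar f(\theta+\phi) - \bar f(\theta)\bigr]\, d\tilde\mu_{m_0}(\theta)\, d\tilde\mu_M(\phi),
\]
which is $\ll \|f'\|_\infty \int |\phi|\, d\tilde\mu_M(\phi)$. Writing $\tilde\mu_M = \tilde\mu_{p_1} * \tilde\mu_{p_2} * \nu$, where $\nu$ is the convolution of the $\tilde\mu_{q^{v_q}}$ over primes $q \mid Q_0' r_0^{a_0}$, the moment splits: $\int |\phi|\, d\tilde\mu_{p_i} \le \varepsilon$ directly from the definitions of $\mathcal P_\varepsilon, \mathcal P_\varepsilon'$, while every prime $q \mid Q_0' r_0^{a_0}$ lies in $\mathcal S$ and satisfies $q \gg \log\log x$, so the narrow-sector condition defining $\mathcal S$ gives $\int |\phi|\, d\tilde\mu_{q^{v_q}} \le v_q q^{-1/10}$, with $\sum_q v_q \le H + a_0 \ll \log\log\log x$. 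Summing, $\int |\phi|\, d\tilde\mu_M \ll \varepsilon$, which yields \eqref{eq:inv1}.

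For \eqref{eq:inv2}: here $m^+ = Q_0 n + 4 = Q_1 \ell_n$ with $\ell_n$ a sum of two squares having $\le 2/\eta$ prime factors (see \eqref{eq:primefactors}), all exceeding $y = x^\eta$; since $m_1, Q_1 \ll (\log x)^{1/10}$ we have $(m_1, \ell_n) = (Q_1, \ell_n) = 1$, so $\tilde\mu_{m^+} = \tilde\mu_{Q_1} * \tilde\mu_{\ell_n}$ and $\tilde\mu_{m_1 \ell_n} = \tilde\mu_{m_1} * \tilde\mu_{\ell_n}$. As convolving with $\tilde\mu_{\ell_n}$ does not increase Lipschitz constants, it suffices to compare $\tilde\mu_{Q_1}$ with $\tilde\mu_{m_1}$ against $\bar f$. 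Now $Q_1 = Q_1' r_1^{a_1} \cdot \tfrac{m_1}{(m_0,m_1)} g'$, and by construction $g' = \prod_j \tilde p_j^{e_j}$ is a prime-by-prime angular replica of $(m_0,m_1) = \prod_j p_j^{e_j}$, with $\arg \pi_{\tilde p_j} = \arg \pi_{p_j} + O((\log\log x)^{-1/10})$. Since $m_0, m_1$ (hence $s$ and the $e_j$) are fixed, replacing $(m_0,m_1)$ by $g'$ moves $\tilde\mu_{m_1}$ by $O((\log\log x)^{-1/10})$ against $\bar f$; the remaining factor $Q_1' r_1^{a_1}$ contributes only narrow-sector Gaussian primes (arguments $\ll (\log\log x)^{-1/10}$, at most $\ll \log\log\log x$ of them counted with multiplicity), moving the measure by a further $O(\log\log\log x \cdot (\log\log x)^{-1/10})$. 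Adding these, $\int f\, d\mu_{Q_1} = \int f\, d\mu_{m_1} + O(\log\log\log x \cdot (\log\log x)^{-1/10})$, which is $O((\log\log x)^{-1/11})$ for $x$ large; this gives \eqref{eq:inv2} with the very same $\ell_n$.

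The hardest part is the bookkeeping behind \eqref{eq:inv2}: (i) setting up the ideal-theoretic multiplicativity $\tilde\mu_{ab} = \tilde\mu_a * \tilde\mu_b$ and the reduction to comparing $\tilde\mu_{Q_1}$ with $\tilde\mu_{m_1}$ carefully modulo the unit group $\{\pm 1, \pm i\}$, and (ii) controlling the cumulative angular displacement of the many auxiliary prime factors (those of $Q_0', Q_1', r_0^{a_0}, r_1^{a_1}$) and verifying that swapping $(m_0,m_1)$ for its angular surrogate $g'$ perturbs $\mu_{m_1}$ by no more than the claimed amount (the existence of the surrogate primes $\tilde p_j$ in short intervals with prescribed Gaussian argument being supplied, via Kubilius/Hecke equidistribution, by the construction preceding the lemma). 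For step (ii) it is convenient — and harmless, since every $\mu_n$ is a weak limit of $\mu_{n'}$ with $n'$ squarefree — to reduce to $m_0, m_1$ squarefree, so that each prime of $m_1$ corresponds to exactly one prime of $Q_1$ (itself or its surrogate $\tilde p_j$) with the same exponent; the estimate \eqref{eq:inv1} is then comparatively routine.
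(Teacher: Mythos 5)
Your proof takes essentially the same route as the paper's: both exploit the multiplicativity of the factorization $m = Q_0' m_0 r_0^{a_0} n$ (resp.\ $m^+ = Q_1' r_1^{a_1} \tfrac{m_1}{(m_0,m_1)} g' \ell_n$) over coprime factors and bound the cumulative angular displacement coming from the narrow-sector primes in $\mathcal P_\varepsilon$, $\mathcal P_\varepsilon'$ and $\mathcal S$, together with the surrogate replacement $g'$ for $(m_0,m_1)$. Your reformulation via the $\pi/2$-reduced measures $\tilde\mu_n$, the identity $\tilde\mu_{ab} = \tilde\mu_a * \tilde\mu_b$ for coprime $a,b$, and a Wasserstein-$1$ contraction argument is a cleaner way of organizing what the paper does by hand with the nested sum over Gaussian integers $\alpha,\beta,\gamma,\delta$ and the unit identity \eqref{eq:unit}; it also makes the reduction from $\tilde\mu_{m^+}$ vs.\ $\tilde\mu_{m_1\ell_n}$ to $\tilde\mu_{Q_1}$ vs.\ $\tilde\mu_{m_1}$ transparent.

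One caveat, shared with the paper's own proof: your claim $\int |\phi|\, d\tilde\mu_M \ll \varepsilon$ in \eqref{eq:inv1} is too strong. The contribution from $Q_0' r_0^{a_0}$ is of order $\log\log\log x \cdot (\log\log x)^{-1/10}$, and since the paper later fixes $\varepsilon = (\log\log x)^{-1/4}$, this dominates $\varepsilon$; the correct right-hand side of \eqref{eq:inv1} is $O\bigl((\log\log x)^{-1/11}\bigr)$, matching what \eqref{eq:inv2} already displays. The paper's proof makes the same overstatement (it records the $O((\log\log x)^{-1/11})$ displacements for $\alpha,\gamma$ and then writes the total error as $O(\varepsilon r(m))$), and since every application of the lemma only requires the error to be $o(1)$ (or $O((\log\log x)^{-1/11})$ as in Theorem \ref{thm:scar2}), nothing downstream is affected — but the bound as literally written is not what either argument yields.
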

\begin{proof}
First note that for a unit, $u$ of $\mathbb Z[i]$ i.e. $u \in \{\pm 1, \pm i\}$, that for any $n \in \mathbb N$
\begin{equation} \label{eq:unit}
\sum_{a^2+b^2=n}f\left(\frac{u(a+ib)}{|a+ib|} \right)=\sum_{a^2+b^2=n}f\left(\frac{a+ib}{|a+ib|} \right).
\end{equation}

For $m \in \mathcal M_j(x)$ with $j=1$ or $j=2$ write $m=Q_0'm_0r_0^{a_0}n$ where $n \in \mathcal N_j(x)$. 
The factorizations of the ideals $(m)=((a+ib)(a-ib))$ in $\mathbb Z[i]$ are in
one-to-one correspondence with factorizations $(Q_0')=((c+id)(c-id))$,
$(m_0)=((e+if)(e-if))$, $(r_0^{a_0})=((g+ih)(g-ih))$ and $(n)=((k+il)(k-il))$, since $Q_0',m_0,n$ are
pairwise co-prime. Hence, it follows from this and \eqref{eq:unit}
that 
\begin{equation} \label{eq:angle}
 \frac{1}{r(m)}\sum_{a^2+b^2=m} f\left( \frac{a+ib}{|a+ib|}\right) = \frac{1}{r(Q_0')r(m_0)r(r_0^{a_0})r(n)}\sum_{\substack{\alpha \in \mathbb Z[i] \\ \alpha \overline{\alpha}=Q_0'}} \sum_{\substack{ \beta \in \mathbb Z[i] \\ \beta \overline{\beta}=m_0 } } \sum_{\substack{ \gamma \in \mathbb Z[i] \\ \gamma \overline{\gamma}=r_0^{a_0} } } \sum_{\substack{ \delta \in \mathbb Z[i] \\ \delta \overline{\delta}=n } } f\left(\frac{\alpha \beta \gamma \delta}{|\alpha \beta \gamma \delta|} \right).
\end{equation}
Let $\mathcal S$ be as in \eqref{eq:5prime} and write the $j$th
element of $\mathcal S$ as
$q_j=a_j^2+b_j^2$, with $0
\le b_{j} \le a_{j}$. 
By construction,  for $\alpha \in \mathbb Z[i]$ with  $\alpha \overline \alpha=Q_0'$ we can write $\alpha=u \prod_{j \in J} (a_j+\epsilon_jib_j)$ where $J = \{ T, T+1, \ldots, T+H_1-1\}$, $\epsilon_j\in \{ \pm 1\}$ and $u$ is a unit. It follows that
\[
\begin{split}
\frac{\alpha}{|\alpha|}=&u\prod_{j \in J}\frac{a_j+\epsilon_jib_j}{|a_j+ib_j|}\\
=&u\left(1+O\left( \sum_{j \in J } |\arctan(b_j/a_j)| \right)\right)
=u+O\left( \frac{1}{(\log \log x)^{1/11}}\right)
\end{split}
\]
where the unit $u$ depends on $\alpha$.
Also for $\gamma \in \mathbb Z[i]$ with $\gamma \overline \gamma=r_0^{a_0}$, we have
$
\frac{\gamma}{|\gamma|}=u+O(1/(\log \log x)^{1/11})
$ and  for $\delta \in \mathbb Z[i]$ with $\delta \overline \delta=n$, we have
$
\frac{\delta}{|\delta|}=u+O(\varepsilon).
$
Hence by this and \eqref{eq:unit}
\[
\begin{split}
\sum_{\substack{\alpha \in \mathbb Z[i] \\ \alpha \overline{\alpha}=Q_0'}} \sum_{\substack{ \beta \in \mathbb Z[i] \\ \beta \overline{\beta}=m_0 } }  \sum_{\substack{ \gamma \in \mathbb Z[i] \\ \gamma \overline{\gamma}=r_0^{a_0} } } \sum_{\substack{ \delta \in \mathbb Z[i] \\ \delta \overline{\delta}=n } }  f\left(\frac{\alpha \beta \gamma}{|\alpha \beta \gamma|} \right)=&\sum_{\substack{\alpha \in \mathbb Z[i] \\ \alpha \overline{\alpha}=Q_0'}} \sum_{\substack{ \gamma \in \mathbb Z[i] \\ \gamma \overline{\gamma}=r_0^{a_0} } } \sum_{\substack{ \delta \in \mathbb Z[i] \\ \delta \overline{\delta}=n } }  \left(  \sum_{\substack{ \beta \in \mathbb Z[i] \\ \beta \overline{\beta}=m_0 } } f\left(\frac{u_{\alpha, \gamma,\delta} \cdot \beta }{| \beta |} \right) \right)+O(\varepsilon r(m)) \\
=&r(Q_0) r(r_0^{a_0})r(n) \sum_{ a^2+b^2=m_0} f\left(\frac{ a+ib}{| a+ib|}\right)+O(\varepsilon r(m)),
\end{split}
\]
thereby proving \eqref{eq:inv1}.

The proof of \eqref{eq:inv2} follows along the same lines upon noting
that for $m=Q_0n \in \mathcal M_2(x)$ we can write
$m^+=Q_1' r_1^{a_1} \frac{m_1}{(m_1,m_0)} g' \ell_n$ where $\ell_n$ is a sum of two
squares. Note that $Q_1',\frac{m_1}{(m_1,m_0)}, r_1^{a_1}, g',\ell_n$ are pairwise co-prime
by construction since all the prime divisors of $\ell_n$ are $\ge y$;
the latter also implies that $\#\{ p|\ell_n \} \le
2/\eta$.
\end{proof}

\subsection{Proof of Theorem \ref{thm:scar}}
WLOG we can assume all the prime factors of $m_0$ are congruent to $1 \pmod 4$ (see \eqref{eq:angle}).
Let $Q_0,Q_1$ be as in \eqref{eq:Qdef} and $\mathcal M_1(x)$ be as in \eqref{eq:Mdef} and recall for $m \in \mathcal M_1(x)$ that $m=Q_0n$ where $n \in \mathcal N_1(x)$ and $\mathcal N_1$ is as in \eqref{eq:Ndef}. 
By \eqref{eq:space2} and Lemma \ref{lem:typical}
it follows that for all but at most $o(\# \mathcal M_1(x))$ integers
$m \in \mathcal M_1(x)$ that $m^+=m+4$, $(\log x)^{1/4-\nu} \le r(m^+) \le (\log
x)^{1/2+\nu}$ (for any fixed $\nu>0$) and $4 \le r(m) \ll (\log x)^{o(1)}$.
Combining this with Proposition \ref{prop:eigenvalue}  we get that
for all but $o(\# \mathcal M_1(x))$ integers
$m \in \mathcal M_1(x)$ that $\lambda_m-m=o(1)$ and moreover
\begin{equation} \label{eq:eigen1}
\lambda_m-m \asymp 
\begin{cases}
\displaystyle \frac{r(m)}{\log \lambda_m} & \text{ in the weak coupling quantization},\\
\displaystyle \frac{r(m)}{r(m^+)} & \text{ in the strong coupling quantization}.
\end{cases}
\end{equation}
Also note that for such $m$ as above, we also have $|\lambda_m-m^+|
\ge 3$. Hence, using the above estimate along with \eqref{eq:space2}
and 
\eqref{eq:cor3}  with $U= (\log \log x)^5$ we get for all but at most
$o(\# \mathcal M_1(x))$ integers $m
\in \mathcal M_1(x)$ 
  that
(in both  
cases)
\begin{equation} \label{eq:pf1}
\begin{split}
\sum_{\ell \ge 0} \frac{r(\ell)}{(\ell-\lambda_m)^2} =&\frac{r(m)}{(m-\lambda_m)^2}+\frac{r(m^+)}{(m^+-\lambda_m)^2}+o(1) \\
=&\frac{r(m)}{(m-\lambda_m)^2}\left(1+O\left( \frac{r(m^+)(m-\lambda_m)^2}{r(m)} \right) \right)+o(1)\\
=&\frac{r(m)}{(m-\lambda_m)^2}\left(1+o(1) \right).
\end{split}
\end{equation}
Similarly, for all but at most $o(\# \mathcal M_1(x))$ integers $m \in
\mathcal M_1(x)$
\begin{equation} \label{eq:pf2}
\sum_{ \ell \ge 0 }\frac{1}{(\ell-\lambda_m)^2} \sum_{ a^2+b^2=\ell} f\left(\frac{a+ib}{|a+ib|} \right)=
\frac{1}{(m-\lambda_m)^2}\sum_{ a^2+b^2=m} f\left(\frac{a+ib}{|a+ib|} \right)+O(r(m^+)).
\end{equation}
Therefore, combining \eqref{eq:momentum}, \eqref{eq:eigen1},
\eqref{eq:pf1} and \eqref{eq:pf2} it follows for all but at most $o(\#
\mathcal M_1(x))$ integers $m \in \mathcal M_1(x)$ we have
that
\[
\begin{split}
\langle \tmop{Op}(f) g_{\lambda_m}, g_{\lambda_m} \rangle=& (1+o(1)) \frac{(m-\lambda_m)^2}{r(m)} \cdot \left( \frac{1}{(m-\lambda_m)^2}\sum_{ a^2+b^2=m} f\left(\frac{a+ib}{|a+ib|} \right)+O(r(m^+))\right) \\
=&(1+o(1))\frac{1}{r(m)} \sum_{ a^2+b^2=m} f\left(\frac{a+ib}{|a+ib|} \right) +o(1)\\
=& (1+o(1)) \frac{1}{r(m_0)} \sum_{a^2+b^2=m_0} f\left(\frac{a+ib}{|a+ib|} \right)+O(\varepsilon)
\end{split}
\]
where the last step follows by \eqref{eq:inv1}. The estimate for the density of this subsequence of eigenvalues follows immediately from \eqref{eq:M1bd}, noting that $Q_0,Q_1 \ll (\log x)^{o(1)}$.


\subsection{Proof of Theorem \ref{thm:scar2}} 
WLOG we can assume all the prime factors of $m_0,m_1$ are congruent to $1 \pmod 4$ (see \ref{eq:angle}).
For sake of brevity let $\mathcal L_2=\log \log x$.
Let $Q_0,Q_1$ be as in \eqref{eq:Qdef} and $\mathcal M_2(x)$ be as in
\eqref{eq:Mdef} and recall for $m \in \mathcal M_2(x)$ that $m=Q_0n$
where $n \in \mathcal N_2(x)$ where $\mathcal N_2$ is as in
\eqref{eq:Ndef}. 
Note for each $m \in \mathcal M_2(x)$ that $r(m) \gg \mathcal
L_2^{10}$. Also, by construction $r(m)/r(m+4) \asymp 
\frac{a_0+1}{a_1+1}$ where $H,a_0,a_1$ are also as in
\eqref{eq:Qdef} and note $a_0,a_1 \le \log \mathcal L_2$.  
Applying Proposition \ref{prop:eigenvalue} we get that for all $m \in
\mathcal M_2(x)$ outside an exceptional set of size $o(\#\mathcal
M_2(x))$ that $m^+=m+4$ and 
\begin{equation} \label{eq:prf1}
\frac{\lambda_m-m}{m^+-\lambda_m}=\frac{r(m)}{r(m^+)}\left(1+O\left(\frac{\mathcal L_2^6}{r(m)}\right)\right)=\frac{r(m)}{r(m^+)}\left(1+O\left(\mathcal L_2^{-4}\right)\right).
\end{equation}
In particular, this implies that $\lambda_m-m \gg \mathcal L_2^{-1}$ and $m^+-\lambda_m \gg \mathcal L_2^{-1}$.
As before, using \eqref{eq:space2} and \eqref{eq:cor3}  with $U= \mathcal L_2^5$ we get for all but at most $o(\# \mathcal M_2(x))$ integers $m \in \mathcal M_2(x)$ that 
\begin{equation} \label{eq:prf2}
\begin{split}
\sum_{\ell \ge 0} \frac{r(\ell)}{(\ell-\lambda_m)^2}=\frac{r(m)}{(m-\lambda_m)^2}+\frac{r(m^+)}{(m^+-\lambda_m)^2}+O(\mathcal L_2^{-1})
\end{split}
\end{equation}
and
\begin{equation} \label{eq:prf3}
\begin{split}
\sum_{ \ell \ge 0 }\frac{1}{(\ell-\lambda_m)^2} \sum_{ a^2+b^2=\ell} f\left(\frac{a+ib}{|a+ib|} \right)=&
\frac{1}{(m-\lambda_m)^2}\sum_{ a^2+b^2=m} f\left(\frac{a+ib}{|a+ib|} \right)\\
&+\frac{1}{(m^+-\lambda_n)^2}\sum_{ a^2+b^2=m^+} f\left(\frac{a+ib}{|a+ib|} \right)
+O(\mathcal L_2^{-1}).
\end{split}
\end{equation}
Let $C_m=\frac{1}{1+r(m)/r(m^+)}$. Applying \eqref{eq:prf1},\eqref{eq:prf2}, and \eqref{eq:prf3} in \eqref{eq:momentum} we get
\begin{equation} \label{eq:almost}
\begin{split}
&\langle \tmop{Op}(f) g_{\lambda_m}, g_{\lambda_m} \rangle=(1+O(\mathcal L_2^{-1})) \left(\frac{r(m)}{(m-\lambda_m)^2}+\frac{r(m^+)}{(m^+-\lambda_m)^2} \right)^{-1} \\
& \qquad \times \left( \frac{1}{(m-\lambda_m)^2}\sum_{ a^2+b^2=m} f\left(\frac{a+ib}{|a+ib|} \right)
+\frac{1}{(m^+-\lambda_m)^2}\sum_{ a^2+b^2=m^+} f\left(\frac{a+ib}{|a+ib|} \right)+O(\mathcal L_2^{-1})\right) \\
&\qquad =\frac{C_m}{r(m)} \sum_{ a^2+b^2=m} f\left(\frac{a+ib}{|a+ib|} \right)+\frac{1-C_m}{r(m^+)}\sum_{ a^2+b^2=m^+} f\left(\frac{a+ib}{|a+ib|} \right)+O(\mathcal L_2^{-1}).
\end{split}
\end{equation}
Applying \eqref{eq:inv1} to the first sum above we get
\begin{equation} \label{eq:almost1}
\frac{C_m}{r(m)} \sum_{ a^2+b^2=m} f\left(\frac{a+ib}{|a+ib|} \right)= \frac{C_m}{r(m_0)} \sum_{ a^2+b^2=m_0} f\left(\frac{a+ib}{|a+ib|} \right)+O(\varepsilon).
\end{equation}
Similarly, applying \eqref{eq:inv2} to the second sum on the RHS of \eqref{eq:almost} we get that
\begin{equation} \label{eq:almost2}
\frac{1-C_m}{r(m^+)} \sum_{ a^2+b^2=m^+} f\left(\frac{a+ib}{|a+ib|} \right)= \frac{1-C_m}{r(m_1\ell_n)} \sum_{ a^2+b^2=m_1 \ell_n} f\left(\frac{a+ib}{|a+ib|} \right)+O(\mathcal L_2^{-1/11}),
\end{equation}
for some integer $\ell_n$ with $\#\{p: p|\ell_m\} \le 2/\eta$ by \eqref{eq:primefactors}. Using \eqref{eq:almost1} and \eqref{eq:almost2} in \eqref{eq:almost} completes the proof upon taking $\varepsilon=\mathcal L_2^{-1/2}$. The estimate for the density of this subsequence of eigenvalues follows from \eqref{eq:M2bd}.

\subsection{Proof of Theorem \ref{thm:allmeasures}} \label{sec:twinprimes}
The proof of Theorem \ref{thm:allmeasures} relies on the following hypothesis concerning the distribution of primes.
\begin{hypothesis} \label{hyp:twinprimes}
Let $Q_1,Q_0$ be as in \eqref{eq:Qdef} and $  \varepsilon \ge  (\log \log x)^{-1/2}$ be sufficiently small. Also let $y=x^{\eta}$ where $\eta>0$ is sufficiently small. Then the number of solutions $(u,v) \in \mathbb Z^2$ to 
\[
Q_1u-Q_0v=4
\]
where $v=p_1p_2$ and $u=p_3$ are primes satisfying $1_{\mathcal
  P_{\varepsilon}}(p_1)1_{\mathcal P_{\varepsilon}'}(p_2)1_{\mathcal
  P_{\varepsilon}}(p_3)=1$, $p_3 > y$ such that $ v \le x$
is
\[
\gg \varepsilon^3 \frac{Q_0}{\varphi(Q_0 Q_1)} \frac{x \log \log x}{(\log x)^2}.
\]
where $\mathcal P_{\varepsilon}, \mathcal P_{\varepsilon}'$ are as in \eqref{eq:Pepsidef}.
\end{hypothesis}



\begin{proof}[Proof of Theorem \ref{thm:allmeasures}]
Recall the definition of $\mathcal N_2$ given in \eqref{eq:Ndef}. Let us define
\[
\mathcal N_3=\{ n \in \mathcal N_2 : Q_0n+4=Q_1 p , b(p)=1, \, \&\, |\theta_p| \le \varepsilon\}.
\]
Following \eqref{eq:Mdef} we also define \[\mathcal M_3(x)=\{ m \le x : m=Q_0n \text{ and } n \in \mathcal N_3\}.\]
By Hypothesis \ref{hyp:twinprimes} and  \eqref{eq:M2bd} it follows that 
\begin{equation} \label{eq:M3bd}
\# \mathcal  M_3(x) \asymp \varepsilon \# \mathcal M_2(x)
\end{equation}
where we also have used an upper bound sieve to get that $\# \mathcal M_3(x) \ll \varepsilon \# \mathcal M_2(x)$. Observe that $\mathcal M_3(x) \subset \mathcal M_2(x)$ and the exceptional set in Proposition \ref{prop:eigenvalue} is $o(\# \mathcal M_3(x))$ since we take $\varepsilon =(\log \log x)^{-1/4}$. Hence, we get that \eqref{eq:prf1} holds for $m \in \mathcal M_3(x)$ outside an exceptional set of size $o(\# \mathcal M_3(x))$. Similarly, we can conclude that \eqref{eq:prf2} and \eqref{eq:prf3} also hold for all $m \in \mathcal M_3(x)$ outside an exceptional set of size $o(\# \mathcal M_3(x))$. Therefore, arguing as in \eqref{eq:almost}--\eqref{eq:almost2} we conclude that for $m \in \mathcal M_3(x)$ outside an exceptional set of size $o(\#\mathcal M_3(x))$ we have that
\begin{equation}\label{eq:hyp1}
\langle \tmop{Op}(f) g_{\lambda_m}, g_{\lambda_m} \rangle =\frac{C_m}{r(m_0)} \sum_{ a^2+b^2=m_0} f\left(\frac{a+ib}{|a+ib|} \right)+ \frac{1-C_m}{r(m_1\ell_n)} \sum_{ a^2+b^2=m_1 \ell_n} f\left(\frac{a+ib}{|a+ib|} \right)+O(\mathcal L_2^{-1/11})
\end{equation}
where $m_0,m_1$ are arbitrary, fixed integers whose prime factors are all congruent to $1 \pmod 4$ and $C_m=1/(1+r(m
)/r(m+4))$. By our hypothesis we have that $\ell_n=p$ with $|\theta_p| \le \varepsilon$ and $(m_1,p)=1$. Hence, repeating the argument used to prove \eqref{eq:inv1} it follows that
\begin{equation}\label{eq:hyp2}
\frac{1}{r(m_1\ell_n)} \sum_{ a^2+b^2=m_1 \ell_n} f\left(\frac{a+ib}{|a+ib|} \right)=\frac{1}{r(m_1)} \sum_{ a^2+b^2=m_1 } f\left(\frac{a+ib}{|a+ib|} \right)+O(\varepsilon).
\end{equation}

Given $0 < c <1$ with $c=d/e \in \mathbb Q$
we will now specify our choice of $a_0, a_1$ (from \eqref{eq:Qpdef}). Recall we allow $a_0,a_1$ to grow slowly with $x$ and $Q_0',Q_1'$ have the same number of prime factors. 
Also, by construction $r(\frac{m_1}{(m_0,m_1)} g')=r(m_1)$.
Let $\mathcal L=\lfloor (\log \log \log x)^{1/2} \rfloor$.
We take\[a_0= 2(e-d) r(m_1)\mathcal L \qquad \text{ and } \qquad a_1=dr(m_0) \mathcal L.\]
Hence,
\begin{equation}\label{eq:hyp3}
C_m=\frac{1}{1+\frac{8r(m_0)(a_0+1)}{16 r(m_1)(a_1+1)}}=\frac{d}{e}+o(1).
\end{equation}

We are now ready to complete the proof. Given any attainable measures $\mu_{\infty_0}$, $\mu_{\infty_1}$ and $0 \le c \le 1$ we can take $\{m_{0,j}\}_j$ $\{m_{1,j}\}$ such that $\mu_{0,j}$ weakly converges to $\mu_{\infty_0}$ and  $\mu_{1,j}$ weakly converges to $\mu_{\infty_1}$, as $j \rightarrow \infty$. 
We also take $\{a_{0,j}\}_j,\{a_{1,j}\}_j$ so that $d_j/e_j \rightarrow c$ as $j \rightarrow \infty$. Therefore, by \eqref{eq:hyp1},\eqref{eq:hyp2}, and \eqref{eq:hyp3} we conclude that there exists $\{\lambda_{\ell}\}_{\ell} \subset \Lambda_{\text{new}}$ such that
 \[
 \langle \tmop{Op}(f) g_{\lambda_{\ell}}, g_{\lambda_{\ell}} \rangle \xrightarrow{\ell \rightarrow \infty} c\int_{S^1} f d\mu_{\infty_0} +(1-c) \int_{S^1} f d\mu_{\infty_1}.
 \]
\end{proof}
\appendix
\section{Arithmetic over $\mathbb Q(i)$} \label{app:BV}

Consider the number field $\mathbb Q(i)$ with ring of integers
$\mathbb Z[i]$.  For $\mathfrak b$ a non-zero integral ideal of
$\mathbb Z[i]$ the residue classes $\alpha \pmod{ \mathfrak b}$,
where $(\alpha)$ and $\mathfrak b$ are relatively prime ideals,
form the multiplicative group $(\mathbb Z[i]/\mathfrak b)^*$.  We now
summarize some well-known facts, which may be found in \cite{Neukirch}
or \cite{IK}.  A \textit{Dirichlet character} $\pmod {\mathfrak b}$ is
a group homomorphism
\[
\chi : (\mathbb Z[i]/\mathfrak b) ^*\rightarrow S^1.
\]
We extend $\chi$ to all of $\mathbb Z[i]$ by setting $\chi(\mathfrak a)=0$ for $\mathfrak a$ and $\mathfrak b$ which are not relatively prime.
Let $I$ denote multiplicative group of
non-zero fractional ideals and
$I_{\mathfrak b}=\{\mathfrak a \in I : \mathfrak a \text{ and } \mathfrak b
\text{ are 
  relatively prime}\}$.  A \textit{Hecke Gro\ss encharakter} $\pmod
{\mathfrak b}$ is a homomorphism 
$\psi: I_{\mathfrak b} \rightarrow \mathbb C \setminus \{0\}$ for which
there exists a pair of homomorphisms
\[
\chi \, : \,  (\mathbb Z[i]/\mathfrak b) ^*\rightarrow S^1, \qquad
\chi_{\infty} \, :  \, \mathbb C^* \rightarrow S^1
\]
such that for an ideal $(\alpha)$ with $\alpha \in \mathbb Z[i]$
\[
\psi((\alpha))=\chi(\alpha)\chi_{\infty}(\alpha).
\]
Conversely,
given any $\chi \pmod {\mathfrak b}$ and $\chi_{\infty}$ there exists
a Gro\ss encharakter $\psi \pmod {\mathfrak b}$ such that $\psi=\chi
\cdot \chi_{\infty}$ provided that $\chi( u)\chi_{\infty}( u) = 1$ for
each unit $u \in \mathbb Z[i]$.

In particular, for $4|k$ and $\mathfrak a=(\alpha)$ a non-negative integer 
\[
\psi(\mathfrak a)=\left(\frac{\alpha}{|\alpha|} \right)^{k}
\]
is a Hecke Gro\ss enchakter  $\pmod 1$ and these Hecke Gro\ss
encharakteren can be used to detect primes in sectors.  Additionally,
given a positive rational integer $q$ with $(4,q)=1$ the 
homomorphism
\[
\chi \circ N : I_q \rightarrow S^1
\]
given by
$(\chi \circ N)(\mathfrak a)=\chi(N(\mathfrak a))$ is a Dirichlet
character $\pmod q$, where $\chi$ is a Dirichlet character $\pmod q$
for $\mathbb Z$, that is $\chi: (\mathbb Z/(q))^* \rightarrow
S^1$, where $N \mathfrak a$ is the norm of $\mathfrak
a$. Hence, for $4|k$  
\[
\psi(\mathfrak a)=(\chi \circ N)( \alpha ) \left( \frac{\alpha}{|\alpha|}\right)^k
\]
is a Hecke Gro\ss encharakter with modulus $q$ and frequency $k$,
where
$\mathfrak a=(\alpha)$. (A priori $\alpha$ is only defined up to 
multiplication by $i$, but for these characters the choice does not matter).
The $L$-function attached to the Gro\ss encharakter $\psi$ given by
\[
L(s, \psi)=\sum_{\mathfrak a} \frac{\psi(\mathfrak a)}{N (\mathfrak a)^s},
\]
has a functional equation and admits an analytic continuation to
$\mathbb C\setminus \{1\}$.

Moreover, if $\psi$  is not a real character, $L(s,\psi)$ has a
standard zero free region. That is, we have
\[
L(\sigma+it,\psi) \neq 0 \qquad \text{  for  } \qquad \sigma>1-\frac{c}{ \log (q(|t|+1)(|k|+1))}
\]
(see \cite[Section 5.10]{IK}).
In particular, if $k \ne 0$,
\[
\sum_{N (\pi) \le x } \chi(N(\pi)) \left( \frac{\pi}{|\pi|}\right)^{k} \ll ((|k|+1) q) \cdot x \exp\left( -c \sqrt{\log x}\right),
\]
where the summation is over prime ideal $\mathfrak p=(\pi)$ with norm
$\le x$. 

Furthermore, for $k =0$ the same estimate holds for any complex
$\chi \pmod q$. However for $k=0$ and $\chi \pmod q$
 a real character, there may be a 
possible Siegel zero and in this case we have Siegel's estimate (see
Section 5.9 of 
\cite{IK})
\[
L(\sigma+it , \chi) \neq0 \qquad \text{for } \qquad \sigma \ge 1- \frac{c(\epsilon)}{q^{\epsilon}}
\]
for any $\epsilon>0$. Consequently, we have the Siegel-Walfisz type
prime number theorem for $(a,q)=1$ and $(q,2)=1$ 
\begin{equation} \label{eq:PNT}
\sum_{\substack{N(\pi) \le x \\ N(\pi) \equiv a \pamod q \\ 0 \le \arg \pi \le \varepsilon}} 1=\frac{1}{\varphi(q)} \sum_{\substack{N(\pi) \le x \\ (N(\pi),q)=1 \\ 0 \le \arg \pi \le \varepsilon } } 1+O\left(\frac{x}{(\log x)^A} \right)
\end{equation}
for any $A \ge 1$.  (After multiplication by $i^{l}$ for some $l$ we
can ensure that $\theta = \arg i^{l}\pi \in [0,\pi/2)$; we will let
$\arg \pi$  denote this angle.)


Recall that a prime $p \equiv 3 \pmod 4$ is inert in $\mathbb Z[i]$;
additionally, a prime $p \equiv 1 \pmod 4$ splits in $\mathbb Z[i]$ so
that $p=\pi \overline{\pi}=a^2+b^2$, where $\pi$ is a prime in
$\mathbb Z[i]$. Writing
\[
\mathcal B(x;q,a,\varepsilon)=\sum_{\substack{n \le x \\ n \equiv a \pamod q}} ( 1_{ \mathcal P_{\varepsilon}} \ast 1_{ \mathcal P_{\varepsilon}'})(n) -\frac{ 1}{\varphi(q)} \sum_{\substack{n \le x \\ (n,q)=1}}( 1_{ \mathcal P_{\varepsilon}} \ast 1_{ \mathcal P_{\varepsilon}'})(n),
\]
formula \eqref{eq:PNT} gives, for $(a,q)=1$ and $(q,2)=1$, that
\begin{equation} \label{eq:SW}
|\mathcal B(x;q,a, \varepsilon)| \ll \frac{x}{(\log x)^A},
\end{equation}
for $q \le (\log x)^A$. In addition it is worth noting that \eqref{eq:PNT} also implies
\begin{equation} \label{eq:asymp}
\sum_{\substack{ n \le x \\ n \equiv a \pamod q}} ( 1_{ \mathcal P_{\varepsilon}} \ast 1_{ \mathcal P_{\varepsilon}'})(n) \sim \frac{4\varepsilon^2}{\varphi(q)} \frac{x \log \log x}{\log x}.
\end{equation}

We are now ready to state the following result which is an analog of the Bombieri-Vinogradov Theorem.
\begin{theorem} \label{thm:bv} 
There exists $B_0$ sufficiently large so that
\[
\sum_{ \substack{q \le Q \\ (q,2)=1}} \max_{(a,q)=1} \left|\mathcal B(x;q,a,\varepsilon) \right| \ll \frac{x}{(\log x)^{10}}
\]
for $Q \le x^{1/2}/(\log x)^{B_0}$.
\end{theorem}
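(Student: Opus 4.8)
The plan is to carry out the classical large--sieve proof of the Bombieri--Vinogradov theorem over $\mathbb Q(i)$, detecting the two narrow--sector conditions by Hecke Gro\ss encharakteren and exploiting that one of the two prime factors in $1_{\mathcal P_\varepsilon}\ast 1_{\mathcal P_\varepsilon'}$ lies in the short range $[(\log x)^{B_0},x^{1/9}]$. First split the moduli at $q_0=(\log x)^{A_0}$ for a suitably large constant $A_0$. For $q\le q_0$ one applies the Siegel--Walfisz bound \eqref{eq:SW}, which gives $\max_{(a,q)=1}|\mathcal B(x;q,a,\varepsilon)|\ll x/(\log x)^{A_1}$ for any fixed $A_1$; summing over $q\le q_0$ contributes $\ll q_0\,x/(\log x)^{A_1}\ll x/(\log x)^{10}$ once $A_1>A_0+10$. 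The one subtlety --- a possible Siegel zero of a real Dirichlet character of frequency $0$ attached to an intermediate modulus --- is absorbed exactly as in the classical argument, via Siegel's ineffective lower bound.

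For $q_0<q\le Q$ I would pass, in the standard way, from $\max_{(a,q)=1}|\mathcal B(x;q,a,\varepsilon)|$ to $\varphi(q)^{-1}\sum_{\chi\neq\chi_0\bmod q}\bigl|\sum_{n\le x}\chi(n)\,(1_{\mathcal P_\varepsilon}\ast 1_{\mathcal P_\varepsilon'})(n)\bigr|$, reduce to primitive characters at a cost of $(\log x)^{O(1)}$, and then write $\chi(p_1p_2)=\chi(p_1)\chi(p_2)$. Each condition $0<\arctan(b/a)\le\varepsilon$ on a split prime $p=\pi\bar\pi$ is opened by a Beurling--Selberg trigonometric majorant/minorant for the indicator of the corresponding arc: this replaces $1_{\mathcal P_\varepsilon}(p)$ by $\sum_{|k|\le K}\widehat c(k)\,(\pi/|\pi|)^{k}$ with $K=(\log x)^{O(1)}$, $\sum_{|k|\le K}|\widehat c(k)|\ll1$, and total replacement error $\ll x/(\log x)^{11}$ after averaging over $q$. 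Since $(\chi\circ N)(\pi)(\pi/|\pi|)^{k}$ is a Hecke Gro\ss encharakter of $\mathbb Q(i)$ of modulus $q$ and frequency $k$, the problem becomes one of bounding averages over $q\le Q$ and over such characters of the bilinear sums $\sum_{N\pi_1\cdot N\pi_2\le x}\psi_{\chi,k_1}(\pi_1)\psi_{\chi,k_2}(\pi_2)$, with $\pi_1,\pi_2$ running over degree--one prime ideals of $\mathbb Z[i]$, $N\pi_1\ge(\log x)^{B_0}$ and $N\pi_2\le x^{1/9}$.

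To estimate these I would apply a Vaughan--type identity for the Dedekind zeta function $\zeta_{\mathbb Q(i)}$ to the long variable $\pi_1$ only, leaving the short factor $\pi_2$ as a divisor--bounded coefficient on a variable of size $\le x^{1/9}$, and split the constraint $N\pi_1\cdot N\pi_2\le x$ into dyadic ranges by a truncated Perron integral. This yields ``Type~I'' pieces, in which the inner sum is $\sum_{N\mathfrak m\asymp M}\psi_{\chi,k}(\mathfrak m)$ over all ideals $\mathfrak m$ in a dyadic box lying in a sector and a residue class modulo $q$ --- controlled, uniformly in $k$ and on average over $q$ up to $q\asymp x^{1/2-o(1)}$, by the asymptotic count of Gaussian integers in sectors and progressions established in Section~\ref{sec:gauss-integ-sect} (the point being that no prime restriction is present on the inner variable) --- and ``Type~II'' pieces, bounded by Cauchy--Schwarz and the large--sieve inequality for Hecke Gro\ss encharakteren. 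Because $\mathbb Z[i]$ has class number one, the latter reduces to the classical large sieve, the frequency $k$ contributing only an extra factor $K=(\log x)^{O(1)}$ alongside the usual $Q^2$. The essential numerics are that the outer variables (the Vaughan divisor variable together with $\pi_2$) can be kept of total size $\le x^{1/4-\delta}$, so that the Type~I sums survive up to $Q\asymp x^{1/2-o(1)}$, while in the Type~II sums both variables are of medium size and $Q^2K\ll x/(\log x)^{2B_0-O(1)}$, so that $Q\le x^{1/2}/(\log x)^{B_0}$ keeps everything within the range of the large sieve; choosing $B_0$ large makes each Type~I and Type~II contribution $\ll x/(\log x)^{C}$ for an arbitrary fixed $C$, and summing the $(\log x)^{O(1)}$ many dyadic and frequency pieces, together with the small--moduli range, gives the claimed bound.

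The step I expect to be the main obstacle is the Type~II estimate: one must run the bilinear large--sieve argument entirely inside $\mathbb Q(i)$ with the angular frequency carried through every stage, check that the large--sieve and Siegel--Walfisz inputs are uniform in $k$, and --- crucially --- verify that the asymmetry forced by $N\pi_2\le x^{1/9}$, which constrains how Vaughan's identity may be deployed, still yields a genuine power--of--$\log$ saving for \emph{all} moduli up to $x^{1/2}/(\log x)^{B_0}$, rather than only up to roughly $x^{4/9}$ as the naive ``sum over $p_2$ of Bombieri--Vinogradov for primes in sectors'' would give. The bookkeeping that converts this into the clean bound $x/(\log x)^{10}$, with $B_0$ chosen at the very end, is then routine.
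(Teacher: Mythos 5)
Your proposal is a genuinely different (and considerably harder) route than the paper's. The paper's entire proof of Theorem \ref{thm:bv} is to observe that $1_{\mathcal P_\varepsilon}\ast 1_{\mathcal P_\varepsilon'}$ is \emph{already} a well-factored bilinear form — both prime factors lie in $[(\log x)^{B_0}, x/(\log x)^{B_0}]$ since each is $\ge(\log x)^{B_0}$ — so it simply invokes Lemma \ref{lem:largesieve}, the general bilinear Bombieri--Vinogradov theorem (Friedlander--Iwaniec, Theorem 9.17), with $\beta_m = 1_{\mathcal P_\varepsilon}(m)$ (Siegel--Walfisz via \eqref{eq:SW}) and $\alpha_n = 1_{\mathcal P_\varepsilon'}(n)$ (an arbitrary divisor-bounded sequence). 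No Vaughan identity, no Beurling--Selberg, no Hecke Gro\ss encharaktere appear in this step: the Gro\ss encharaktere enter only in establishing the Siegel--Walfisz property \eqref{eq:SW} (itself a consequence of \eqref{eq:PNT}), and the sector conditions are then simply absorbed into the sequences $\alpha_n,\beta_m$. You correctly spot the bilinear structure $\chi(p_1p_2)=\chi(p_1)\chi(p_2)$, but then detour through a Vaughan-type decomposition of the long prime $\pi_1$, which is exactly what the already-present factorization renders unnecessary; this detour is what manufactures the Type II obstacle you flag at the end, and also forces you to open both sector conditions via Beurling--Selberg majorants and to chase uniformity in the Gro\ss encharakter frequency $k$. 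Your worry about the ``sum over $p_2$ of BV gives only $Q\le x^{4/9}$'' is precisely what the bilinear BV theorem is designed to avoid: when \emph{both} variables exceed $(\log x)^{B_0}$ and one side carries Siegel--Walfisz, the dispersion/large-sieve machinery gives $Q\le x^{1/2}/(\log x)^{B_0}$ regardless of the asymmetry in sizes. So while your plan is not wrong in spirit — you are, in effect, proposing to reprove a special case of FI Theorem 9.17 inside $\mathbb Q(i)$ — the paper's proof is essentially a one-line reduction to a known black box, and the difficulties you anticipate never arise.
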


Let $\mathcal S \subset \mathbb N$. 
A sequence of complex numbers $\{ \beta_n\}$ with
$|\beta_n| \le \tau(n)$ satisfies the
\textit{Siegel-Walfisz property for $\mathcal S$} provided that for every $q \in \mathcal S$ and $A \ge 0$ and $N \ge 2$ we
have\fixmehide{If we allow general $\mathcal S$ I suppose $B_{0}$ might
  have to depend on the set? \textcolor{blue}{I think $B_0$ will be independent of the set, in the proof the average is split into small moduli and large moduli, the small moduli are handled by S-W the large moduli by the large sieve, and this is where the $B_0$ comes in, but since when using the large sieve we just upper bound the sum over $\mathcal S$ by a sum over $\mathbb N$ $B_0$ should not depend on $\mathcal S$.}}
\[
\sum_{\substack{ n \le N \\ n \equiv a \pmod q}}
\beta_n=\frac{1}{\varphi(q)}\sum_{\substack{n \le N \\ (n,q)=1}}
\beta_n+
O\left( \frac{N}{(\log N)^A}\right) 
\]
for every $a \in \mathbb Z$ with $(a,q)=1$.

\subsection{An application of the large sieve}
We next recall a consequence of the large sieve, which follows applying
a minor modification of Theorem 9.17 of \cite{FI}.
\begin{lemma} \label{lem:largesieve} 
Let $A \ge 1$ and $Q=x^{1/2}(\log x)^{-B}$ where $B=B(A)$ is sufficiently large. 
Suppose $\{\beta_n\}$ satisfies the Siegel-Walfisz property for all $q$ with $(q,2)=1$.
Then
for any sequence $\{\alpha_n\}$ of complex numbers such that
$|\alpha_n| \le \tau(n)$
\[
\sum_{\substack{q \le Q \\ (q,2)=1}} \max_{(a,q)=1} \Bigg| \sum_{\substack{mn \le x \\m,n \le  \frac{x}{(\log x)^B} \\ mn \equiv a \pamod q}} \beta_m \alpha_n- \frac{1}{\varphi(q)} \sum_{\substack{mn \le x \\m,n \le \frac{x}{(\log x)^B} \\ (mn,q)=1}} \beta_m \alpha_n\Bigg| \ll \frac{x}{(\log x)^A}.
\]
\end{lemma}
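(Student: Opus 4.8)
The plan is to deduce this from the Bombieri--Vinogradov-type estimate for bilinear forms, \cite[Theorem~9.17]{FI}; the present statement differs only in the harmless extra truncation $m,n\le x(\log x)^{-B}$, so rather than quote a black box I will indicate the argument. First I would decompose the ranges of $m$ and $n$ dyadically, writing $M<m\le 2M$ and $N<n\le 2N$ with $MN\le x$ and $M,N\le x(\log x)^{-B}$, dealing with the hyperbolic constraint $mn\le x$ inside a block in the usual way. There are $O((\log x)^{2})$ such pairs $(M,N)$, so it suffices to prove the bound with $\beta_m,\alpha_n$ replaced by sequences $\gamma_m,\delta_n$ --- equal to $\beta_m,\alpha_n$ on the respective blocks and to $0$ elsewhere, hence bounded by $\tau$ --- and with an extra saving of $(\log x)^{2}$, which we recover by enlarging $B$.

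Next I would run the standard character argument. For $(a,q)=1$,
\[
\sum_{mn\equiv a\,(q)}\gamma_m\delta_n-\frac{1}{\varphi(q)}\sum_{(mn,q)=1}\gamma_m\delta_n=\frac{1}{\varphi(q)}\sum_{\chi\ne\chi_0\bmod q}\overline{\chi}(a)\Big(\sum_m\gamma_m\chi(m)\Big)\Big(\sum_n\delta_n\chi(n)\Big),
\]
so the quantity in question is $\ll\sum_{q\le Q}\varphi(q)^{-1}\sum_{\chi\ne\chi_0\bmod q}\big|\sum_m\gamma_m\chi(m)\big|\,\big|\sum_n\delta_n\chi(n)\big|$. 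Expressing each non-principal $\chi\bmod q$ through the primitive character $\chi^{*}\bmod q^{*}$ inducing it ($1<q^{*}\mid q$), removing the coprimality conditions by M\"obius inversion --- which, after relabelling, replaces $\gamma$ and $\delta$ by $\tau$-bounded sequences of the same type, the new $\gamma$ retaining the Siegel--Walfisz property of $\beta$ up to divisors that are treated trivially --- and summing over the multiples $q=q^{*}\ell\le Q$, one arrives at a bound
\[
\ll(\log x)^{O(1)}\sum_{1<q^{*}\le Q}\frac{1}{\varphi(q^{*})}\sum_{\chi^{*}\bmod q^{*}}^{*}\Big|\sum_m\gamma_m\chi^{*}(m)\Big|\,\Big|\sum_n\delta_n\chi^{*}(n)\Big|.
\]

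I would then split the $q^{*}$-sum at $q^{*}=(\log x)^{C}$, with $C=C(A)$ chosen below. For $q^{*}\le(\log x)^{C}$ the Siegel--Walfisz property of $\gamma$ bounds $\sum_m\gamma_m\chi^{*}(m)$ by $M(\log x)^{-A'}$ for any fixed $A'$, uniformly over this range, whereas $\sum_n\delta_n\chi^{*}(n)\ll N\log x$ trivially; since the pairs $(q,\chi)$ of conductor $\le(\log x)^{C}$, weighted by $\varphi(q)^{-1}$, contribute only $(\log x)^{O(C)}$, this range is $\ll MN(\log x)^{-A'+O(C)}\le x(\log x)^{-A}$ once $A'$ is large in terms of $A$ and $C$. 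For $q^{*}>(\log x)^{C}$ I would localise $q^{*}\sim R$ dyadically, use $\varphi(q^{*})^{-1}\ll(\log\log x)/R$, apply Cauchy--Schwarz in $\chi^{*}$, and invoke the multiplicative large sieve $\sum_{q^{*}\sim R}\sum_{\chi^{*}\bmod q^{*}}^{*}|\sum_m a_m\chi^{*}(m)|^{2}\ll(\log\log x)(R^{2}+M)\sum_m|a_m|^{2}$ and its analogue in $n$, together with $\sum_{m\sim M}\tau(m)^{2}\ll M(\log x)^{3}$. The dyadic block is then
\[
\ll(\log x)^{O(1)}\,\frac{\sqrt{(R^{2}+M)(R^{2}+N)MN}}{R}\ll(\log x)^{O(1)}\Big(R\sqrt{MN}+M\sqrt{N}+N\sqrt{M}+\frac{MN}{R}\Big).
\]
Using $R\le Q=x^{1/2}(\log x)^{-B}$, $MN\le x$, $M,N\le x(\log x)^{-B}$ and $R>(\log x)^{C}$, each term is $\ll x(\log x)^{-\min(B/2,\,C)}$; summing over the $O(\log x)$ values of $R$ and the $O((\log x)^{2})$ pairs $(M,N)$ yields $\ll x(\log x)^{-\min(B/2,C)+O(1)}$, and choosing $C=C(A)$ and then $B=B(A)$ large enough completes the proof.

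The main obstacle here is bookkeeping rather than a new idea: one must check that the Siegel--Walfisz hypothesis survives the M\"obius removal of the coprimality conditions --- so that the small-conductor range really is negligible --- and that the large-sieve constants are uniform in the relevant ranges, which is precisely what \cite[Theorem~9.17]{FI} provides. The truncation $m,n\le x(\log x)^{-B}$ is an extra hypothesis that only helps, as it renders the cross terms $M\sqrt N$ and $N\sqrt M$ above harmless.
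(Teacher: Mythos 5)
Your proposal is correct and amounts to proving the lemma from first principles, whereas the paper simply cites \cite[Theorem~9.17]{FI} with the remark that only a minor modification is needed to accommodate the truncation $m,n\le x(\log x)^{-B}$ and the restriction to odd $q$. Since the argument you give (dyadic decomposition, passage to primitive characters, Siegel--Walfisz for conductors up to $(\log x)^C$, Cauchy--Schwarz plus the multiplicative large sieve for the remaining range, with the extra truncation killing the cross terms $M\sqrt N$, $N\sqrt M$) is precisely the internal proof of that theorem, the two approaches coincide in substance; you have simply unpacked the black box that the paper invokes.
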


\begin{proof}[Proof of Theorem \ref{thm:bv}] By
  \eqref{eq:SW} the 
  sequence $\beta_n=1_{ \mathcal P_{\varepsilon}}(n)$ satisfies the
  Siegel-Walfisz condition for all $q$ with $(q,2)=1$. Take $\alpha_n=1_{ \mathcal
    P_{\varepsilon}'}(n)$ and note that (cf. \eqref{eq:Pepsidef})
    \[
    \sum_{\substack{n \le x \\ n \equiv a \pmod q}} ( 1_{ \mathcal P_{\varepsilon}} \ast 1_{ \mathcal P_{\varepsilon}'})(n)=\sum_{ \substack{mn \le x \\ m,n \le \frac{x}{(\log x)^{B_0}} \\ mn \equiv a \pmod q}}  1_{ \mathcal P_{\varepsilon}}(m)1_{ \mathcal P_{\varepsilon}'}(n)
    \]
    and
    \[
     \sum_{\substack{n \le x \\ (n,q)=1}} ( 1_{ \mathcal P_{\varepsilon}} \ast 1_{ \mathcal P_{\varepsilon}'})(n)=\sum_{ \substack{mn \le x \\ m,n \le \frac{x}{(\log x)^{B_0}}\\ (mn,q)=1}}  1_{ \mathcal P_{\varepsilon}}(m)1_{ \mathcal P_{\varepsilon}'}(n).
    \]
    Hence,
    applying Lemma \ref{lem:largesieve}
  completes the proof. 

\end{proof}

\subsection{Gaussian integers in sectors with norms in
  progressions} 
\label{sec:gauss-integ-sect}
The goal of this section is to show that a result of Smith
\cite{smith-circle-problem} (also cf. \cite{tolev})
holds for Gaussian integers in sectors.  We recall that for
$\alpha \in \Z[i]$, $N(\alpha) = |\alpha|^{2}$ denotes the norm of $\alpha$.
For $a,q>0$ define
$$
\eta_{a}(q) :=
|\{ \alpha_1,\alpha_2 \pamod q : \alpha_1^{2}+\alpha_2^{2} \equiv a \pamod q   \}|.
$$

\begin{proposition}
\label{prop:gauss-integ-sect}
Let $a, q>0$ be integers and put  $g = (a,q)$.
  Given an angle $\theta$ and  $\epsilon \in
  (0,2\pi)$, let
  $S=S_{\epsilon,\theta}$ denote the set of lattice points
  $\alpha \in \Z[i]$ contained in the sector defined by\footnote{By
    $\arg(\alpha)$ we 
    denote the complex argument chosen in such a way that it is single
    valued in an $\epsilon/2$-neighborhood of $\theta$.}  
  $ |\arg(\alpha)-\theta| < \epsilon/2 $.   Then, uniformly for
  $\epsilon>0$,
 $$
 |\{ \alpha \in S : N(\alpha) \equiv a \pamod q,
 N(\alpha)\le x \}|
 $$
 $$
 =
 \frac{\epsilon x \eta_{a}(q)}{q^{2}} +O\left( \frac{x^{1-\delta/3}}{q}\right)
 $$
  provided that $q^{3}g < x^{2(1-2\delta)}$ for $\delta>0$. 
 
\end{proposition}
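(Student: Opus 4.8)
The proposition is in essence Smith's circle problem in arithmetic progressions with an added angular (sector) constraint, so the plan is to reduce the $k=0$ Fourier mode in the angle to Smith's theorem and to control the remaining modes by exponential-sum methods. First I would pass to residue classes modulo $q$. Writing $\alpha=\alpha_1+i\alpha_2$, the condition $N(\alpha)\equiv a\pmod q$ is $\alpha_1^2+\alpha_2^2\equiv a\pmod q$, which holds for exactly $\eta_a(q)$ pairs $(b_1,b_2)\pmod q$. For each such pair it remains to count $\gamma\in\Z^2$ with $(b_1,b_2)+q\gamma$ lying in the sector $S_{\epsilon,\theta}$ and having norm $\le x$; that is, a lattice-point count over the sublattice $q\Z^2$ (covolume $q^2$) inside a sector of the disc of radius $\sqrt x$. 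Summing the main terms over the $\eta_a(q)$ admissible classes will produce the asserted main term, the area of the sector divided by $q^2$ and multiplied by $\eta_a(q)$.

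Second, I would detect the angular condition by Fourier analysis on $\arg\alpha$. Take Beurling--Selberg/Vaaler-type trigonometric polynomials $\psi^{\pm}$ of degree $\le K$ with $\psi^-\le\mathbf 1_{|\arg\alpha-\theta|<\epsilon/2}\le\psi^+$, $\int(\psi^+-\psi^-)\ll 1/K$, and $\widehat{\psi^{\pm}}(0)=\tfrac{\epsilon}{2\pi}+O(1/K)$. Identifying $\R^2\cong\C$, the count is sandwiched between
\[
\sum_{(b_1,b_2)}\ \sum_{|k|\le K}\widehat{\psi^{\pm}}(k)\sum_{\substack{\gamma\in\Z^2\\ N((b_1,b_2)+q\gamma)\le x}}\Big(\tfrac{(b_1,b_2)+q\gamma}{|(b_1,b_2)+q\gamma|}\Big)^{k}.
\]
The term $k=0$ equals $\widehat{\psi^{\pm}}(0)\sum_{n\le x,\ n\equiv a\,(q)}r(n)$, and here I would invoke the circle problem in arithmetic progressions of Smith~\cite{smith-circle-problem} (or Tolev~\cite{tolev}): under $q^{3}g<x^{2(1-2\delta)}$ it yields $\sum_{n\le x,\ n\equiv a\,(q)}r(n)=\pi x\eta_a(q)/q^{2}+O(x^{1-\delta/3}/q)$, which after multiplication by $\widehat{\psi^{\pm}}(0)$ gives the main term plus an admissible error. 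The boundary contribution coming from $\psi^+-\psi^-$ is, by positivity, dominated by the $r$-weighted count in two sub-sectors of angular width $O(1/K)$, hence $\ll x/(Kq)$, which is acceptable as soon as $K\gg x^{\delta/3}$.

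Third — and this is the crux — I would bound the terms with $1\le|k|\le K$. For fixed $k\ne 0$ the inner sum is $\sum_{\alpha\equiv(b_1,b_2)\ (q),\ N(\alpha)\le x}(\alpha/|\alpha|)^{k}$; detecting the congruence by additive characters, $\mathbf 1_{\alpha\equiv v\,(q)}=q^{-2}\sum_{c\in(\Z/q)^2}e\big(c\cdot(\alpha-v)/q\big)$, reduces it to $q^{-2}\sum_{c}\overline{e(c\cdot v/q)}\sum_{N(\alpha)\le x}(\alpha/|\alpha|)^{k}\,e(c\cdot\alpha/q)$. The innermost object is a two-dimensional exponential sum over the disc of radius $\sqrt x$ with a smooth oscillatory weight of frequency $\asymp|k|\le K$; I would estimate it by Poisson summation / two-dimensional van der Corput, noting that $(\alpha/|\alpha|)^{k}$ has gradient of size $\asymp|k|/\sqrt x$, so it does not spoil the estimates once $K\ll x^{1/2}$. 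The resulting dual sums carry complete exponential sums modulo $q$ attached to the quadratic form $N$; these I would evaluate/bound using classical Gauss-sum formulas and Weil-type bounds for exponential sums over finite fields (the ``non-trivial bounds on exponential sums over finite fields'' from the outline), tracking the dependence on $g=(a,q)$ that produces the $q^{3}g$ in the hypothesis. Summing over $c$, over the $\eta_a(q)$ classes, and over $0<|k|\le K$, and choosing $K$ a small fixed power of $x$ (so $K\gg x^{\delta/3}$ while the $k\ne 0$ total stays $\ll x^{1-\delta/3}/q$ under $q^{3}g<x^{2(1-2\delta)}$), completes the argument. The main obstacle is exactly this last step: obtaining a power-saving bound, uniform for $q$ nearly as large as $x^{2/3}$, for the twisted lattice sums — which forces one to combine the analytic (van der Corput) and arithmetic (Gauss/Weil) inputs and to keep careful track of the common factor $g$.
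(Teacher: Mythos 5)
Your sketch has a genuine gap, and it is at the step you yourself flag as the crux. For the $1\le|k|\le K$ Fourier modes, after detecting the residue class with additive characters you must bound
$q^{-2}\sum_{c\pmod q}\widehat f(c)\sum_{N(\alpha)\le x}(\alpha/|\alpha|)^k\, e(c\cdot\alpha/q)$
uniformly for $|k|$ up to $K\gg x^{\delta/3}$, for all $c$, and for $q$ nearly as large as $x^{2/3}$, with a saving that survives both the sum over $k$ and the sum over $c$ against the arithmetic weight $\widehat f(c)$ (which is controlled by a Tolev-type exponential-sum bound, so your arithmetic input is ultimately the same as the paper's). Getting a two-dimensional van der Corput/Poisson estimate for an oscillatory sum that mixes an angular frequency $\asymp|k|/\sqrt{x}$ with a linear frequency $|c|/q$, with this much uniformity and down to $O(x^{1-\delta/3}/q)$, is a substantial piece of analysis that is not carried out here; it cannot simply be asserted as in the proposal.

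The paper's route is genuinely different and avoids all of this. It never introduces a Fourier series in $\arg\alpha$. Instead it tiles the sector $\{|\arg\alpha-\theta|<\epsilon/2,\ N(\alpha)\le x\}$ by axis-aligned boxes of side $T=x^{(1-\delta)/2}$ and, in each box, counts solutions of $\alpha_1^2+\alpha_2^2\equiv a\pmod q$ by Plancherel on $(\Z/q\Z)^2$: the finite Fourier transform $\widehat f$ of the conic's indicator is controlled by Tolev's bound \eqref{eq:tolev}, and the transform of a box has the elementary $q^{2}/|\xi_1\xi_2|$ decay; this is Lemma~\ref{lem:modq-box-count}. Summing over roughly $\epsilon x/T^{2}$ interior boxes plus $O(\sqrt{x}/T)$ boundary boxes (the latter controlled via $\eta_a(q)\ll q^{1+o(1)}$) yields the proposition. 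Your $k=0$ term, where you appeal to Smith's theorem, is essentially the same finite-Fourier computation specialized to the full disc, so the two arguments agree there; the point of the box-tiling is that it imposes the angular constraint geometrically (which boxes lie inside the sector) rather than analytically, so the oscillatory $k\ne 0$ modes never appear and the hard uniform estimate you would need is sidestepped entirely.
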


We begin by showing that solutions to $\alpha_{1}^{2} + \alpha_{2}^{2}
\equiv a \pmod q$ is well distributed in fairly small boxes.
Given $q$, let $f : (\Z /q \Z)^{2} \to \C$ denote
the characteristic function of the set $\{ (\alpha_1,\alpha_2) \in (\Z /q \Z)^{2} :
\alpha_1^{2} + \alpha_2^{2} \equiv a \pmod q\}$.
With the modulo $q$ Fourier transform given by
\begin{equation}
  \label{eq:tolev-estimate}
\widehat{f}(\xi_1,\xi_2) :=
\sum_{\alpha_1,\alpha_2 \pamod q} f(\alpha_1,\alpha_2) e^{-2 \pi i (\xi_1 \alpha_1 + \xi_2 \alpha_2)/q}
\end{equation}
we recall the following estimate by Tolev \cite{tolev}:
\begin{equation} \label{eq:tolev}
|\widehat{f}(\xi_1,\xi_2)| \ll
q^{1/2} \tau(q)^{2}
(q,\xi_1,\xi_2)^{1/2} (q,a,\xi_1^{2}+\xi_2^{2})^{1/2}
\leq
q^{1/2} \tau(q)^{2}
(q,\xi_1,\xi_2)^{1/2} (q,a)^{1/2}
\end{equation}

By the Chinese remainder theorem, $\eta_{a}(q)$ is
multiplicative in $q$, and we note that $\widehat{f}(0,0) =
\eta_{a}(q)$. 
%

Let $B \subset [0,q) \times [0,q)$ be a ``box'' with side lengths $T$,
and let $g = g_{B}$ denote the characteristic function of $B \cap
(\Z/q\Z)^2$.  By standard estimates (from summing a geometric series)
we have, for $\xi_1, \xi_2 \neq 0$,
\begin{equation}
  \label{eq:hatg-estimate1}
\widehat{g}(\xi_1,\xi_2) \ll q^{2}/|\xi_1\xi_2|,  
\end{equation}
for $\xi_1 \neq 0$,
\begin{equation}
  \label{eq:hatg-estimate2}
\widehat{g}(\xi_1,0) \ll Tq /|\xi_1|,  
\end{equation}
(and similarly for $\xi_2 \neq 0$), and trivially
$$
\widehat{g}(0,0) = T^{2}.
$$

\begin{lemma}
  \label{lem:modq-box-count}
Let  $g = (a,q)$. Then  
$$
|\{  (\alpha_1,\alpha_2)  \in B : \alpha_1^{2}+\alpha_2^{2} \equiv a \pamod q \}|
=
T^{2} \cdot\frac{\eta_{a}(q)}{q^{2}}
+ O(q^{1/2} \tau(q)^{3} \log(q)^{2}  g^{1/2}  )
$$
\end{lemma}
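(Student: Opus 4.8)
The plan is to expand the count by finite Fourier analysis on
$(\Z/q\Z)^{2}$, extract the zero frequency as the main term, and
estimate the remaining frequencies by combining Tolev's square-root
cancellation bound \eqref{eq:tolev} for $\widehat f$ with the
elementary estimates \eqref{eq:hatg-estimate1},
\eqref{eq:hatg-estimate2} for $\widehat g$. Concretely, writing
$g=g_{B}$ one has
\[
N := |\{ (\alpha_1,\alpha_2) \in B : \alpha_1^{2}+\alpha_2^{2} \equiv a \pmod q \}|
= \sum_{(\alpha_1,\alpha_2) \in (\Z/q\Z)^{2}} f(\alpha_1,\alpha_2)\, g(\alpha_1,\alpha_2),
\]
and Parseval's identity for the transform normalised as in
\eqref{eq:tolev-estimate} (valid since $f$ and $g$ are real-valued)
gives
\[
N = \frac{1}{q^{2}} \sum_{(\xi_1,\xi_2) \in (\Z/q\Z)^{2}} \widehat f(\xi_1,\xi_2)\, \overline{\widehat g(\xi_1,\xi_2)}.
\]
The frequency $(\xi_1,\xi_2)=(0,0)$ contributes
$q^{-2}\,\widehat f(0,0)\,\widehat g(0,0) = T^{2}\eta_{a}(q)/q^{2}$,
the claimed main term, so it remains to show that the sum over
$(\xi_1,\xi_2)\neq(0,0)$, with each $\xi_{i}$ a least absolute residue,
is $O\bigl(q^{1/2}\tau(q)^{3}(\log q)^{2}(a,q)^{1/2}\bigr)$.

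I would split this off-diagonal sum into the ranges $\xi_1\xi_2\neq0$
and, by symmetry, $\xi_1\neq0,\ \xi_2=0$. In the first range, inserting
$|\widehat f(\xi_1,\xi_2)| \ll q^{1/2}\tau(q)^{2}(q,\xi_1,\xi_2)^{1/2}(q,a)^{1/2}$
from \eqref{eq:tolev} and $|\widehat g(\xi_1,\xi_2)| \ll q^{2}/|\xi_1\xi_2|$
from \eqref{eq:hatg-estimate1}, and grouping frequencies by
$d=(q,\xi_1,\xi_2)$, bounds the contribution by
\[
\ll q^{1/2}\tau(q)^{2}(q,a)^{1/2} \sum_{d\mid q} \frac{1}{d^{3/2}}\Bigl(\,\sum_{1\le m\le q/(2d)}\frac1m\Bigr)^{2}
\ll q^{1/2}\tau(q)^{2}(q,a)^{1/2}(\log q)^{2},
\]
using $\sum_{d\mid q}d^{-3/2}\ll1$. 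In the second range, using
$|\widehat f(\xi_1,0)| \ll q^{1/2}\tau(q)^{2}(q,\xi_1)^{1/2}(q,a)^{1/2}$
and $|\widehat g(\xi_1,0)| \ll Tq/|\xi_1|$ from \eqref{eq:hatg-estimate2},
and grouping by $d=(q,\xi_1)$, the contribution is
\[
\ll \frac{T}{q^{1/2}}\tau(q)^{2}(q,a)^{1/2}(\log q)\sum_{d\mid q}d^{-1/2}
\ll q^{1/2}\tau(q)^{3}(q,a)^{1/2}\log q,
\]
since $\sum_{d\mid q}d^{-1/2}\le\tau(q)$ and $T<q$; the range
$\xi_1=0,\ \xi_2\neq0$ is identical. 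Adding the three bounds and
recalling $g=(a,q)$ yields the lemma.

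The computations here are routine once \eqref{eq:tolev} is available;
the only mild point is to keep the divisor sums $\sum_{d\mid q}d^{-1/2}$
and $\sum_{d\mid q}d^{-3/2}$ under control uniformly in $q$, which is
exactly what produces the (slightly lossy) factor $\tau(q)^{3}(\log q)^{2}$.
The genuinely substantive input is the square-root cancellation bound
\eqref{eq:tolev} for the incomplete Gauss-type sum $\widehat f$, and
that is quoted from Tolev, so I do not expect a real obstacle in this
lemma itself.
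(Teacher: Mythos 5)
Your proposal is correct and follows essentially the same approach as the paper: expand via finite Fourier analysis on $(\Z/q\Z)^{2}$, extract $T^{2}\eta_{a}(q)/q^{2}$ from the zero frequency, split the non-zero frequencies by how many coordinates vanish, and control each piece by combining Tolev's bound \eqref{eq:tolev} for $\widehat{f}$ with the geometric-series estimates \eqref{eq:hatg-estimate1}, \eqref{eq:hatg-estimate2} for $\widehat{g}$, grouping by $d = (q,\xi_1,\xi_2)$ (resp. $d=(q,\xi_1)$) and summing the resulting harmonic/divisor sums. The intermediate bounds and the way the factors $\tau(q)^{3}$ and $(\log q)^{2}$ arise match the paper's computation.
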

\begin{proof}
  By Fourier analysis on $(\Z/q\Z)^{2}$ (i.e., Plancherel's theorem
  for finite abelian groups)
  we have
$$
|\{  (\alpha_1,\alpha_2)  \in B : \alpha_1^{2}+\alpha_2^{2} \equiv a \pamod q \}|
=
\sum_{ \alpha_1,\alpha_2 \pamod q } f(\alpha_1,\alpha_2) g(\alpha_1,\alpha_2)
$$
$$
=
\frac{1}{q^{2}}
\sum_{ \xi_1, \xi_2 \pamod q } \widehat{f}(\xi_1,\xi_2) \overline{\widehat{g}(\xi_1,\xi_2)}
$$
The main term is given by $\xi_1=\xi_2=0$ and equals
$$
\frac{\widehat{f}(0,0) \widehat{g}(0,0)}{q^{2}}
= T^{2}\frac{ \eta_a(q)}{q^{2}}
$$

Using \eqref{eq:tolev} and \eqref{eq:hatg-estimate2} the contribution from (say) $\xi_1=0$ and $\xi_2  \neq 0$ is
\begin{equation}
  \label{eq:error-one}
\ll \frac{1}{q^{2}}
\sum_{\xi_2 = 1 }^{q-1}
\frac{Tq}{\xi_2}  q^{1/2} \tau(q)^{2} (q,\xi_2)^{1/2} g^{1/2}
\ll
\frac{Tq^{3/2} \tau(q)^{2} g^{1/2}}{ q^{2}}
\sum_{d|q}
\sum_{0 < \xi_2 < q/d}
\frac{d^{1/2} }{d \xi_2}
\end{equation}
$$
\ll
\frac{T \tau(q)^{3} \log(q) g^{1/2}}{ q^{1/2}} = O(q^{1/2}\tau(q)^{3}
\log(q) g^{1/2} ). \fixmelater{here we pick up an extra factor of $\tau(q)$, which I think we could bound as $\tau(q)^{o(1)}$ if we wanted}
$$
The contribution from terms  $\xi_2=0$ and $\xi_1 \neq 0$ is
bounded similarly.

As for the terms $\xi_1,\xi_2 \neq 0$, we have by \eqref{eq:tolev}
$$
\frac{1}{q^{2}}
\sum_{ \xi_1, \xi_2 \neq 0 } \widehat{f}(\xi_1,\xi_2)
\overline{\widehat{g}(\xi_1,\xi_2)}
\ll
\frac{q^{1/2} \tau(q)^{2}}{q^{2}}
\sum_{ \xi_1, \xi_2 \neq 0 }
\frac{q^{2}}{\xi_1 \xi_2}
(q,\xi_1,\xi_2)^{1/2} g^{1/2}
$$
$$
=
q^{1/2} \tau(q)^{2}
\sum_{d|q}
\sum_{ 0 < \xi_1, \xi_2 \leq q/d }
\frac{d^{1/2} g^{1/2}}{d^{2}\xi_1 \xi_2}
\ll
q^{1/2} \tau(q)^{2} \log(q)^{2} g^{1/2}.
$$
\end{proof}

\begin{proof}[Concluding the proof of
  Proposition~\ref{prop:gauss-integ-sect}] 
  Take $T = x^{(1-\delta)/2}$.  The case $T>q$ is straightforward using a
  simple tiling argument, and we only give details for $T \le q$.

  By a simple
  geometry of numbers argument, we may ``tile'' the sector $S$,
  intersected with a ball of radius $x^{1/2}$, with
  $\epsilon x/T^{2} + O(x^{1/2}/T)$ boxes $B$ (with
  side lengths $T$)
  entirely contained in the sector, and with $O(x^{1/2}/T)$ boxes
  intersecting the boundary.  By Lemma~\ref{lem:modq-box-count}, each
  box $B$ contains 
  $$
T^{2} \cdot\frac{\eta_{a}(q)}{q^{2}}
+ O(q^{1/2} \tau(q)^{2} \log(q)^{2}  g^{1/2}  )
 $$
 points satisfying $ \alpha_1^{2} + \alpha_2^{2} \equiv a \pmod q$.

 As $\eta_{a}(q) < q^{1+o(1)}$  (cf. \cite[Lemma~2.8]{blomer-brudern-dietmann}), we find that the
 number of lattice 
 points in the sector is
\[
 \begin{split} 
& ( \epsilon x/T^{2} + O(x^{1/2}/T))  (
 T^{2} \cdot\frac{\eta_{a}(q)}{q^{2}}
+ O(q^{1/2} \tau(q)^{3} \log(q)^{2}  g^{1/2}  ) )\\
&=\frac{\epsilon \eta_a(q) x}{q^2}+O\left(\frac{x^{1-\delta/2}}{q^{1-o(1)}}+ \epsilon g^{1/2} q^{1/2+o(1)} x^{\delta} \right).
\end{split}
\]
For $q^{3} g < x^{2(1-2\delta)}$ the error term is $\ll \frac{x^{1-\delta/3}}{q} $.
\end{proof}

\subsection{Proof of Lemma \ref{lem:fixed}} 
We may assume $(Q,q)=1$ otherwise the result is trivial.
Let $\delta>0$ be sufficiently small but fixed and set
\[
r_{\varepsilon}(n)=\sum_{\substack{a^2+b^2=n \\ |\arg(a+ib)| \le \varepsilon}} 1.
\]
Also, for $n \in \mathbb N$ and $z>0$ let $\widetilde P_n(z)=\prod_{2 < p <z} p$.
Let $\Lambda_1=\{\lambda_d\}$, $\Lambda'=\{\lambda_e'\}$ be upper bound sieves of level $D=x^{\delta}$ with $(d,2q)=1$ and $(e,2Q)=1$. 
Then for $z=x^{\delta/2}$ we have
\[
\begin{split}
\sum_{ \substack{p=a^2+b^2 \le x \\ |\arg(a+ib)| \le \varepsilon \\ qp+4=Q p_1 \text{where } p_1 \text{ is prime}  }} 1 
\le&  \sum_{m \le qx+4}\sum_{\substack{n \le x \\ qn+4=Qm \\ (m, \widetilde P_q(z))=1 \\ (n,\widetilde P_{Q}(z))=1 }} r_{\varepsilon}(n)+O(x^{\delta/2}) \\
\le& \sum_{m \le qx+4} \sum_{\substack{n \le x \\ qn+4=Qm}} r_{\varepsilon}(n) (\lambda' \ast 1)\left(  n \right) (\lambda \ast 1)\left(m\right)+O(x^{\delta/2}). 
\end{split}
\]
Switching order of summation we have that the sum on the LHS above is
\begin{equation} \label{eq:sieveme}
\begin{split}
=&\sum_{\substack{d,e < D \\(d,e)=1 \\  (d,2q)=1, (e,2Q)=1}} \lambda_d \lambda_e' \sum_{\substack{ n \le x \\ e|n}} r_{\varepsilon}(n) \sum_{\substack{m \le qx+4 \\ d|m \\ qn+4=Qm}} 1 \\
= &\sum_{\substack{d,e < D \\(d,e)=1 \\  (d,2q)=1, (e,2Q)=1}} \lambda_d \lambda_e' \sum_{\substack{n \le x \\ n \equiv \gamma \pamod{Qed}}} r_{\varepsilon}(n) 
\end{split}
\end{equation}
since the inner sum in the first equation above consists of precisely one term provided that $qn+4 \equiv 0 \pmod {Qd}$ and is empty otherwise. Also, here $\gamma=-4e \overline {e} \overline{q}$ where $q\overline q \equiv 1 \pmod{Qd}$ and $e \overline e \equiv 1 \pmod{Qd}$. In particular, $(\gamma, Qed)=e$. 

Let us note some properties of the function $\eta_a(q)$. Recall, $\eta_a(\cdot)$ is multiplicative. Moreover, for $p>2$ and $\ell \ge 1$
\begin{equation} \label{eq:eta1}
\eta_a(p^{\ell})=p^{\ell} \sum_{0 \le j \le \ell} \frac{\chi_4(p)^j}{p^j} c_{p^j}(a)
\end{equation}
and for any $a,q \ge 1$
\begin{equation} \label{eq:eta2}
\eta_q(q) \ll \frac{q^2}{\varphi(q)} \tau((a,q))
\end{equation}
(see \cite[Eqn. (2.20) and Lemma 2.8]{blomer-brudern-dietmann})
where 
\begin{equation} \label{eq:raman}
c_q(a)=\sum_{\substack{b \pamod q \\ (b,q)=1}} e\left(\frac{ab}{q} \right)=\frac{\varphi(q)}{\varphi(q/(q,a))} \mu(q/(q,a))
\end{equation}
is the Ramanujan sum and $\chi_4$ is the non-principal Dirichlet character $\pmod 4$. In particular note that if $(a,q)=g$ then $\eta_a(q)=\eta_g(q)$ for odd $q$. 

By Proposition \ref{prop:gauss-integ-sect}, \eqref{eq:eta1},
\eqref{eq:eta2} and recalling that 
$(Qed, \gamma)=e$ we get the RHS of \eqref{eq:sieveme} equals
\[
\begin{split}
&2\varepsilon x \sum_{\substack{d,e < D \\(d,e)=1 \\  (d,2q)=1, (e,2Q)=1}} \frac{\lambda_d \lambda_e'}{(Qed)^2} \eta_{\gamma}(Qed)+O\left(\frac{ x^{1-\delta/4}}{Q}\right)\\
&=\frac{2\varepsilon x \eta_1(Q)}{Q^2}\sum_{\substack{d,e < D \\(d,e)=1 \\  (d,2q)=1, (e,2Q)=1}} \frac{\lambda_d \lambda_e'}{(ed)^2} \frac{\eta_{1}(Qd) \eta_{e}(e)}{\eta_1(Q)}+O\left( \frac{ x^{1-\delta/4}}{Q}\right)
\end{split}
\]
provided that $ Q^3 D^7< x^{2(1-2\delta)}$ which we rewrite as $Q <  x^{2/3-11 \delta/3}$.
Using Theorem \ref{thm:conv} in the form of \eqref{eq:sievebd}, and noting that $\eta_1(Qd)/\eta_1(Q)$ is a multiplicative function, we get that the above sum is
\begin{equation} \label{eq:euler}
\begin{split}
&\ll  \frac{\varepsilon x \eta_1(Q)}{Q^2}    \prod_{\substack{p < D \\ (p,2q)=1}}\left(1-\frac{\eta_1(Qp)}{p^2 \eta_1(Q)}\right) \prod_{\substack{p < D \\ (p,2Q)=1}}\left(1-\frac{\eta_p(p)}{p^2}\right).
\end{split}
\end{equation}
To evaluate the Euler products we use \eqref{eq:eta1} to get $\eta_p(p)=p(1+\chi_4(p)-\frac{1}{p})$, $\eta_1(Qp)/\eta_1(Q)=p+O(1)$ and $\eta_1(Q)=Q\prod_{p|Q}\left(1-\frac{\chi_4(p)}{p} \right)$. Hence, by these estimates we get that \eqref{eq:euler} is
\[
\begin{split}
\ll & \frac{\varepsilon x \eta_1(Q)}{Q^2 }  \prod_{p|Q} \left( 1+\frac{\chi_4(p)+1}{p} \right) \prod_{p|q} \left(1+\frac{1}{p} \right) \cdot \frac{1}{(\log D)^2} 
\\
\ll & \frac{q}{\varphi(q)} \cdot \frac{\varepsilon x }{Q\delta^2 (\log x)^2 }  \prod_{p|Q}\left(1+\frac{1}{p} \right) \ll \frac{q}{\varphi(q) } \cdot \frac{\varepsilon x }{\varphi(Q) \delta^2 (\log x)^2 }
\end{split}
\]
for $Q < x^{2/3-11\delta/3}$
which completes the proof, since $\delta>0$ is arbitrary.


\section{Non-attainable quantum limits}
\label{sec:non-atta-quant}
Given an integer $n$ such that $r(n)>0$, define
a probability measure $\mu_n$ on the unit circle by
$$
\mu_{n} := \frac{1}{r(n)}\sum_{\lambda \in \Z[i]: |\lambda|^{2}=n}
\delta_{\lambda/|\lambda|},
$$
i.e., $\mu_n$ is obtained by projection the set of $\Z^{2}$-lattice
points on a circle of radius $n^{1/2}$ to the unit circle and $\delta$ here denotes the Dirac delta function.  A measure
$\mu$ is said to be {\em attainable} if $\mu$ is a weak* limit of some
subsequence of measures $\mu_{n_{i}}$.
A partial classification of the set of attainable measures were given
in \cite{KW} in terms of their Fourier coefficients.  Namely, 
for $k \in \Z$, let $\widehat{\mu}(k) := \int z^{k} \, d\mu(z)$
denote the $k$-th Fourier coefficent of $\mu$.  By
\cite[Theorem~1.3]{KW}, the inequalities 
$$
2 \widehat{\mu}(4)^{2}-1 \le \widehat{\mu}(8) \le
\max( \widehat{\mu}(4)^{4}, (2|\widehat{\mu}(4)|-1)^{2})
$$
holds if $\mu$ is attainable.  In particular, for $\gamma>0$ small
and $\widehat{\mu}(4) = 1-\gamma$, we must have $\widehat{\mu}(8)  =
1-4\gamma + O(\gamma^{2})$.  

Now, by Theorem~\ref{thm:scar2}, there exists quantum limits that are
convex combinations $c \nu_{1} + (1-c) \nu_{2}$ for $c>0$ arbitrary
small, and where $\nu_1$ is the uniform measure (with
$(\widehat{\nu_{2}}(4), \widehat{\nu_2}(8)) = (0,0)$), 
and $\nu_{2}$ is a Cilleruello type measure, i.e., localized on the
four points $\pm 1, \pm i$, and with 
$(\widehat{\nu_{2}}(4), \widehat{\nu_2}(8)) =
(1,1)$.  Clearly 
such convex combinations cannot be attainable for $c$ small.

\bibliographystyle{abbrv} 
\bibliography{paper}

\end{document}